\theoremstyle{plain}
\newtheorem{theorem}{Theorem}
\newtheorem{lemma}{Lemma}
\newtheorem{prop}[theorem]{Proposition}
\newtheorem{coro}[theorem]{Corollary}
\theoremstyle{definition}
\newtheorem{defn}[theorem]{Definition}
\newtheorem*{notation}{Notation}
\theoremstyle{remark}
\newtheorem{remark}{Remark}
\DeclareMathOperator{\spanop}{span}
\newcommand{\thmref}[1]{Theorem~\ref{#1}}
\newcommand{\RR}{\mathbb{R}}
\newcommand{\CC}{\mathbb{C}}
\newcommand{\PP}{\mathbb{P}}
\newcommand{\EE}{\mathbb{E}} 
\DeclareMathOperator{\var}{Var}
\DeclareFontFamily{U}{mathx}{\hyphenchar\font45}%
   \DeclareFontShape{U}{mathx}{m}{n}{<->mathx10}{}%
   \DeclareSymbolFont{mathx}{U}{mathx}{m}{n}%
   \DeclareMathAccent{\widebar}{0}{mathx}{"73}%
  \newcommand{\widebar}[1]{\overline{#1}}%
\newcommand{\eps}{\epsilon}
\title{Quantum Fisher information matrix via its classical counterpart from random measurements}
\thanks{This work is supported in part by National Science Foundation via award DMS-2309378. Kecen Sha thanks the Rhodes Information Initiative and the Department of Mathematics at Duke University for their hospitality where this work was carried out. The authors are grateful to Sisi Liu for bringing references \cites{MasahitoHayashi_1998,PhysRevLett.120.030404} to our attention and for helpful discussions. }
\author{Jianfeng Lu}
\address{Departments of Mathematics, Physics, and Chemistry, Duke University, Durham, NC 27708 USA}
\email{jianfeng@math.duke.edu}
\author{Kecen Sha}
\address{School of Mathematical Sciences, Peking University, Beijing, 100871 China}
\email{kecen.sha@stu.pku.edu.cn}
\date{\today}
\begin{document}

\begin{abstract}
Preconditioning with the quantum Fisher information matrix (QFIM) is a popular approach in quantum variational algorithms. Yet the QFIM is costly to obtain directly, usually requiring more state preparation than its classical counterpart: the classical Fisher information matrix (CFIM). It is known that averaging the classical Fisher information matrix over Haar-random measurement bases yields $\mathbb{E}_{U\sim\mu_H}[F^U(\boldsymbol{\theta})] = \frac{1}{2}Q(\boldsymbol{\theta})$ for pure states in $\mathbb{C}^N$. In this paper, we review this identity by revealing its connection to covariant measurement in quantum metrology. Furthermore, we go beyond this and obtain the exact variance of CFIM ($O(N^{-1})$), estimate its moment, and establish non-asymptotic concentration bounds ($\exp(-\Theta(N)t^2)$), demonstrating that using few random measurement bases is sufficient to approximate the QFIM accurately in high-dimensional settings.  This work establishes a solid theoretical foundation for efficient quantum natural gradient methods via randomized measurements. 
\end{abstract}
\maketitle

\section{Introduction}

Variational algorithms for quantum states have a long history and have also received renewed attention in recent years due to their application in quantum computing. In such algorithms, parameterized ansatz are used in a variational principle so that parameters are determined via optimization. Examples include variational Monte Carlo \cites{foulkes2001quantum, sorella2005wave, toulouse2016introduction}, where the ground state wave function is parameterized, and hybrid classical-quantum algorithms \cite{cerezo2021variational}, where quantum circuits are typically parameterized. In these methods, parameters in the ansatz are updated in the variational procedure typically through gradient-based optimization algorithms, such as stochastic gradient descent. 

To improve the performance of such optimization algorithms, preconditioners are commonly used. In particular, mimicking the popular natural gradient algorithms \cites{amari1998natural, amari2016information}, which incorporate geometric information about the parameter space, the quantum natural gradient algorithm has been proposed in \cite{stokes2020quantum} and widely used. It also has similarity to the stochastic reconfiguration method in the context of variational Monte Carlo \cites{sorella1998green, sorella2007weak}.  

In the quantum natural gradient method, we use the quantum Fisher information matrix (see Definition~\ref{def:QFIM}, not to be confused with the quantum Fisher information; see e.g., \cite{rath2021quantum} where random measurement protocol is considered) as a preconditioner to incorporate geometric information of the parameterized quantum states. In practical implementations, it is demanding to obtain the quantum Fisher information matrix (QFIM), and thus in recent works \cite{kolotouros2024randomnatural}, it was proposed to approximate the QFIM by their classical analogs, the classical Fisher information matrix (CFIM) corresponding to the probability amplitudes obtained from the wave function. The CFIM depends on the basis used to measure the quantum state, and thus the approximation is basis dependent. 

It was conjectured \cite{kolotouros2025accelerating} that the classical Fisher information matrix averaged over random choice of measurement basis would give the quantum Fisher information matrix, however, previous work \cite{kolotouros2024randomnatural} only provided numerical evidence that this relation might hold. {Albeit stated in somewhat different terminologies, this relation is in fact known in the literature of quantum metrology: The average classical Fisher information matrix is equivalent to the CFIM of covariant measurement, where the latter is known to be exactly half the QFIM} \cites{MasahitoHayashi_1998,PhysRevLett.120.030404}. 
{This equivalence will be further elaborated in Section} \ref{sec:relation}. {We also provide a direct proof for self-containedness, which also yields slightly stronger statements. Moreover, we go beyond expectation results to further provide concentration results, and thus enable quantitative bounds of estimating QFIM using a finite number of measurement bases.}

The remainder of the paper is organized as follows. We will introduce the setup and state the main results in Section~\ref{sec:main}. We also provide some numerical experiments validating the concentration bounds. The other sections are devoted to the proof of the results. Section~\ref{sec:relation} {builds the connection between the expectation of CFIMs under random measurement basis and the CFIM of covariant measurement.} Section~\ref{sec:mean} gives an alternative proof which will be helpful to derive further results. We quantify the variance and establish moment estimation of the random CFIMs in Section~\ref{sec:variance} which quantifies the fluctuations. In Section~\ref{sec:concentration}, we establish concentration bounds of CFIM around its mean (half the QFIM). We summarize by some remarks and future directions in Section~\ref{sec:conclusion}.

\section{Setup and main results}\label{sec:main}

In this section, we will first recall the definitions, and then state our results for the classical Fisher information matrix under randomly sampled measurement basis, including the expectation, the variance, and the concentration bounds. Additionally, we will examine the tightness of our concentration bounds via numerical experiments.

Let us commence with the formal definition of the information matrices to be analyzed, note that the definitions might be different up to a constant in different papers. 
Throughout, we will consider $\psi_{\bm\theta}$ as a family of pure quantum states in $\CC^N$ that is parameterized by $\bm\theta\in\RR^m$, \textit{i.e.}, for all $\bm\theta$, $\psi_{\bm\theta}$ is normalized. 
We will assume $N \geq 2$ and that $N$ is finite.

\begin{defn}
\label{def:QFIM}
The Quantum Geometry Tensor (QGT) at $\bm\theta$, denoted as $\mathcal{Q}(\bm\theta)\in\mathbb{C}^{m\times m}$, is defined as
    \begin{equation}
        \label{eq:QGT}
        \mathcal{Q}_{ij}(\bm\theta)=\left\langle\frac{\partial\psi_{\bm\theta}}{\partial\theta_i},\frac{\partial\psi_{\bm\theta}}{\partial\theta_j}\right\rangle_{\mathbb{C}}-\left\langle\frac{\partial\psi_{\bm\theta}}{\partial\theta_i},\psi_{\bm\theta}\right\rangle_{\mathbb{C}}\left\langle\psi_{\bm\theta},\frac{\partial\psi_{\bm\theta}}{\partial\theta_j}\right\rangle_{\mathbb{C}}.
    \end{equation}
    The quantum Fisher information matrix (QFIM) at $\bm\theta$, denoted as $Q(\bm\theta)=\operatorname{Re}(\mathcal{Q}(\bm\theta))$, is the real part of the QGT. Geometrically, the QFIM corresponds to the pullback of the Fubini-Study metric from the projective Hilbert space to the parameter manifold. Here and in the sequel, $\langle \cdot, \cdot \rangle_{\mathbb{C}}$ denotes the standard complex inner product on $\mathbb{C}^N$.
\end{defn}
\begin{defn}
    Given a measurement basis $U=[\bm{u}_1,\cdots,\bm{u}_N]\in \mathrm U(N)$, denote $\bm{p}^U(\bm\theta)\in\RR^N$ as the probability distribution on $N$ elements: $[\bm p^U(\bm\theta)]_i=|[U^*\psi_{\bm\theta}]_i|^2$. The classical Fisher information matrix (CFIM) under basis $U$, denoted as $F^U(\bm\theta)$, is defined as:
    \begin{equation}
    \label{CFIM:def}
    F^U_{ij}(\bm\theta)=\frac{1}{4}\EE_{\bm{p}^U_{\bm\theta}}\left[\left(\frac{\partial \log{\bm{p}^U_{\bm\theta}}}{\partial\theta_i}\right)^\top\frac{\partial \log{\bm{p}^U_{\bm\theta}}}{\partial\theta_j}\right]=\left\langle\frac{\partial \sqrt{\bm{p}^U_{\bm\theta}}}{\partial\theta_i},\frac{\partial \sqrt{\bm{p}^U_{\bm\theta}}}{\partial\theta_j}\right\rangle,
\end{equation}
where the operations $\sqrt{\bm{p}^U_{\bm\theta}}$ and $\log{\bm{p}^U_{\bm\theta}}$ are applied element-wise to the vector $\bm{p}^U_{\bm\theta}$. 
\end{defn}

The following result states the connection between classical and quantum Fisher information matrices. 
\begin{theorem}
\label{1}
     If the measurement basis $U$ is drawn from the Haar distribution $\mu_H$ on $\mathrm U(N)$, then the average CFIM satisfies
    \begin{equation}
        \EE_{U\sim\mu_H}[F^U(\bm\theta)]=\frac{1}{2}\operatorname{Re}(\mathcal{Q}(\bm\theta))=\frac{1}{2}Q(\bm\theta).
    \end{equation}
\end{theorem}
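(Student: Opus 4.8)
The plan is to fix $\bm\theta$ throughout and write $\psi:=\psi_{\bm\theta}$, $\partial_i:=\partial_{\theta_i}$, $v_i:=\partial_i\psi_{\bm\theta}$. First I would record the consequence of normalization: differentiating $\langle\psi,\psi\rangle_\CC=1$ gives $\langle v_i,\psi\rangle_\CC+\langle\psi,v_i\rangle_\CC=0$, so $\langle\psi,v_i\rangle_\CC$ is purely imaginary. Using the orthogonal splitting $v_i=\langle\psi,v_i\rangle_\CC\,\psi+w_i$ with $w_i\perp\psi$ and expanding $\langle v_i,v_j\rangle_\CC$ with the help of $\langle w_i,\psi\rangle_\CC=0$, the $\psi$-component term exactly cancels the second term of \eqref{eq:QGT}, leaving $\mathcal{Q}_{ij}(\bm\theta)=\langle w_i,w_j\rangle_\CC$. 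So the theorem is equivalent to proving $\EE_{U\sim\mu_H}[F^U_{ij}(\bm\theta)]=\tfrac12\Re\langle w_i,w_j\rangle_\CC$.

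Next I would unpack the CFIM. With $p_k:=[\bm p^U_{\bm\theta}]_k=|\langle u_k,\psi\rangle_\CC|^2$ and $u_k$ independent of $\bm\theta$, one has $\partial_i p_k=2\Re(\langle u_k,v_i\rangle_\CC\langle\psi,u_k\rangle_\CC)$, and on the set of full $\mu_H$-measure where all $p_k>0$ the second expression in \eqref{CFIM:def} becomes $F^U_{ij}=\tfrac14\sum_{k=1}^N p_k^{-1}(\partial_i p_k)(\partial_j p_k)$. The columns $u_1,\dots,u_N$ of a Haar-random unitary are exchangeable and each is uniform on the unit sphere of $\CC^N$, so linearity of expectation reduces the theorem to proving, for $u$ uniform on the sphere,
\[
\EE_u\!\left[\frac{\Re\!\big(\langle u,v_i\rangle_\CC\langle\psi,u\rangle_\CC\big)\Re\!\big(\langle u,v_j\rangle_\CC\langle\psi,u\rangle_\CC\big)}{|\langle u,\psi\rangle_\CC|^2}\right]=\frac{1}{2N}\Re\langle w_i,w_j\rangle_\CC.
\]

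To evaluate this I would pass to an orthonormal basis of $\CC^N$ whose first vector is $\psi$, so that $\langle u,\psi\rangle_\CC=\bar u_1$ and, since $(w_i)_1=0$, $\langle u,v_i\rangle_\CC=\bar u_1\langle\psi,v_i\rangle_\CC+\langle u',w_i'\rangle_\CC$, where $u':=(u_2,\dots,u_N)$ and $w_i':=((w_i)_2,\dots,(w_i)_N)$. Writing $u_1=re^{\I\phi}$ and using that $\langle\psi,v_i\rangle_\CC$ is purely imaginary gives $\Re(\langle u,v_i\rangle_\CC\langle\psi,u\rangle_\CC)=r\,\Re(e^{\I\phi}\langle u',w_i'\rangle_\CC)$; the crucial point is that this factor $r$ is exactly cancelled by $|\langle u,\psi\rangle_\CC|^2=r^2$ in the denominator, so the integrand is bounded and depends only on $(\phi,u')$. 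By unitary invariance of the sphere measure, $\phi$ is uniform on $[0,2\pi)$ and independent of $u'$, and averaging over $\phi$ turns the product of real parts into $\tfrac12\Re\!\big(\langle w_i',u'\rangle_\CC\langle u',w_j'\rangle_\CC\big)$. A further use of unitary invariance gives $\EE[u'_\ell\overline{u'_m}]=\tfrac1N\delta_{\ell m}$, the constant being fixed by $\sum_\ell\EE|u'_\ell|^2=1-\EE|u_1|^2=1-\tfrac1N$; hence $\EE_{u'}\big[\langle w_i',u'\rangle_\CC\langle u',w_j'\rangle_\CC\big]=\tfrac1N\langle w_i',w_j'\rangle_\CC=\tfrac1N\langle w_i,w_j\rangle_\CC$. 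Assembling these computations yields the displayed identity, and multiplying by $N$ finishes the proof.

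I expect the singular factor $p_k^{-1}$ in $F^U$ to be the main obstacle: expanding and integrating term by term collides with it directly. The resolution, which is the heart of the argument, is the observation that in the polar representation $u_1=re^{\I\phi}$ the derivative $\partial_i p_k$ carries exactly one power of $r$, so the ratio $(\partial_i p_k)(\partial_j p_k)/p_k$ is bounded and independent of $r$, after which everything reduces to the elementary first and second moments of a uniform complex sphere vector. A minor additional point is to verify the reduction $\mathcal{Q}_{ij}=\langle w_i,w_j\rangle_\CC$ and to note that the exceptional set $\{U:p_k=0\text{ for some }k\}$ is $\mu_H$-null and can be ignored.
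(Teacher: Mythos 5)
Your argument is correct, and it reaches the result by a genuinely different route than the paper. You stay entirely in $\CC^N$: after reducing $\mathcal{Q}_{ij}$ to $\langle w_i,w_j\rangle_\CC$ via the tangential decomposition $\partial_i\psi=\langle\psi,\partial_i\psi\rangle_\CC\,\psi+w_i$, you exploit that each column of a Haar unitary is marginally uniform on the complex sphere, so linearity of expectation collapses the problem to a single-column moment computation; the key observation that $\partial_i p_k$ carries exactly one power of $|u_1|$, so the singular factor $p_k^{-1}$ cancels and the phase of $u_1$ (uniform and independent of the remaining coordinates) averages the product of real parts to $\tfrac12\Re\langle w_i',u'\rangle_\CC\langle u',w_j'\rangle_\CC$, together with $\EE[u'_\ell\overline{u'_m}]=\tfrac1N\delta_{\ell m}$, gives exactly $\tfrac{1}{2N}\Re\langle w_i,w_j\rangle_\CC$ per column and hence $\tfrac12 Q_{ij}$ in total. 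The paper instead passes to the real representation $\Phi:\CC^N\to\RR^{2N}$, rewrites both QFIM and CFIM as quadratic forms in orthogonal projections (Lemmas \ref{lem:QFIM}, \ref{lem:CFIM}), and computes the conditional expectation of the sum of rank-one projections given $U^*\psi_{\bm\theta}=\bm r$ (Lemma \ref{thm: conditional expectation of Pk}). Your proof is shorter and more elementary for the expectation alone, needing only first and second moments of a uniform sphere vector; what the paper's heavier setup buys is reuse: the projection representation in $\RR^{2N}$ underlies the variance computation (Theorem \ref{2}) and the Lipschitz/concentration analysis (Theorems \ref{3}--\ref{5}), and the conditional form of Lemma \ref{thm: conditional expectation of Pk} yields the stronger statement $\EE[F^U(\bm\theta)\mid U^*\psi_{\bm\theta}=\bm r]=\tfrac12 Q(\bm\theta)$ noted in the remark, which your unconditional averaging over a single column does not immediately give (though it could be adapted, since conditioning on $U^*\psi_{\bm\theta}=\bm r$ fixes $|u_1|$ but leaves the phase and the remaining coordinates suitably symmetric). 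One small presentational point: it is worth stating explicitly that the per-column integrand is bounded by $\|w_i\|\,\|w_j\|$, so integrability and the interchange of sum and expectation are immediate; you essentially say this, and no gap remains.
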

\thmref{1} implies that by measuring the quantum state under random bases, one can approximate the QFIM by the average CFIM. Thus, the geometry of quantum states can be characterized by sampled classical Fisher matrices. 

The following theorem gives the variance of the random matrix, which can be used to quantify the approximation, say by the central limit theorem.
\begin{theorem}
    \label{2}
     If the measurement basis $U$ is drawn from the Haar distribution $\mu_H$ on $\mathrm U(N)$, then the variance of the random CFIM satisfies 
     \begin{equation}
         \var_{U\sim\mu_H}[F^U(\bm\theta)]=\frac{1}{8N}\left(\operatorname{diag}(\mathcal{Q}(\bm\theta))\operatorname{diag}(\mathcal{Q}(\bm\theta))^\top+\mathcal{Q}(\bm\theta)\odot\mathcal{Q}(\bm\theta)^\top\right),
     \end{equation}
     where the variance $\var_{U\sim\mu_H}[F^U(\bm\theta)]$ is computed  element-wise for the random matrix $F^U(\bm\theta)$, $\operatorname{diag}(\mathcal{Q}(\bm\theta))$ denotes the column vector of the main diagonal entries of $\mathcal{Q}(\bm\theta)$ and  $\odot$ denotes the Hadamard product (i.e., entrywise product) of two matrices.
\end{theorem}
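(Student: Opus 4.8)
The plan is to write $F^U(\bm\theta)$ as a sum over the columns of $U$, use the exchangeability of those columns under $\mu_H$ to split the second moment into a diagonal and an off-diagonal contribution, and evaluate each by reducing to an integral over a sphere together with Gaussian (Wick) moments. Throughout I suppress the fixed argument $\bm\theta$, write $\partial_i$ for $\partial/\partial\theta_i$, and set $D_i\psi:=\partial_i\psi-\langle\psi,\partial_i\psi\rangle_{\CC}\psi$, so that $\langle D_i\psi,D_j\psi\rangle_{\CC}=\mathcal{Q}_{ij}$ and, by $\|\psi\|\equiv1$, the number $\langle\psi,\partial_i\psi\rangle_{\CC}$ is purely imaginary. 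With $U=[\bm u_1,\dots,\bm u_N]$ and $\partial_i\sqrt{[\bm p^U]_k}=\operatorname{Re}(\langle\bm u_k,\partial_i\psi\rangle_{\CC}\langle\psi,\bm u_k\rangle_{\CC})/\sqrt{[\bm p^U]_k}$ one obtains, as in the evaluation of the mean for \thmref{1}, the pointwise identity
\[
F^U_{ij}=\sum_{k=1}^N g_{ij}(\bm u_k),\qquad g_{ij}(\bm u):=\frac{\operatorname{Re}(\langle\bm u,\partial_i\psi\rangle_{\CC}\langle\psi,\bm u\rangle_{\CC})\,\operatorname{Re}(\langle\bm u,\partial_j\psi\rangle_{\CC}\langle\psi,\bm u\rangle_{\CC})}{|\langle\psi,\bm u\rangle_{\CC}|^{2}},
\]
which is real. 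Permutation invariance of $(\bm u_1,\dots,\bm u_N)$ gives $\EE[(F^U_{ij})^2]=N\,\EE[g_{ij}(\bm u_1)^2]+N(N-1)\,\EE[g_{ij}(\bm u_1)g_{ij}(\bm u_2)]$, and then $\var_U[F^U_{ij}]=\EE[(F^U_{ij})^2]-(\tfrac12 Q_{ij})^2$ by \thmref{1}.

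The key tool is a decomposition of a uniform sphere vector. If $\bm u$ is uniform on the unit sphere of $\CC^{d}$ and $\bm c\in\CC^{d}\setminus\{0\}$ is fixed, set $\hat{\bm c}=\bm c/\|\bm c\|$ and $t=1-|\langle\hat{\bm c},\bm u\rangle_{\CC}|^{2}$; then $\bm u=e^{\I\phi}(\sqrt{1-t}\,\hat{\bm c}+\sqrt{t}\,\bm v)$ with $e^{\I\phi}=\langle\bm c,\bm u\rangle_{\CC}/|\langle\bm c,\bm u\rangle_{\CC}|$ and $\bm v$ a unit vector orthogonal to $\bm c$, and crucially $(t,\bm v)$ is independent with $t\sim\mathrm{Beta}(d-1,1)$ and $\bm v$ uniform on the unit sphere of $\bm c^{\perp}$. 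Substituting this into the analogue of $g_{ij}$ built from arbitrary fixed vectors $\bm a_i,\bm a_j$ in place of $\partial_i\psi,\partial_j\psi$, the denominator $|\langle\bm c,\bm u\rangle_{\CC}|^{2}$ cancels and one is left with a polynomial in $\sqrt{t},\sqrt{1-t}$ and $\operatorname{Re}\langle\bm v,\cdot\rangle_{\CC},\operatorname{Im}\langle\bm v,\cdot\rangle_{\CC}$; representing $\bm v=\bm g/\|\bm g\|$ for a standard complex Gaussian $\bm g$ on $\bm c^{\perp}$, the angular expectations follow from Wick's theorem, combined with the Beta moments $\EE[t]=\tfrac{d-1}{d}$, $\EE[t^2]=\tfrac{d-1}{d+1}$. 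In particular this yields the first-moment identity $\EE[g_{ij}(\bm u)]=\tfrac{1}{2d}\operatorname{Re}(\langle\bm a_i,\bm a_j\rangle_{\CC}+\|\bm c\|^{-2}\langle\bm c,\bm a_i\rangle_{\CC}\langle\bm c,\bm a_j\rangle_{\CC})$, as well as the corresponding closed form for $\EE[g_{ij}(\bm u)^{2}]$.

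For the diagonal contribution apply the decomposition with $d=N$, $\bm c=\psi$, $\bm a_i=\partial_i\psi$; since $\langle\psi,\partial_i\psi\rangle_{\CC}$ is purely imaginary this collapses to $g_{ij}(\bm u_1)=t_1\operatorname{Re}\langle\bm v_1,D_i\psi\rangle_{\CC}\operatorname{Re}\langle\bm v_1,D_j\psi\rangle_{\CC}$, and with $\EE[t_1^{2}]=\tfrac{N-1}{N+1}$ and the Wick identity $\EE_{\bm v}[(\operatorname{Re}\langle\bm v,D_i\psi\rangle_{\CC})^{2}(\operatorname{Re}\langle\bm v,D_j\psi\rangle_{\CC})^{2}]=\tfrac{1}{N(N-1)}(\tfrac14\mathcal{Q}_{ii}\mathcal{Q}_{jj}+\tfrac12(\operatorname{Re}\mathcal{Q}_{ij})^{2})$ one gets $N\EE[g_{ij}(\bm u_1)^{2}]=\tfrac{1}{N+1}(\tfrac14\mathcal{Q}_{ii}\mathcal{Q}_{jj}+\tfrac12(\operatorname{Re}\mathcal{Q}_{ij})^{2})$ (and the first-moment version reproduces $\EE[g_{ij}(\bm u_1)]=\tfrac{1}{2N}\operatorname{Re}\mathcal{Q}_{ij}$, consistent with \thmref{1}). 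For the off-diagonal contribution I condition on $\bm u_1$: given $\bm u_1$, the column $\bm u_2$ is uniform on the unit sphere of $\bm u_1^{\perp}\cong\CC^{N-1}$, and since $\langle\bm u_2,x\rangle_{\CC}=\langle\bm u_2,P_1x\rangle_{\CC}$ for $P_1=I-\bm u_1\bm u_1^{*}$, the inner expectation is the first-moment identity above with $d=N-1$, $\bm c=P_1\psi$, $\bm a_i=P_1\partial_i\psi$:
\[
h(\bm u_1):=\EE[g_{ij}(\bm u_2)\mid\bm u_1]=\tfrac{1}{2(N-1)}\operatorname{Re}\Big(\langle P_1\partial_i\psi,P_1\partial_j\psi\rangle_{\CC}+\frac{\langle P_1\psi,P_1\partial_i\psi\rangle_{\CC}\langle P_1\psi,P_1\partial_j\psi\rangle_{\CC}}{\|P_1\psi\|^{2}}\Big).
\]
The decisive observation is that $\|P_1\psi\|^{2}=1-|\langle\psi,\bm u_1\rangle_{\CC}|^{2}=t_1$ is exactly the factor already present in $g_{ij}(\bm u_1)$, so $g_{ij}(\bm u_1)h(\bm u_1)$ is free of denominators; expanding $P_1\psi$ and $P_1\partial_i\psi$ in terms of $t_1,\bm v_1,\langle\psi,\partial_i\psi\rangle_{\CC}$ and $D_i\psi$ and using once more that $\langle\psi,\partial_i\psi\rangle_{\CC}$ is imaginary, one finds
\[
h(\bm u_1)=\tfrac{1}{2(N-1)}\operatorname{Re}\big(\mathcal{Q}_{ij}-t_1\,\overline{\langle\bm v_1,D_i\psi\rangle_{\CC}}\,\langle\bm v_1,D_j\psi\rangle_{\CC}+(1-t_1)\langle\bm v_1,D_i\psi\rangle_{\CC}\langle\bm v_1,D_j\psi\rangle_{\CC}\big),
\]
and $\EE[g_{ij}(\bm u_1)g_{ij}(\bm u_2)]=\EE[g_{ij}(\bm u_1)h(\bm u_1)]$ follows from the $(t_1,\bm v_1)$ decomposition once more, the Beta moments, and the Wick identities above together with $\EE_{\bm v}[\operatorname{Re}\langle\bm v,D_i\psi\rangle_{\CC}\operatorname{Re}\langle\bm v,D_j\psi\rangle_{\CC}\operatorname{Im}\langle\bm v,D_i\psi\rangle_{\CC}\operatorname{Im}\langle\bm v,D_j\psi\rangle_{\CC}]=\tfrac{1}{4N(N-1)}((\operatorname{Re}\mathcal{Q}_{ij})^{2}-(\operatorname{Im}\mathcal{Q}_{ij})^{2})$.

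Finally, inserting everything into $\var_U[F^U_{ij}]=N\EE[g_{ij}(\bm u_1)^{2}]+N(N-1)\EE[g_{ij}(\bm u_1)h(\bm u_1)]-\tfrac14(\operatorname{Re}\mathcal{Q}_{ij})^{2}$, the $N$-dependent coefficients telescope and leave $\tfrac{1}{8N}(\mathcal{Q}_{ii}\mathcal{Q}_{jj}+(\operatorname{Re}\mathcal{Q}_{ij})^{2}+(\operatorname{Im}\mathcal{Q}_{ij})^{2})=\tfrac{1}{8N}(\mathcal{Q}_{ii}\mathcal{Q}_{jj}+|\mathcal{Q}_{ij}|^{2})$; since $\mathcal{Q}$ is Hermitian, $\mathcal{Q}_{ij}\mathcal{Q}_{ji}=|\mathcal{Q}_{ij}|^{2}$ and $\mathcal{Q}_{ii}\in\RR$, so this is the $(i,j)$ entry of $\tfrac{1}{8N}(\operatorname{diag}(\mathcal{Q})\operatorname{diag}(\mathcal{Q})^{\top}+\mathcal{Q}\odot\mathcal{Q}^{\top})$, which is the claim. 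I expect the off-diagonal term $\EE[g_{ij}(\bm u_1)g_{ij}(\bm u_2)]$ to be the main obstacle: the two columns of a Haar unitary are correlated, and the denominators $|\langle\psi,\bm u_k\rangle_{\CC}|^{2}$ make a direct Weingarten-calculus attack awkward. What makes it tractable is precisely the cancellation above --- the denominator produced by the conditional sphere integral over $\bm u_2$ is the factor $1-|\langle\psi,\bm u_1\rangle_{\CC}|^{2}$ already sitting inside $g_{ij}(\bm u_1)$ --- combined with the systematic use of the fact that $\langle\psi,\partial_i\psi\rangle_{\CC}$ is imaginary, which forces the surviving terms to assemble into $\mathcal{Q}$; the rest is routine bookkeeping with Beta moments and real-Gaussian Wick contractions.
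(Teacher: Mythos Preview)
Your approach is correct and genuinely different from the paper's. The paper never decomposes $F^U_{ij}$ column-by-column; instead it works in the real representation $\Phi:\CC^N\to\RR^{2N}$, writes $F^U(\bm\theta)=A(\bm\theta)^\top(I-P_{S^U}(\bm z_{\bm\theta}))A(\bm\theta)$ with $Q(\bm\theta)=A(\bm\theta)^\top A(\bm\theta)$, and reduces the variance to computing the matrix $X(\bm u,\bm v)=\EE_U[(\tfrac12 I-P_{S^U}(\bm u))\bm v\bm v^\top(\tfrac12 I-P_{S^U}(\bm u))]$. By rotation invariance this is determined by $X(\bm e_1,\bm e_2)$, and the paper pins that down almost without integration: Schur's lemma forces $X(\bm e_1,\bm e_2)$ to be diagonal of a specific shape, a trace identity gives one linear relation among the two unknown constants, and a random-phase averaging argument gives the other. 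The result $X(\bm u,\bm v)=\tfrac{1}{8N}(I+P(\bm v,J\bm v)-P(\bm u,J\bm u))$ then yields the variance in one line.

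Your route --- column exchangeability, the polar sphere decomposition $\bm u=e^{\I\phi}(\sqrt{1-t}\,\psi+\sqrt{t}\,\bm v)$, and real Wick contractions on $\psi^\perp$ --- is more computational but entirely self-contained: it needs neither the $\RR^{2N}$ embedding nor the projection formalism built up for the mean. The observation that $\|P_1\psi\|^2=t_1$ exactly cancels the denominator inherited from $g_{ij}(\bm u_1)$ is what makes the cross term $\EE[g_{ij}(\bm u_1)g_{ij}(\bm u_2)]$ tractable; I verified your formula for $h(\bm u_1)$ and that the Beta and fourth-moment pieces assemble to $\tfrac{1}{8N}(\mathcal Q_{ii}\mathcal Q_{jj}+Q_{ij}^2+\tilde Q_{ij}^2)$ as claimed. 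The paper's symmetry argument is slicker and reuses machinery already in place, but your approach has the advantage of using only standard probabilistic tools and making the dependence on the columns of $U$ explicit.
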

\thmref{2} shows that each entry of the random CFIM has variance of order  $O(N^{-1})$. Since $N=2^n$, where $n$ is the number of qubits, the approximation accuracy improves exponentially with $n$. The variance depends not only on the QFIM but also the imaginary part of the QGT, which reflects geometric phase information of a quantum system.

   The next theorem establishes a rigorous characterization of the concentration behavior of each entry of the random CFIM by providing a two-sided estimation of its moment.

\begin{theorem}[Moment estimation]
\label{thm:6}
Assume that $Q_{ii}(\bm\theta)>0$ for all $1\leq i\leq m$, denote random variables $X_{ij}=\frac{F^U_{ij}(\bm\theta)-\frac{1}{2}Q_{ij}(\bm\theta)}{\sqrt{Q_{ii}(\bm\theta)Q_{jj}(\bm\theta)}}$. Then there exist absolute constants $c_1,c_2$, such that for all $1\leq i,j\leq m,k\in \mathbb N^*,N\geq k$, we have
    \begin{equation}
    \label{eq:momentsk}
        \EE[X_{ij}^{2k+1}]=0,\quad \frac{c_1\sqrt k}{\sqrt N}\leq \bigl(\EE[X_{ij}^{2k}]\bigr)^{\frac{1}{2k}}\leq \frac{c_2\sqrt k}{\sqrt N}.
    \end{equation}
Moreover, there exist absolute constants $c_3,c_4>0$ such that for all $1\leq i,j\leq m$,
\begin{equation}
\label{eq:upper-expmom}
    \EE[\exp({\lambda X_{ij}})]\leq \exp (c_3\lambda^2/N),\quad \forall\lambda \in\mathbb R, 
\end{equation}
and
\begin{equation}
\label{eq:lower-expmom}
    \EE[\exp({\lambda X_{ij}})]\geq \exp (c_4\lambda^2/N),\quad \forall|\lambda|\leq  N. 
\end{equation}
\end{theorem}
\begin{remark}
    If $Q_{ii}(\bm\theta)$ or $Q_{jj}(\bm\theta)=0$, since $F^U(\bm\theta)\preceq Q(\bm\theta)$, we must have $F^U_{ij}(\bm\theta)=0$. This is a degenerate case, and the structure is trivial. For convenience, in the rest of the paper, we always assume that $Q_{ii}(\bm\theta)>0$. The degenerate case does not influence our results. 
\end{remark}
 The exponential moment upper bound in \thmref{thm:6} proves that $F^U_{ij}(\bm\theta)$ is a sub-Gaussian random variable with a proxy variance of $O(1/N)$. This ensures that the probability of $F^U_{ij}(\bm\theta)$ deviating from its mean decays at least as fast as $\exp(-cNt^2)$.

 In addition, the lower bound is crucial as it demonstrates that the sub-Gaussian rate is asymptotically tight as $N\to+\infty $. By providing a matching lower bound for the exponential moment within the local regime $|\lambda| \le N$, \thmref{thm:6} confirms that the $\exp(-cNt^2)$ tail bound is the intrinsic physical limit of the CFIM's variability.

As a supplement, we give concentration bounds for the random CFIM in terms of standard matrix norms.  Moreover, eigenvalues of the error matrix $F^U(\bm\theta)-\EE[F^U(\bm\theta)]$ can be uniformly controlled with high probability.

\begin{theorem}[Maximum norm]
\label{3}
    If the measurement basis $U$ is drawn from the Haar distribution $\mu_H$ on $\mathrm U(N)$, then for every $t>0$, we have
    \begin{equation}
        \PP\left(\frac{\|F^U(\bm\theta)-\EE[F^U(\bm\theta)]\|_{\max}}{\|\EE[F^U(\bm\theta)]\|_{\max}}\geq t\right)\leq 2m^2\exp\left(-\frac{(N-1)t^2}{120}\right),
    \end{equation}
    where $\|A\|_{\max} \triangleq\max\limits_{1\leq i,j\leq m}|A_{ij}|$ is the maximum entrywise norm of the matrix $A$.
\end{theorem}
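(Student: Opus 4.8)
The plan is to establish the bound entry by entry and then take a union bound over the $m^{2}$ entries. Thus it suffices to show that for each fixed pair $(i,j)$ and every $s>0$
\[
\PP\bigl(\abs{F^{U}_{ij}(\bm\theta)-\tfrac12 Q_{ij}(\bm\theta)}\ge s\bigr)\le 2\exp\Bigl(-\tfrac{(N-1)\,s^{2}}{C\,\mathcal Q_{ii}(\bm\theta)\mathcal Q_{jj}(\bm\theta)}\Bigr)
\]
for an absolute constant $C$ (one may take $C=30$), the centering $\tfrac12 Q_{ij}(\bm\theta)=\EE[F^{U}_{ij}(\bm\theta)]$ being furnished by \thmref{1}. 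Granting this, one concludes using the elementary identity $\norm{\EE[F^{U}(\bm\theta)]}_{\max}=\tfrac12\norm{Q(\bm\theta)}_{\max}=\tfrac12\max_{l}Q_{ll}(\bm\theta)$, valid because $Q(\bm\theta)\succeq 0$ puts the largest entry of a positive semidefinite matrix on its diagonal; since $\mathcal Q_{ii}\mathcal Q_{jj}=Q_{ii}Q_{jj}\le\norm{Q}_{\max}^{2}=4\norm{\EE[F^{U}]}_{\max}^{2}$, the substitution $s=t\norm{\EE[F^{U}]}_{\max}$ turns the displayed bound into $2\exp(-(N-1)t^{2}/(4C))$, and the union bound over the $m^{2}$ pairs yields the theorem with $120=4\cdot 30$.

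To obtain the entrywise bound, the first step is to rewrite the centered entry. Put $a_{k}=\langle\bm u_{k},\psi_{\bm\theta}\rangle_{\CC}$, $\wh{a}_{k}=a_{k}/\abs{a_{k}}$, let $\phi_{i}=\partial_{i}\psi_{\bm\theta}-\langle\psi_{\bm\theta},\partial_{i}\psi_{\bm\theta}\rangle_{\CC}\,\psi_{\bm\theta}$ be the component of $\partial_{i}\psi_{\bm\theta}$ orthogonal to $\psi_{\bm\theta}$ (so that $\langle\phi_{i},\phi_{j}\rangle_{\CC}=\mathcal Q_{ij}(\bm\theta)$ and $\norm{\phi_{i}}^{2}=\mathcal Q_{ii}(\bm\theta)$), and set $d^{i}_{k}=\langle\bm u_{k},\phi_{i}\rangle_{\CC}$. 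A short computation gives $\partial_{i}\sqrt{[\bm p^{U}_{\bm\theta}]_{k}}=\operatorname{Re}(\overline{\wh{a}_{k}}\,d^{i}_{k})$, the $\psi_{\bm\theta}$-parallel part of $\partial_{i}\psi_{\bm\theta}$ dropping out because $\langle\psi_{\bm\theta},\partial_{i}\psi_{\bm\theta}\rangle_{\CC}$ is purely imaginary. Using $\operatorname{Re}(z)\operatorname{Re}(w)=\tfrac12\operatorname{Re}(zw)+\tfrac12\operatorname{Re}(z\bar w)$ and $\sum_{k}d^{i}_{k}\overline{d^{j}_{k}}=\langle\phi_{j},\phi_{i}\rangle_{\CC}$ (because $\sum_{k}\bm u_{k}\bm u_{k}^{*}=I$) we arrive at the exact identity
\[
F^{U}_{ij}(\bm\theta)=\tfrac12 Q_{ij}(\bm\theta)+\tfrac12\sum_{k=1}^{N}\operatorname{Re}\bigl(\overline{\wh{a}_{k}}^{\,2}\,d^{i}_{k}\,d^{j}_{k}\bigr).
\]
The obstacle is that the sum on the right is \emph{not} a Lipschitz function of $U\in\mathrm U(N)$: the phases $\wh{a}_{k}=a_{k}/\abs{a_{k}}$ contribute gradients of size $\abs{a_{k}}^{-1}$, and the event that some $\abs{a_{k}}$ is small is too likely for a truncation to help, so concentration of measure on $\mathrm U(N)$ cannot be invoked directly.

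The way around this is to condition on $a:=U^{*}\psi_{\bm\theta}$, which under the Haar law is uniform on the unit sphere of $\CC^{N}$. Once $a$ is fixed every $\wh{a}_{k}$ is a fixed scalar of modulus one, and the only remaining randomness is the isometry $U^{*}\colon\psi_{\bm\theta}^{\perp}\to a^{\perp}$; by the standard fibration of $\mathrm U(N)$ over $S^{2N-1}$ (the fiber being the stabilizer of $a$) this residual randomness is Haar-distributed and may be identified with a Haar $V\in\mathrm U(N-1)$. Under this identification $G(V):=\tfrac12\sum_{k}\operatorname{Re}\bigl(\overline{\wh{a}_{k}}^{\,2}d^{i}_{k}(V)d^{j}_{k}(V)\bigr)$ is a degree-two polynomial in the entries of $V$, hence Lipschitz on the compact group $\mathrm U(N-1)$, and two facts make the argument close. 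First, $\EE_{V}[G(V)\mid a]=0$ for every $a$: the central circle $\{e^{\I\chi}I\}\subset\mathrm U(N-1)$ sends $d^{i}_{k}\mapsto e^{\I\chi}d^{i}_{k}$, hence $d^{i}_{k}d^{j}_{k}\mapsto e^{2\I\chi}d^{i}_{k}d^{j}_{k}$, and Haar invariance forces $\EE_{V}[d^{i}_{k}d^{j}_{k}\mid a]=0$ (consistently with, and in fact refining, \thmref{1}), so the conditional fluctuation is genuinely centered. Second, the Hilbert--Schmidt Lipschitz constant of $G$ on $\mathrm U(N-1)$ is at most $\sqrt{\mathcal Q_{ii}\mathcal Q_{jj}}$ \emph{uniformly in} $a$: writing $d^{i}_{k}=g_{k}^{*}V\tilde\phi_{i}$, where $\tilde\phi_{i}\in\CC^{N-1}$ is the coordinate vector of $\phi_{i}$ ($\norm{\tilde\phi_{i}}=\sqrt{\mathcal Q_{ii}}$) and $g_{k}=F^{*}e_{k}$ for a linear isometry $F$ of $\CC^{N-1}$ onto $a^{\perp}$, the frame $\{g_{k}\}$ satisfies $\sum_{k}g_{k}g_{k}^{*}=I_{N-1}$, so its Gram matrix $(g_{k}^{*}g_{l})_{k,l}$ equals the rank-$(N-1)$ orthogonal projection $\Pi_{a^{\perp}}=I_{N}-aa^{*}$, an operator of norm one; the gradient of $G$ is then a sum over $k$ of rank-one matrices formed from $g_{k}$ and $d^{j}_{k}\tilde\phi_{i}+d^{i}_{k}\tilde\phi_{j}$, and one bounds its Hilbert--Schmidt norm by $\norm{\Pi_{a^{\perp}}}_{\mathrm{op}}^{1/2}\bigl(\sum_{k}\norm{d^{j}_{k}\tilde\phi_{i}+d^{i}_{k}\tilde\phi_{j}}^{2}\bigr)^{1/2}$ and then uses $\norm{\Pi_{a^{\perp}}}_{\mathrm{op}}=1$ and $\sum_{k}\abs{d^{i}_{k}}^{2}=\norm{\phi_{i}}^{2}=\mathcal Q_{ii}$. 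The contraction bound $\norm{\Pi_{a^{\perp}}}_{\mathrm{op}}=1$ is precisely what cancels the factor of $N$ that a term-by-term triangle inequality would produce; this interplay between the conditioning and the frame estimate is the technical heart of the argument and the step I expect to be the main obstacle.

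With these two facts, the concentration-of-measure inequality on $\mathrm U(N-1)$ (an $L$-Lipschitz function being sub-Gaussian with variance proxy of order $L^{2}/(N-1)$, in line with the exact variance in \thmref{2}) gives, for every $s>0$,
\[
\PP\bigl(\abs{G(V)}\ge s\ \big|\ a\bigr)\le 2\exp\Bigl(-\tfrac{(N-1)\,s^{2}}{30\,\mathcal Q_{ii}\mathcal Q_{jj}}\Bigr);
\]
since this bound is independent of $a$, integrating over $a$ yields the entrywise estimate, and the union bound over $(i,j)$ together with the substitution $s=t\norm{\EE[F^{U}]}_{\max}$ described above completes the proof. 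Besides the non-Lipschitzness already flagged, the only point needing care is the verification that conditioning on $a$ reduces the Haar measure on $\mathrm U(N)$ exactly to the Haar measure on $\mathrm U(N-1)$ on each fiber; the constant $120$ is not optimized.
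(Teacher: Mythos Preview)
Your proposal is correct and follows the same overarching strategy as the paper: condition on $a=U^{*}\psi_{\bm\theta}$, identify the fiber with a Haar-random element of $\mathrm U(N-1)$, show the conditional mean of the centered entry is zero, establish a Lipschitz bound that is \emph{uniform in $a$}, invoke sub-Gaussian concentration on $\mathrm U(N-1)$, and union-bound over the $m^{2}$ entries.

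Where you differ is in the two technical steps. The paper represents $F^{U}_{ij}-\tfrac12 Q_{ij}$ through the projection $P_{S^{U}}(\bm z_{\bm\theta})$ (\lemref{lem:CFIM}), obtains the conditional mean-zero from the block-by-block computation of \lemref{thm: conditional expectation of Pk}, and bounds the Lipschitz constant via a differential/geodesic argument (\lemref{lem:keyLip}) showing $\|P_{S^{U}}-P_{S^{V}}\|_{2}\le\tfrac{\pi}{2}\|U-V\|_{F}$ on each fiber. You instead derive the explicit identity $F^{U}_{ij}=\tfrac12 Q_{ij}+\tfrac12\sum_{k}\operatorname{Re}(\overline{\wh a_{k}}^{\,2}d^{i}_{k}d^{j}_{k})$, from which the conditional centering follows in one line by averaging over the central circle $V\mapsto e^{\I\chi}V$, and the Lipschitz bound follows by Cauchy--Schwarz together with the Parseval identity $\sum_{k}g_{k}g_{k}^{*}=I_{N-1}$. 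Your route is more elementary and in fact yields a sharper constant (the exponent $-(N-1)t^{2}/48$ rather than $-(N-1)t^{2}/(12\pi^{2})$); the paper's projection machinery, on the other hand, is set up once and then reused verbatim for Theorems~\ref{4} and~\ref{5}, which is why the authors invest in it. One small remark: your sketch of the gradient bound via $\|\Pi_{a^{\perp}}\|_{\mathrm{op}}^{1/2}$ is a little opaque; the cleanest way to get $L\le\sqrt{\mathcal Q_{ii}\mathcal Q_{jj}}$ is the direct two-point estimate $|G(V_{1})-G(V_{2})|\le\sqrt{\mathcal Q_{ii}\mathcal Q_{jj}}\|V_{1}-V_{2}\|_{F}$ using $\sum_{k}|g_{k}^{*}x|^{2}=\|x\|^{2}$.
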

\begin{theorem}[Frobenius norm]
\label{4}
       If the measurement basis $U$ is drawn from the Haar distribution $\mu_H$ on $\mathrm U(N)$, then for every $t>0$, we have 
          \begin{equation}
        \PP\left(\frac{\|F^U(\bm\theta)-\EE[F^U(\bm\theta)]\|_F}{\|\EE[F^U(\bm\theta)]\|_{F}}\geq t+\sqrt\frac{2m}{N}\right)\leq \exp\left(-\frac{(N-1)t^2}{120}\right),
    \end{equation}
where $\|A\|_{F} \triangleq \sqrt{\sum\limits_{i,j=1}^m|A_{ij}|^2}$ is the Frobenius norm of the matrix $A$.    
\end{theorem}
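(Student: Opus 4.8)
\textbf{Proof strategy for \thmref{4}.}
The plan is to regard $G(U) := \|F^U(\bm\theta)-\EE[F^U(\bm\theta)]\|_F$ as a single real-valued function on $\mathrm U(N)$ and to combine a L\'evy-type concentration inequality for Lipschitz functions on the unitary group with a bound on $\EE[G(U)]$ furnished by \thmref{2}. Owing to the curvature of $\mathrm{SU}(N)$ (Ricci bounded below by a positive multiple of $N$), any $L$-Lipschitz $f\colon(\mathrm U(N),\|\cdot\|_F)\to\RR$ obeys $\PP(f(U)\ge\EE f(U)+s)\le\exp(-c(N-1)s^2/L^2)$ for a universal $c$, the $N-1$ being the effective dimension. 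Since $G$ is $1$-Lipschitz in the matrix $F^U$ for the Frobenius norm, everything reduces to bounding the Lipschitz constant $L$ of $U\mapsto F^U(\bm\theta)$ into $(\RR^{m\times m},\|\cdot\|_F)$; choosing $s=t\,\|\EE[F^U]\|_F$ and using $\|\EE[F^U]\|_F=\tfrac12\|Q(\bm\theta)\|_F$ from \thmref{1} then yields the stated tail, provided $L$ is of order $\|Q(\bm\theta)\|_F$ and $\EE[G(U)]/\|\EE[F^U]\|_F$ is absorbed into the additive term $16\sqrt{m/(N-1)}$.

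The mean bound is routine. Jensen gives $\EE[G(U)]\le(\EE[G(U)^2])^{1/2}=\big(\sum_{i,j}\var_{U\sim\mu_H}[F^U_{ij}(\bm\theta)]\big)^{1/2}$, and by \thmref{2} this equals $\tfrac{1}{2\sqrt{2N}}\big((\tr\mathcal{Q}(\bm\theta))^2+\|\mathcal{Q}(\bm\theta)\|_F^2\big)^{1/2}$. Positive semidefiniteness of the QGT yields $\|\mathcal{Q}\|_F\le\tr\mathcal{Q}=\tr Q$, hence $\EE[G(U)]\le\tr Q/(2\sqrt N)$, and positive semidefiniteness of $Q$ yields $\tr Q\le\sqrt m\,\|Q\|_F=2\sqrt m\,\|\EE[F^U]\|_F$, so $\EE[G(U)]/\|\EE[F^U]\|_F\le\sqrt{m/N}\le 16\sqrt{m/(N-1)}$, with room to spare.

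The real work, and the step I expect to be the main obstacle, is the Lipschitz estimate for $U\mapsto F^U(\bm\theta)$, which is \emph{not} globally Lipschitz. With $a=U^*\psi_{\bm\theta}$ and $b^i=U^*\partial_i\psi_{\bm\theta}$ one has $F^U_{ij}=\sum_k\operatorname{Re}(\overline{a_k}\,b^i_k/|a_k|)\operatorname{Re}(\overline{a_k}\,b^j_k/|a_k|)$, and differentiating along a skew-Hermitian direction $\xi$ splits $\tfrac{\rd}{\rd t}F^U_{ij}$ into a regular part---whose Frobenius norm is controlled by the norms of the $\partial_i\psi_{\bm\theta}$, hence by $Q(\bm\theta)$ after passing to the parallel-transport gauge $\langle\psi_{\bm\theta},\partial_i\psi_{\bm\theta}\rangle=0$ (legitimate because $F^U$ is gauge invariant, and in that gauge $\langle\partial_i\psi_{\bm\theta},\partial_j\psi_{\bm\theta}\rangle=\mathcal{Q}_{ij}$)---plus a singular part of order $\sum_k|(\xi a)_k|\,|b^i_k|\,|b^j_k|/|a_k|$ that diverges wherever some measurement probability $|a_k|^2$ is tiny. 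To control the singular part I would use that $G$ is bounded everywhere---in the above gauge $0\preceq F^U$ and $\tr F^U\le\tr Q$, so $G\le\tfrac32\tr Q$---and restrict to the event $\mathcal{G}_\delta=\{\min_k|a_k|\ge\delta\}$, on which $U\mapsto F^U$ is Lipschitz; then apply the curvature concentration inequality to a Lipschitz surrogate of $G$ that agrees with $G$ on $\mathcal{G}_\delta$, paying an additive $\mu_H(\mathcal{G}_\delta^c)$ in probability and a $\tfrac32\tr Q\cdot\mu_H(\mathcal{G}_\delta^c)$ shift in the mean, with $\mu_H(\mathcal{G}_\delta^c)\le N^2\delta^2$ from the $\operatorname{Beta}(1,N-1)$ law of the squared entries of a Haar-random unit vector. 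The delicate point is the choice of threshold $\delta$: a crude bound on the singular part forces $\delta$ exponentially small and ruins the Lipschitz constant, so one must instead control $\sum_k|(\xi a)_k|^2/|a_k|^2$ and related weighted sums sharply under $\mu_H$ (e.g.\ via Cauchy--Schwarz in $k$ together with tail bounds for such ratios), so that the truncation error is $e^{-\Theta(N)}$ while $L$ stays at the scale $\|Q(\bm\theta)\|_F$. This balance is exactly what preserves the clean rate $\exp(-(N-1)t^2/120)$, the constant $120$ recording the product of the universal curvature constant with the multiple of $\|Q(\bm\theta)\|_F$ bounding $L$.
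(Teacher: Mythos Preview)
Your diagnosis of the obstacle is correct: $U\mapsto F^U(\bm\theta)$ is not globally Lipschitz on $\mathrm U(N)$ because of the $1/|a_k|$ denominators. But the truncation strategy you outline cannot close. On the good set $\mathcal{G}_\delta=\{\min_k|a_k|\ge\delta\}$ the crude Lipschitz constant scales like $\|Q(\bm\theta)\|_F/\delta$, so the Gaussian tail you obtain is $\exp(-c(N-1)t^2\delta^2)$; to retain the rate $\exp(-\Theta(N)t^2)$ you must keep $\delta$ bounded away from $0$, yet $\mu_H(\mathcal{G}_\delta^c)\asymp N^2\delta^2$ is then not small. Conversely, making $\mu_H(\mathcal{G}_\delta^c)$ subexponential forces $\delta\le e^{-cN}/N$, which kills the rate. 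Your suggestion to instead ``control $\sum_k|(\xi a)_k|^2/|a_k|^2$ sharply'' cannot rescue this: since $|a_k|^2\sim\mathrm{Beta}(1,N-1)$ one has $\EE[|a_k|^{-2}]=\infty$, so no moment bound on the weighted sums is available, and the tension between the Lipschitz constant and the truncation probability is genuine, not an artifact of a crude estimate.

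The missing idea, and what the paper does, is to \emph{condition} on $U^*\psi_{\bm\theta}=\bm r$. This freezes all the offending denominators $|r_k|$, and the residual randomness is Haar on $\mathrm U(N-1)$ (via $U=g(X)$ with $X\in\mathrm U(N-1)$). The key computation (Lemma~\ref{lem:keyLip}) is that along this fibre the map $X\mapsto P_{S^{g(X)}}(\bm z_{\bm\theta})$ is Lipschitz with constant $\pi/2$ in operator norm, \emph{uniformly in $\bm r$}: writing $\bm w_k=\Phi(iU\tilde D_k\bm r)$ and $\tilde{\bm w}_k=\bm w_k/|r_k|$, one finds $\|\delta\tilde{\bm w}_k\|_2=\delta\|Y\bm e_k\|_2$ with the $|r_k|$ cancelling exactly, and a block computation shows $\|\sum_k\delta P(\bm w_k)\|_2\le\|\delta Y\|_2$. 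Combined with the (stronger) fact, noted after the proof of \thmref{1}, that $\EE[F^U\mid U^*\psi_{\bm\theta}=\bm r]=\tfrac12 Q(\bm\theta)$ for every $\bm r$, one can apply Lemma~\ref{lem:concentration} on $\mathrm U(N-1)$ to the conditionally centred ratio $h(U)=\|F^U-\tfrac12 Q\|_F/\|\tfrac12 Q\|_F$, whose Lipschitz constant is at most $\pi$; this yields the clean tail $\exp(-(N-1)t^2/(12\pi^2))\le\exp(-(N-1)t^2/120)$ with no truncation term, and averaging over $\bm r$ removes the conditioning. Your mean bound via \thmref{2} is fine and in fact a bit sharper than the paper's conditional second-moment estimate, but without the conditioning trick the Lipschitz step does not go through.
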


\begin{theorem}[Eigenvalue control]
\label{5}
    Let $\varepsilon\in(0,\frac{1}{2})$. If the measurement basis $U$ is drawn from the Haar distribution $\mu_H$ on $\mathrm U(N)$ and  $N\geq\frac{10^5m}{\varepsilon^2}$, then we have
    \begin{equation}
        \PP\left((1-2\varepsilon)\EE[F^U(\bm\theta)]\preceq F^U(\bm\theta)\preceq (1+2\varepsilon)\EE[F^U(\bm\theta)]\right) \geq 1-\exp\left(-\frac{(\varepsilon\sqrt{N-1}-285\sqrt{m})^2}{30}\right).
    \end{equation}
\end{theorem}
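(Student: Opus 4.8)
The plan is to reduce this two‑sided Löwner bound to an operator‑norm estimate on a \emph{whitened} error matrix, control that operator norm by an $\varepsilon$‑net over the unit sphere, and handle each fixed direction by a scalar concentration inequality whose distribution I will identify exactly. By \thmref{1}, $M:=\EE_{U\sim\mu_H}[F^U(\bm\theta)]=\tfrac12 Q(\bm\theta)$. Since $F^U(\bm\theta)\succeq 0$ and $\EE[\bm v^\top F^U(\bm\theta)\bm v]=\bm v^\top M\bm v=0$ whenever $\bm v\in\ker Q(\bm\theta)$, the form $\bm v\mapsto\bm v^\top F^U(\bm\theta)\bm v$ vanishes almost surely on $\ker Q(\bm\theta)$; hence almost surely both $F^U(\bm\theta)$ and $M$ are supported on $V:=\ran Q(\bm\theta)$, of dimension $r\le m$, on which $M$ is positive definite. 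On $V$ the asserted inclusion is equivalent to $\|G\|_{\mathrm{op}}\le 2\varepsilon$ for $G:=M^{-1/2}(F^U(\bm\theta)-M)M^{-1/2}$, a symmetric operator with $\EE[G]=0$; the whitening is essential because the conclusion is multiplicative and must not depend on the conditioning of $Q(\bm\theta)$.

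Next I would fix a $\tfrac14$‑net $\mathcal N$ of the unit sphere of $V$ with $|\mathcal N|\le 9^m$, so that $\|G\|_{\mathrm{op}}\le 2\max_{\bm v\in\mathcal N}|\bm v^\top G\bm v|$ and it suffices to bound $\PP(|\bm v^\top G\bm v|\ge\varepsilon)$ for a single unit $\bm v\in V$ and union bound. For such a $\bm v$ put $\bm w:=M^{-1/2}\bm v$ and $X_{\bm v}(U):=\bm w^\top F^U(\bm\theta)\bm w=1+\bm v^\top G\bm v$, so $\EE[X_{\bm v}]=\bm w^\top M\bm w=1$. Writing $\phi:=U^*\psi_{\bm\theta}$ and using $\partial_{\theta_i}\sqrt{p^U_k}=\operatorname{Re}\!\big(\bar\phi_k[U^*\partial_{\theta_i}\psi_{\bm\theta}]_k\big)/|\phi_k|$, one gets $X_{\bm v}(U)=\sum_k\operatorname{Re}\!\big(\tfrac{\bar\phi_k}{|\phi_k|}[U^*\xi_{\bm w}]_k\big)^2$ with $\xi_{\bm w}:=\sum_i w_i\,\partial_{\theta_i}\psi_{\bm\theta}$. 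Decomposing $\xi_{\bm w}=\langle\psi_{\bm\theta},\xi_{\bm w}\rangle_{\CC}\,\psi_{\bm\theta}+\xi_{\bm w}^\perp$, the scalar $\langle\psi_{\bm\theta},\xi_{\bm w}\rangle_{\CC}$ is purely imaginary (differentiate $\|\psi_{\bm\theta}\|=1$), so the $\psi_{\bm\theta}$‑component contributes a purely imaginary term that drops out of each $\operatorname{Re}(\cdot)^2$, while $\|\xi_{\bm w}^\perp\|^2=\bm w^\top Q(\bm\theta)\bm w=2$. With $(\phi,\tau):=\big(U^*\psi_{\bm\theta},\,U^*\xi_{\bm w}^\perp/\sqrt2\big)$ this yields
\[
X_{\bm v}(U)=2\sum_k\operatorname{Re}\!\Big(\tfrac{\bar\phi_k}{|\phi_k|}\tau_k\Big)^2 .
\]
By left‑invariance of $\mu_H$, $(\phi,\tau)$ is a Haar‑random orthonormal $2$‑frame in $\CC^N$; conditioning on $\phi$ and multiplying each coordinate by the unit phase $\bar\phi_k/|\phi_k|$ turns $\phi$ into a nonnegative real unit vector and $\tau$ into a uniform unit vector of the resulting complex hyperplane, whose real part $\bm a$ then satisfies $X_{\bm v}(U)=2\|\bm a\|^2$ with $(\bm a,\bm b)$ uniform on the real sphere $S^{2(N-1)-1}$. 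Thus $X_{\bm v}(U)\sim 2\,\mathrm{Beta}\!\big(\tfrac{N-1}{2},\tfrac{N-1}{2}\big)$ — the same law for \emph{every} $\bm v$ — and the $1/|\phi_k|$ factors are harmless since each summand is $\le|\tau_k|^2$.

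Finally I would apply scalar concentration and bookkeep. Writing $2\|\bm a\|^2=2A/(A+B)$ with $A,B$ independent $\chi^2_{N-1}$ variables, Laurent--Massart‑type $\chi^2$ tail bounds give $\PP(|X_{\bm v}-1|\ge s)\le 2\exp\big(-(N-1)s^2/30\big)$ for $s\in(0,1)$ (the constant is wasteful but enough; alternatively one may use $1$‑Lipschitz concentration of $\bm a\mapsto\|\bm a\|$ on the sphere). A union bound over $\mathcal N$ then gives $\PP(\|G\|_{\mathrm{op}}\ge 2\varepsilon)\le 9^m\cdot 2\exp\big(-(N-1)\varepsilon^2/30\big)$; under $N\ge 10^5 m/\varepsilon^2$ one has $\varepsilon\sqrt{N-1}\ge 285\sqrt m$, and an elementary completion of the square absorbs the factor $9^m=e^{m\ln 9}$ to produce the stated bound $\exp\big(-(\varepsilon\sqrt{N-1}-285\sqrt m)^2/30\big)$.

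I expect the middle step to be the main obstacle: showing that the whitened quadratic form is, in \emph{every} direction $\bm v$, exactly a rescaled Beta variable. This hinges on two points — the cancellation of the $\psi_{\bm\theta}$‑component of $\xi_{\bm w}$, which is precisely what makes the law direction‑independent (so the net only costs $\log|\mathcal N|=O(m)$), and the disappearance of the $1/|\phi_k|$ singularities once the phases are extracted. The whitening in the first step and the constant‑tracking in the last are routine by comparison.
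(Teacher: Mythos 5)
Your proposal is correct, but it takes a genuinely different route from the paper. The paper conditions on $U^*\psi_{\bm\theta}=\bm r$, identifies the conditional law with Haar measure on $\mathrm U(N-1)$ via the map $g$, proves Lipschitz continuity of $U\mapsto P_{S^U}(\bm z_{\bm\theta})$ (Lemma~\ref{lem:keyLip}), and then runs the log-Sobolev concentration inequality (Lemma~\ref{lem:concentration}) on the supremum process $\mathcal{R}(\bm z,X)$ over $E\cap\mathbb S^{2N-1}$, using sub-Gaussian increments plus Dudley's entropy integral to get $\EE[\mathcal R(X)]\leq 285\sqrt{m/(N-1)}$ before concentrating the supremum. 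You instead whiten by $M^{-1/2}$ on $\ran Q(\bm\theta)$ (the degenerate directions are harmless since $F^U\preceq Q$ forces $\ker Q\subseteq\ker F^U$ a.s.), replace chaining by a plain $\tfrac14$-net of size $9^m$, and — this is the genuinely new ingredient — identify the law of each whitened quadratic form exactly: after peeling off the purely imaginary $\psi_{\bm\theta}$-component of $\xi_{\bm w}$ and the phases $\bar\phi_k/|\phi_k|$, the pair $(\phi,\tau)$ is a Haar $2$-frame and $X_{\bm v}\sim 2\,\mathrm{Beta}\bigl(\tfrac{N-1}{2},\tfrac{N-1}{2}\bigr)$ for every direction, which is consistent with Theorems~\ref{1} and \ref{2} (mean $1$, variance $1/N$) and makes the union bound direction-independent. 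Your two remaining bookkeeping claims check out: the scalar tail $2\exp(-(N-1)s^2/30)$ holds with large slack (e.g., the symmetric Beta is sub-Gaussian with variance proxy $1/(4N)$, or use $1$-Lipschitz concentration of $\bm a\mapsto\|\bm a\|$ on $S^{2N-3}$), and under $N\geq 10^5m/\varepsilon^2$ one has $\varepsilon\sqrt{N-1}\geq 316\sqrt m$, so $(N-1)\varepsilon^2-(\varepsilon\sqrt{N-1}-285\sqrt m)^2\geq 570\cdot 316\,m-285^2m\approx 9.9\times 10^4\,m\gg 30(m\ln 9+\ln 2)$, which absorbs the net factor and yields exactly the stated bound. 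What each approach buys: yours is more elementary and self-contained (no LSI on the unitary group, no Dudley integral) and delivers the exact per-direction distribution as a byproduct; the paper's route builds on the Lipschitz/LSI machinery shared with Theorems~\ref{3}--\ref{4} and the conditional-Haar structure highlighted in the conclusion, so it is more reusable infrastructure even if less explicit per direction.
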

\thmref{3}, \ref{4} and \ref{5} show that the CFIM concentrates around its mean when the dimension $N$ increases. In particular, when $N\gg m$, $F^U(\bm\theta)$ can be two-sided controlled by the QFIM with a slight scalar relaxation with high probability. This means that in terms of using as a preconditioner, the effectiveness of using QFIM or a single realization of CFIM is essentially the same as the CFIM is a high-quality spectral approximation of the QFIM with high probability.

We now discuss the tightness of \thmref{3}, \ref{4} and \ref{5}.  As we have shown in \thmref{thm:6},  $F^U_{ij}(\bm\theta)$ is sub-Gaussian with its $\|\cdot\|_{\psi_2}\sim \Theta(N^{-1/2})$, which supports the tightness of tail bound $\exp(-cNt^2)$ -- and the order is optimal in asymptotic regime. Note that the matrix norm is inherently bounded below by its entry-wise magnitudes, so concentration rate exceeding $\exp(-cNt^2)$ is almost impossible without additional structural assumptions on $Q(\bm\theta)$. For the tightness of the estimation of the expected Frobenius or spectral error norm, please refer to the remark after the proof (Remark \ref{remark:thm4} and Remark \ref{remark:thm5}).
We also provide numerical experiments to support our results.

\subsection*{Numerical experiments} We take the relative error $\frac{\|F^U(\bm\theta)-\EE[F^U(\bm\theta)]\|_F}{\|\EE[F^U(\bm\theta)]\|_{F}}$  in Theorem \ref{4} as an example to examine the tightness of the concentration inequality. We aim to show that the tail bound $\exp(-c N t^2)$ cannot be improved up to a constant factor $c$, and that the expected upper bound is of order $O(1/\sqrt{N})$. All experiments are conducted with $m = 10$.

\begin{figure}[htb]
    \centering
    \includegraphics[width=0.8\textwidth]{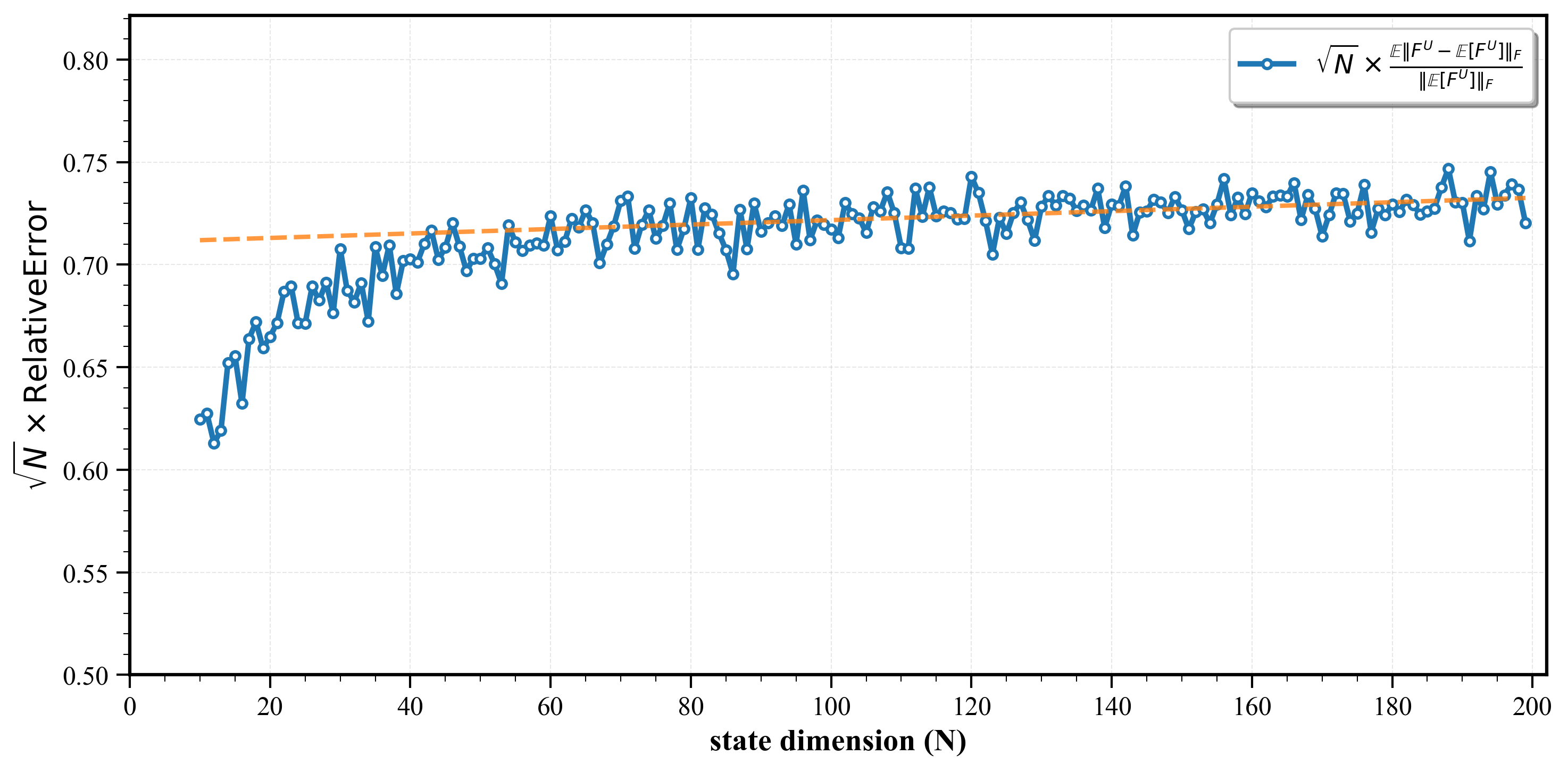}
    \caption{Relative Frobenius norm errors of different dimensions $N$. The linear fit (orange dashed line) is performed over the range $N \geq 60$ to better reflect this asymptotic behavior.}
    \label{fig:Frobenius_norm}
\end{figure}
Figure \ref{fig:Frobenius_norm} shows the plot of $\sqrt{N}$ multiplied by the average relative error against $N$. For each $N$, the result is averaged over 100 trials. The plot shows that the scaled error remains approximately constant as $N$ increases. While some non-linear fluctuations are visible at small values of $N$ due to finite-size effects, the trend stabilizes as $N$ grows. This indicates that the expected relative error scales as $O(1/\sqrt{N})$ in the asymptotic regime, consistent with the variance derived in \thmref{2}.

\begin{figure}[htb]
    \centering
    \includegraphics[width=0.8\textwidth]{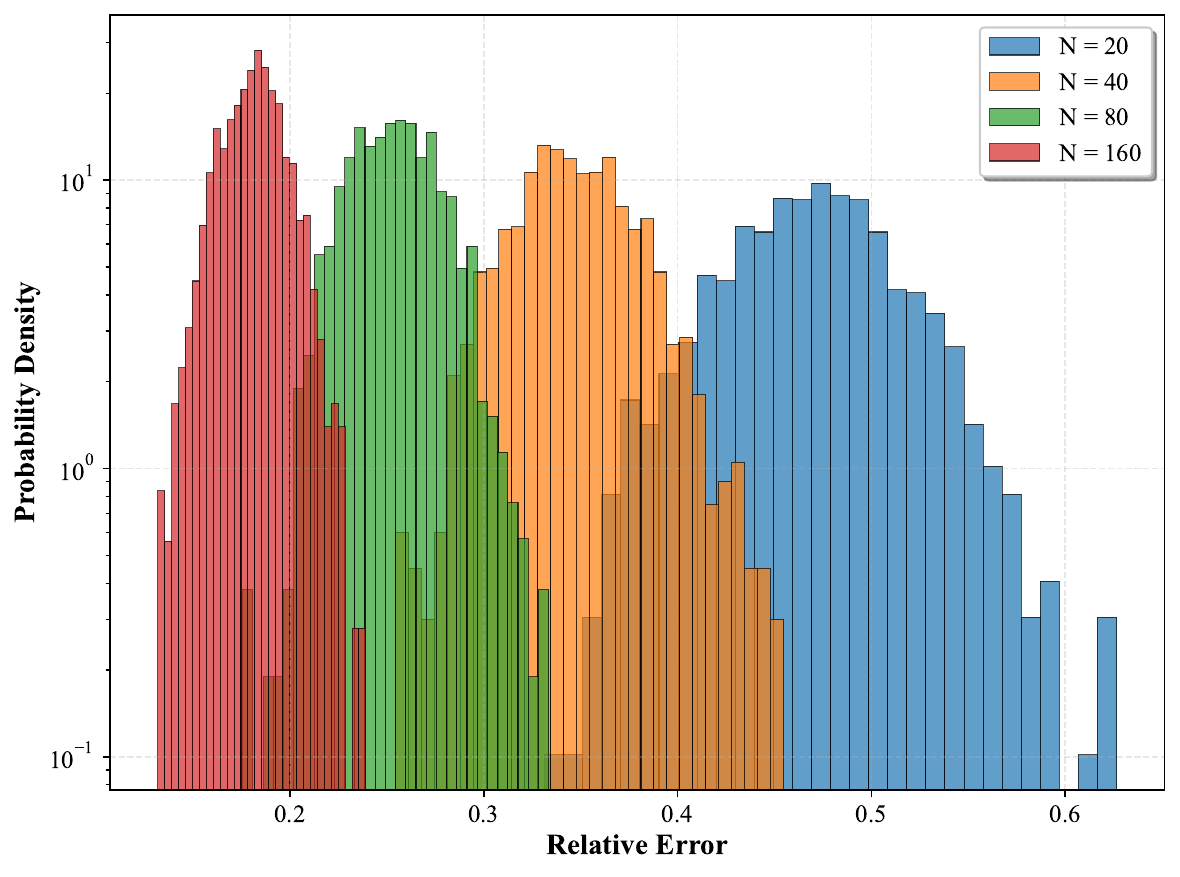}
    \caption{Histograms of the QFIM estimation error.}
    \label{fig:hisgram}
\end{figure}
Figure \ref{fig:hisgram} displays the frequency histograms of the relative error for $N = 20, 40, 80, 160$, each with a sample size of 1000. We observe that the error distribution becomes more concentrated around zero as \(N\) increases, for example, by comparing the error distribution for $N = 20$ versus $N = 160$. 
This visualizes the concentration phenomenon: As the dimension \(N\) increases, the CFIM becomes a more reliable estimator of the QFIM. The rapid decay of the histogram away from the center aligns with the exponential tail bounds proved in Theorems~\ref{3}-\ref{5}.

\begin{figure}[htb]
    \centering
    \begin{subfigure}{0.4\textwidth}
        \includegraphics[width=\textwidth]{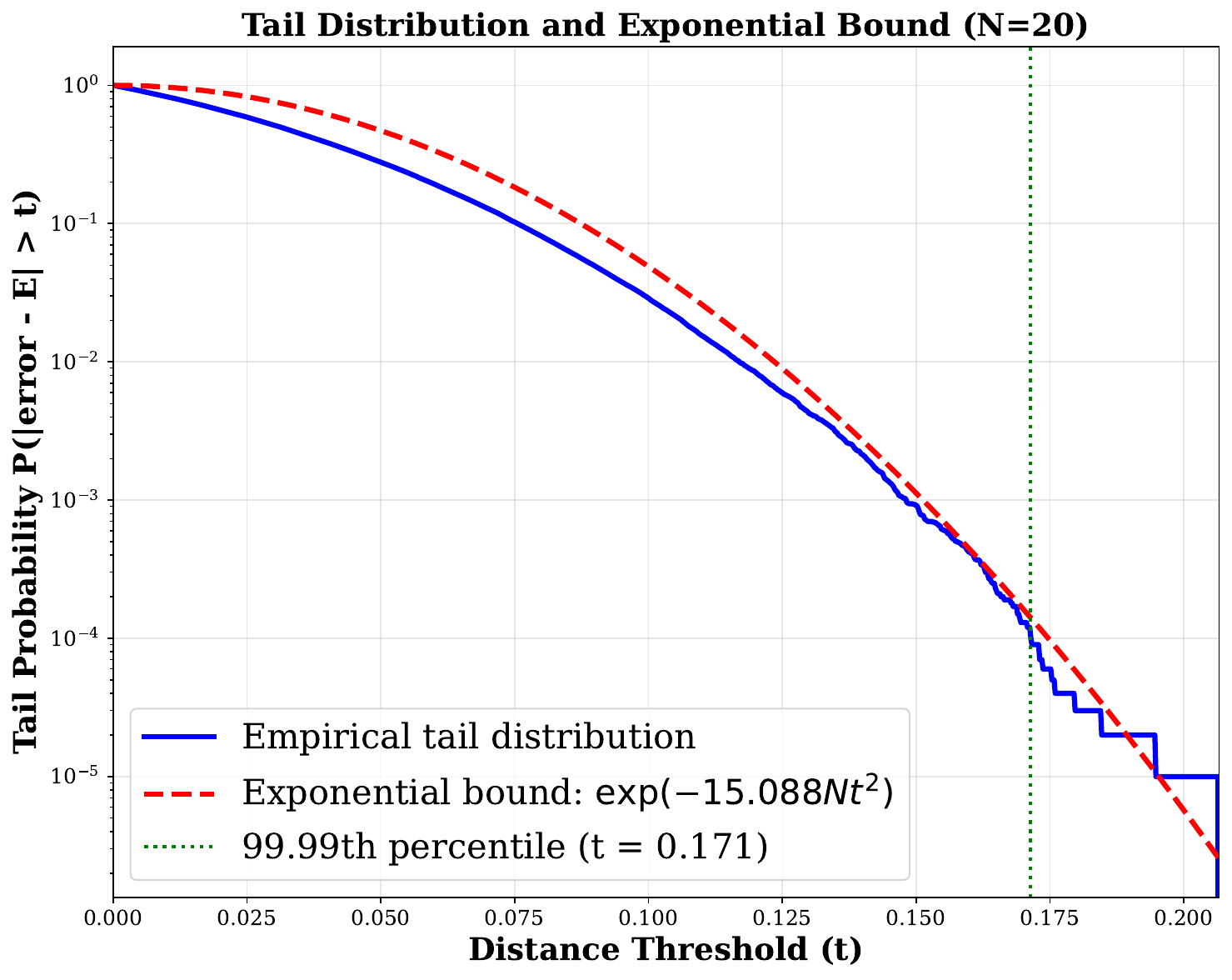}
        \caption{Empirical distribution for $N=20$}
        \label{fig:dist_n20}
    \end{subfigure}
        \qquad
    \begin{subfigure}{0.4\textwidth}
        \includegraphics[width=\textwidth]{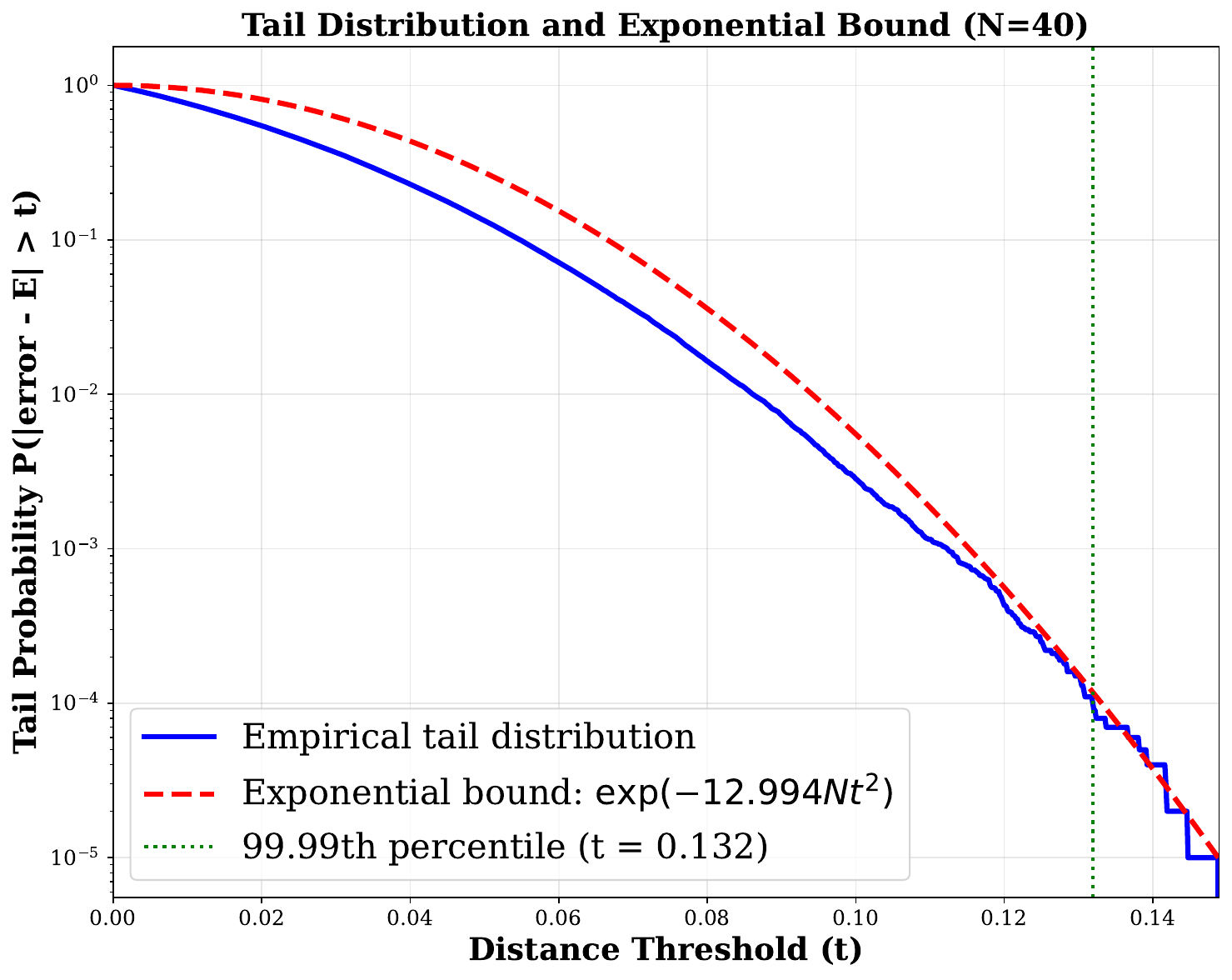}
        \caption{Empirical distribution for $N=40$}
        \label{fig:dist_n40}
    \end{subfigure}
    \qquad
    \begin{subfigure}{0.4\textwidth}
        \includegraphics[width=\textwidth]{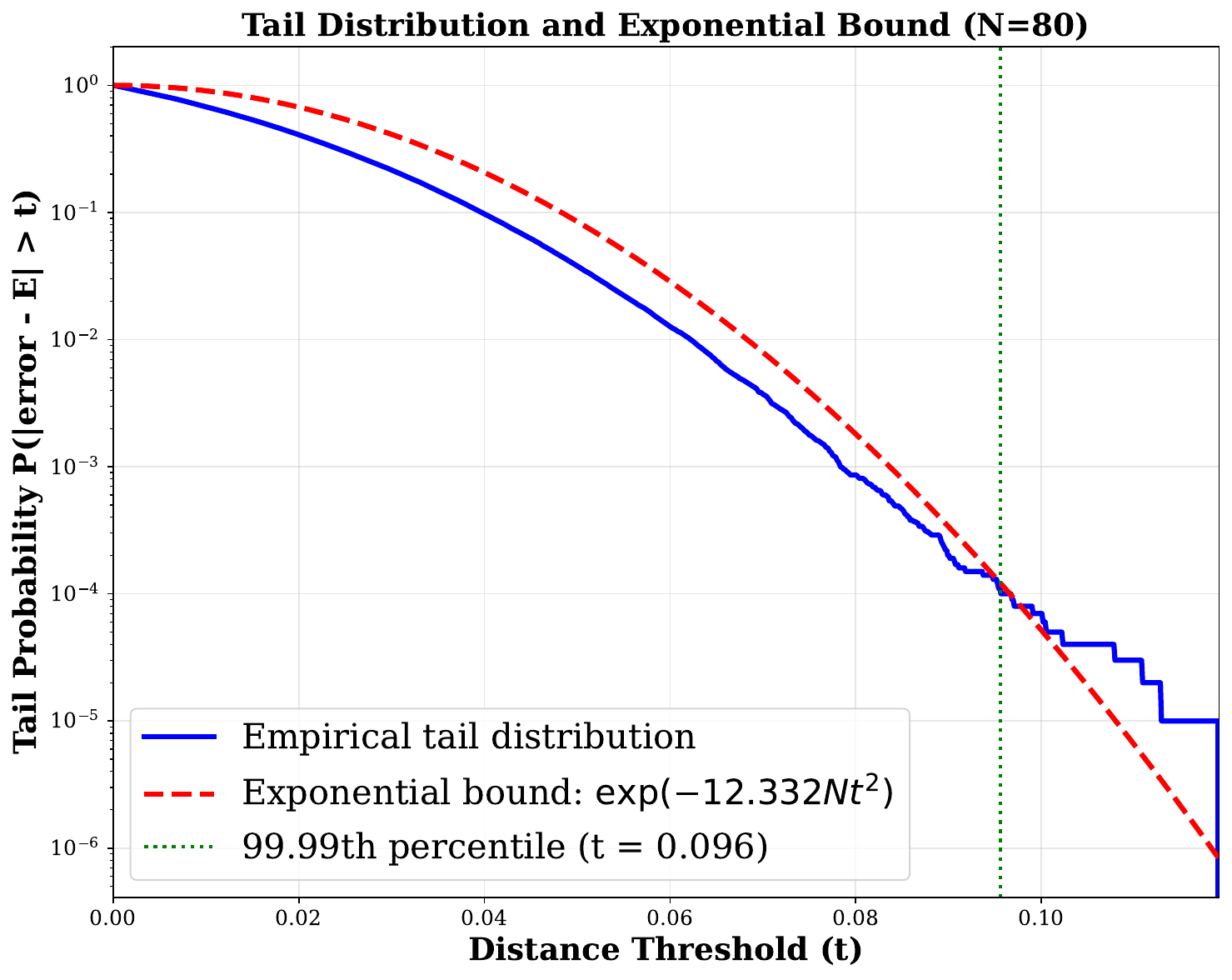}
        \caption{Empirical distribution for $N=80$}
        \label{fig:dist_n80}
    \end{subfigure}
    \qquad
        \begin{subfigure}{0.4\textwidth}
        \includegraphics[width=\textwidth]{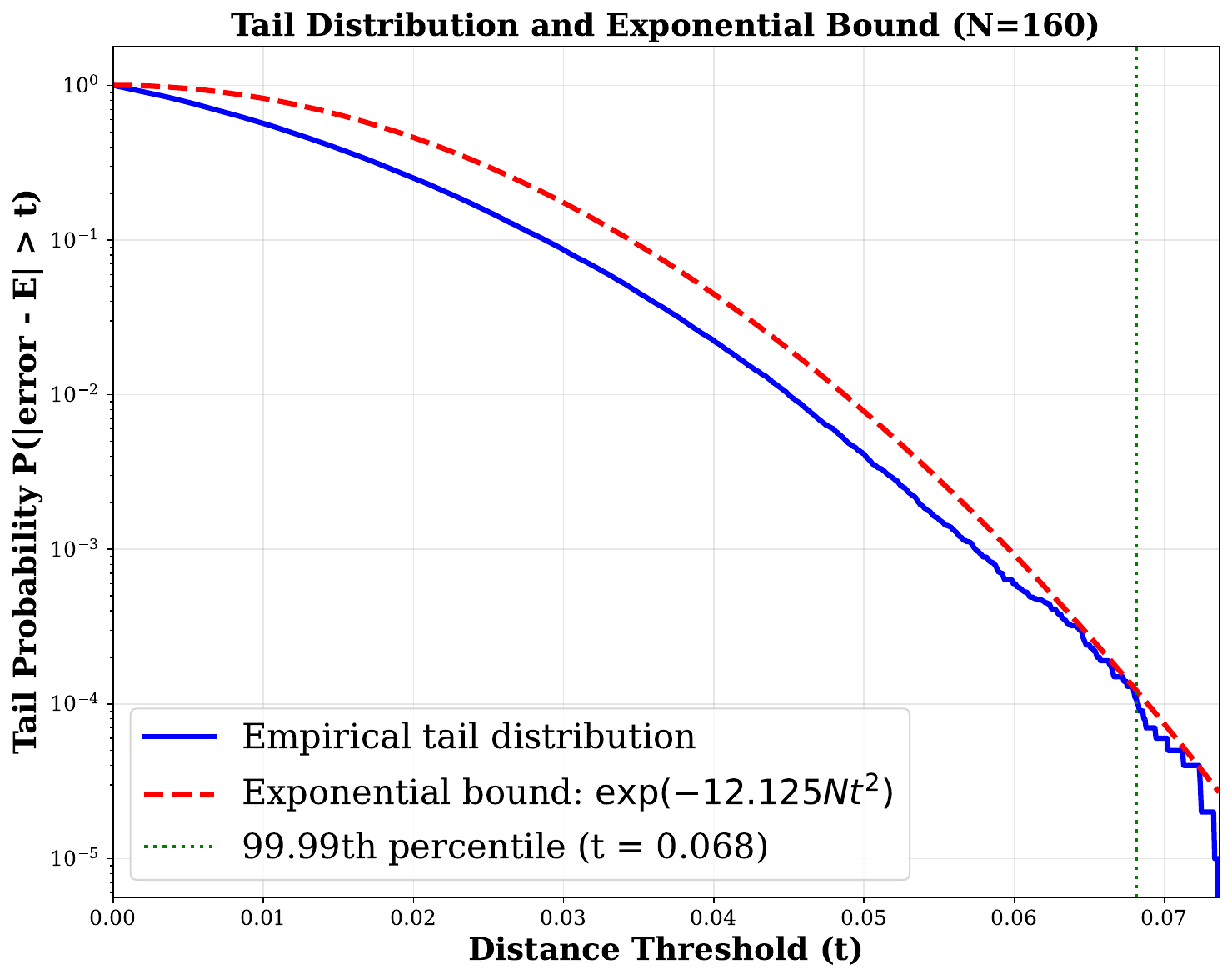}
        \caption{Empirical distribution for $N=160$}
        \label{fig:dist_n160}
    \end{subfigure}
    \caption{Empirical distribution functions and exponential upper-bound curves \(\exp(-cNt^2)\) for different values of $N$. For each $N$, the constant $c$ is the largest estimated value such that the empirical tail distribution (up to the $99.99$-th percentile, the green dashed line) lies entirely below the theoretical tail distribution curve. The specific estimation procedure is given in  Figure \ref{fig:anaylsis}.}
    \label{fig:distributions}
\end{figure}
Figure \ref{fig:distributions} presents the empirical tail distribution functions estimated from the frequency histograms for $N =20, 40, 80, 160$, based on 100,000 samples. The close agreement between the empirical tail and the theoretical curve indicates that the tail distribution roughly follows the form $\exp(-cNt^2)$. The results also show that the value of $c$ remains relatively stable as $N$ increases, supporting the view that the tail bound $\exp(-c N t^2)$ is tight up to a constant factor. This provides empirical evidence that the theoretical bounds in Theorem \ref{4} cannot be significantly improved.
We provide details of the estimation procedure in Figure \ref{fig:anaylsis}.

\begin{figure}[htb]
    \centering
    \includegraphics[width=0.8\textwidth]{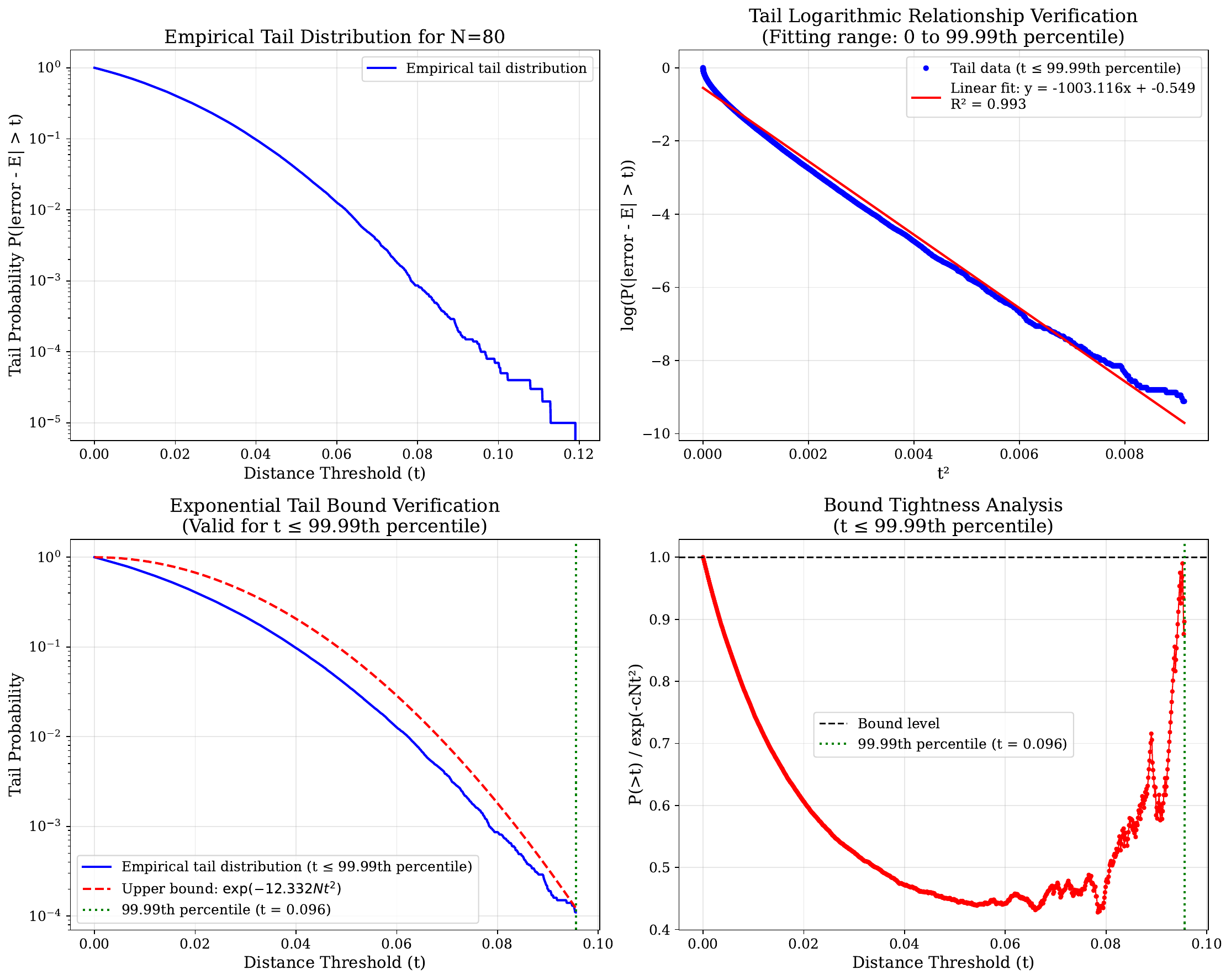}
    \caption{Tail distribution analysis for $N=80$. The top-left figure shows the empirical tail probability (complementary CDF) of the relative Frobenius norm error. The top-right figure plots the logarithm of the tail probability against $t^2$ and provides linear regression. Then we find the best $c$ based on the slope of the regression. The bottom-left figure depicts the best theoretical curve and the bottom-right figure plots the ratio between the empirical and the upper-bound. }
    \label{fig:anaylsis}
\end{figure}

\section{Connection with known results in quantum metrology}
\label{sec:relation}

We first review the basic terminology in quantum metrology to state the known result in the literature.
Let $\Omega$ be a measurable space, and $\mathbb H$ be a Hilbert space. Let $\Sigma$ be the set of all measurable sets in $\Omega$. A positive operator-valued measure (POVM) is a map $M:\Sigma\to \mathcal L(\mathbb H)$ such that
\[
\begin{aligned}
&M(B) = M(B)^*, \quad M(B) \succeq 0, \quad M(\varnothing) = 0, \quad M(\Omega) = \operatorname{Id},\quad \forall B\in\Sigma; \\
&M\left(\bigcup_{i} B_i\right) = \sum_i M(B_i) \quad \text{for } B_i \cap B_j = \varnothing \ (i \neq j),
\{B_i\} \text{ are countable subsets of } \Omega.
\end{aligned}
\]

Denote $\mathcal M(\Omega,\mathbb H)$ as the set of all measurements on $\mathbb H$ whose measurable set is $\Omega$. A measurement $M$ and a density matrix $\rho$ together define a probability distribution on $\Omega$, where the probability of the outcome belonging to any $B$ is $\operatorname{tr}(\rho M(B))$. For pure state $\psi\in\mathbb P\mathbb H$, $\rho=|\psi\rangle\langle\psi|$.

Let $G$ be a compact transitive Lie group of all transformations on a compact set $\Theta$ and $\{V_g\}$ is an irreducible representation of $G$ in $\mathbb H$. The covariant measurement is defined as follows:
\begin{defn}
    A measurement $\Pi\in\mathcal M(\Theta,\mathbb H)$ is said to be covariant with respect to $G$ if for any $g\in G$ and Borel set $B\subset\Theta$, we have
    \begin{equation}
        V_g^*\Pi(B)V_g=\Pi(B_{g^{-1}}),
    \end{equation}
    where $B_g\triangleq\{g\theta:\theta\in B\}$. Denote $\mathcal M(\Theta,\mathbb H,G)$ as the set of all covariant measurements.
\end{defn}
Denote $\nu$ as the invariant measure on $\mathbb P\mathbb H$. Then $\mathcal M(\mathbb P\mathbb H,\CC^N,\mathrm U(N))$ contains a unique covariant measurement of the following form:
\begin{equation}
\label{def:covariant}
\Pi(B)\triangleq \int_B N|\phi\rangle\langle\phi| \nu(d\phi).
\end{equation}
This measurement acting on the pure state set $\mathbb P\mathbb H$ will give a family of distributions:
\[\{P_\psi\triangleq \operatorname{tr}(|\psi\rangle\langle\psi|\Pi(d\phi))= N|\langle\psi,\phi\rangle_{\CC}|^2\nu (d\phi):\psi\in\mathbb P\mathbb H\}.\] 
Denote $d_\Pi(\psi_1,\psi_2)$ as the geodesic distance between $P_{\psi_1},P_{\psi_2}$ with respect to the Fisher information metric. Lemma 2 in \cite{MasahitoHayashi_1998} states that:
\begin{lemma}
\label{lem in Hayashi}
    Let $d_{fs}$ be the geodesic distance of Fubini-Study metric. Then we have $d_\Pi=\sqrt{2}d_{fs}$.
\end{lemma}
We will show that Theorem \ref{1} can be implied by 
Lemma \ref{lem in Hayashi}. Suppose that the pure states are parameterized by $\bm\theta\in\RR^m$, then it is known that (see \cite{kolotouros2024randomnatural} for instance) $d^2_{fs}(\psi_{\bm\theta_0},\psi_{\bm\theta_0+\epsilon \bm\theta})=\frac{\eps^2}{2}\bm\theta^\top Q(\bm\theta_0)\bm\theta+o(\epsilon^2)$ and $d^2_\Pi(\psi_{\bm\theta_0},\psi_{\bm\theta_0+\epsilon \bm\theta})=4\cdot \frac{\epsilon^2}{2}\bm\theta^\top I(\bm\theta_0)\bm\theta+o(\epsilon^2)$, where $I(\bm\theta)$ is the Fisher information matrix for parameterized distribution $P_{\psi_{\bm\theta}}$, defined as
\begin{equation}
\label{covaraint information matrix}
   I_{ij}(\bm\theta)=\int_{\mathbb P\mathbb H}\frac{1}{4}\cdot \frac{\partial\log N|\langle\psi_{\bm\theta},\phi\rangle_{\CC}|^2}{\partial \bm\theta_i}\cdot \frac{\partial\log N|\langle\psi_{\bm\theta},\phi\rangle_{\CC}|^2}{\partial \bm\theta_j}\cdot N|\langle\psi_{\bm\theta},\phi\rangle_{\CC}|^2\nu (d\phi). 
\end{equation}
When regarding $\psi_{\bm\theta},\phi$ as a complex vector in $\CC\mathbb S^{N-1}$,  (\ref{covaraint information matrix}) is equivalent to
\[  I_{ij}(\bm\theta)=\int_{\mathbb C\mathbb S^{N-1}}\frac{1}{4}\cdot \frac{\partial\log N|\langle\psi_{\bm\theta},\phi\rangle_{\CC}|^2}{\partial \bm\theta_i}\cdot \frac{\partial\log N|\langle\psi_{\bm\theta},\phi\rangle_{\CC}|^2}{\partial \bm\theta_j}\cdot N|\langle\psi_{\bm\theta},\phi\rangle_{\CC}|^2\mu (d\phi), \]
where $\mu$ follows the uniform distribution on $\CC\mathbb S^{N-1}$.
By Lemma \ref{lem in Hayashi} we have $I(\bm\theta)=\frac{1}{2}Q(\bm\theta)$. Then
 the average $\EE_{U\sim\mu_H}[F_{ij}^U(\bm\theta)]$ can be written as 
\begin{align*}
    \EE_{U\sim\mu_H}[F_{ij}^U(\bm\theta)]&=\EE_{U\sim\mu_H}\left[\sum\limits_{k=1}^N\frac{\partial \sqrt{(\bm{p}^U_{\bm\theta})_k}}{\partial\theta_i}\cdot \frac{\partial \sqrt{(\bm{p}^U_{\bm\theta})_k}}{\partial\theta_j}\right]\\
    &=\sum\limits_{k=1}^N\EE_{U\sim\mu_H}\left[\frac{1}{4(p_{\bm\theta}^U)_k}\cdot \frac{\partial {(\bm{p}^U_{\bm\theta})_k}}{\partial\theta_i}\cdot \frac{\partial {(\bm{p}^U_{\bm\theta})_k}}{\partial\theta_j}\right]\\
    &=\sum\limits_{k=1}^N\EE_{U\sim\mu_H}\left[\frac{1}{4|\bm e_k^* U^*\psi_{\bm\theta}|^2}\cdot \frac{\partial {|\bm e_k^* U^*\psi_{\bm\theta}|^2}}{\partial\theta_i}\cdot \frac{\partial {|\bm e_k^* U^*\psi_{\bm\theta}|^2}}{\partial\theta_j}\right]\\
    &=\sum\limits_{k=1}^N\EE_{\phi_k\sim\mu}\left[\frac{1}{4N^2| \phi_k^*\psi_{\bm\theta}|^2}\cdot \frac{\partial {N|\phi_k^*\psi_{\bm\theta}|^2}}{\partial\theta_i}\cdot \frac{\partial {N|  \phi_k^*\psi_{\bm\theta}|^2}}{\partial\theta_j}\right]\\
    &=N\cdot\frac{1}{N} I_{ij}(\bm\theta)=\frac{1}{2}Q_{ij}(\bm\theta),
\end{align*}
where $\phi_k=U\bm e_k$.

Hence, we have proved Theorem \ref{1}. 
However, this proof does not take full advantage of the structure of unitary matrices. We will give an alternative proof in Section \ref{sec:mean}, which gives a slightly stronger version of Theorem \ref{1} and will be helpful to establish concentration bounds. 
\section{Proof of expectation}\label{sec:mean}
\subsection{Notation}

Let us first introduce some notation used throughout the proof.

Let $\Phi$ be the canonical identification from $\CC^N$ to $\RR^{2N}$, that is, $\Phi$ takes a complex vector $\psi=\mathbf{x}+i\mathbf{y}$ and maps it to a real vector $\mathbf{z}=(\mathbf{x}^\top,\mathbf{y}^\top)^\top$ of twice the dimension:
\begin{align*}
  \Phi\colon \mathbb{C}^N &\to \mathbb{R}^{2N} \\
  \psi = (x_1+iy_1, \dots, x_N+iy_N)^\top &\mapsto \Phi(\psi) = (x_1, \dots, x_N, y_1, \dots, y_N)^\top
\end{align*}
With a slight abuse of notation, we define the homomorphism $\Phi: M_N(\mathbb{C}) \to M_{2N}(\mathbb{R})$ that maps a complex matrix $Z = A + iB$ to its real representation $\Phi(Z) = \bigl[\begin{smallmatrix} A & -B \\ B & A \end{smallmatrix}\bigr]$. Its restriction on $\mathrm U(N)$ is an irreducible real representation to $\mathrm O(2N)\cap Sp(2N,\RR)$. Moreover, for every $Z\in M_N(\CC)$ and $\psi\in\CC^N$, we have $\Phi(Z\psi)=\Phi(Z)\Phi(\psi)$, $\Phi(Z^*)=\Phi(Z)^\top$.

Denote $\langle\cdot, \cdot\rangle_{\CC}$ as the standard (complex) inner product in $\CC^N$ and $\langle\cdot, \cdot\rangle$ as the standard inner product in $\RR^N$ or $\RR^{2N}$.
It is easy to check that $\Phi$ preserves the standard real inner product: 
\begin{equation*} 
\operatorname{Re}(\langle\psi_1,\psi_2\rangle_{\CC})=\langle\Phi(\psi_1),\Phi(\psi_2)\rangle \qquad \forall\,  \psi_1,\psi_2\in\CC^N. 
\end{equation*} 
It is also easy to verify that 
$J=\Phi(iI_N)=\bigl[\begin{smallmatrix}
    0&-I_N\\
    I_N&0
\end{smallmatrix}\bigr]  
$
gives the symplectic matrix in $\RR^{2N}$. 

For parameterized quantum states $\psi_{\bm\theta}$, we will denote $\mathbf{z}_{\bm\theta}=\Phi(\psi_{\bm\theta})$ and $\mathbf{x}_{\bm\theta}=\operatorname{Re} (\psi_{\bm\theta}),\mathbf{y}_{\bm\theta}=\operatorname{Im}(\psi_{\bm\theta})$. 
Thus, the Jacobian of $\mathbf{z}_{\bm\theta}$ with respect to $\bm\theta$ is denoted as $\frac{\partial \mathbf{z}_{\bm\theta}}{\partial\bm\theta}\in\RR^{2N\times m}$. 
We also denote $\bm{p}=(p_1,\cdots,p_N)^\top$ where $p_i=\frac{x_i^2+y_i^2}{\|\mathbf{x}\|_2^2+\|\mathbf{y}\|_2^2}$ as the probability distribution corresponding to the quantum state $\psi$ observed in the standard basis. 

Finally, we write  $P(\mathbf{z}_1,\cdots,\mathbf{z}_k)$ as the orthogonal projection onto the subspace $\spanop\{\mathbf{z}_1,\cdots,\mathbf{z}_k\}$.

\subsection{Representation of QFIM and CFIM}

The first step of our proof is to characterize QFIM and CFIM with respect to $\psi_{\bm\theta}$ as a vector in $\mathbb{R}^{2N}$. The following two lemmas offer  alternative representations that would simplify subsequent computations.
\begin{lemma}
\label{lem:QFIM}
    The quantum Fisher information matrix defined in Definition \ref{def:QFIM} is equivalent to the following definition:
    \begin{equation}
        \label{QFIM:def2}
        Q_{ij}(\bm\theta)=\left\langle\frac{\partial \bm{z}_{\bm\theta}}{\partial\theta_i},\frac{\partial \bm{z}_{\bm\theta}}{\partial\theta_j}\right\rangle-\left\langle\frac{\partial \bm{z}_{\bm\theta}}{\partial\theta_i},\bm{z}_{\bm\theta}\right\rangle\left\langle\frac{\partial \bm{z}_{\bm\theta}}{\partial\theta_j},\bm{z}_{\bm\theta}\right\rangle-\left\langle\frac{\partial \bm{z}_{\bm\theta}}{\partial\theta_i},J\bm{z}_{\bm\theta}\right\rangle\left\langle\frac{\partial \bm{z}_{\bm\theta}}{\partial\theta_j},J\bm{z}_{\bm\theta}\right\rangle,
    \end{equation}
    or written in matrix form:
    \begin{equation}
    \label{QFIM:def2:matrix form}
        Q(\bm\theta)=\left(\frac{\partial \bm{z}_{\bm\theta}}{\partial\bm\theta}\right)^\top \bigl(I-P(\bm{z}_{\bm\theta},J\bm{z}_{\bm\theta})\bigr)\frac{\partial \bm{z}_{\bm\theta}}{\partial\bm\theta}.
    \end{equation}
\end{lemma}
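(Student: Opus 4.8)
The plan is to transport the definition \eqref{eq:QGT} through the real isometry $\Phi$ and separate the real and imaginary parts of the two complex inner products appearing there. Since $\Phi$ is $\RR$-linear it commutes with differentiation in $\bm\theta$, so $\Phi\bigl(\tfrac{\partial\psi_{\bm\theta}}{\partial\theta_i}\bigr)=\tfrac{\partial\bm z_{\bm\theta}}{\partial\theta_i}$. The first term of $\mathcal Q_{ij}$ is handled immediately by the identity $\operatorname{Re}\langle\psi_1,\psi_2\rangle_{\CC}=\langle\Phi(\psi_1),\Phi(\psi_2)\rangle$ recorded above, which gives $\operatorname{Re}\langle\tfrac{\partial\psi_{\bm\theta}}{\partial\theta_i},\tfrac{\partial\psi_{\bm\theta}}{\partial\theta_j}\rangle_{\CC}=\langle\tfrac{\partial\bm z_{\bm\theta}}{\partial\theta_i},\tfrac{\partial\bm z_{\bm\theta}}{\partial\theta_j}\rangle$.

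For the second term I would write it as $\operatorname{Re}(a\bar b)$ with $a=\langle\tfrac{\partial\psi_{\bm\theta}}{\partial\theta_i},\psi_{\bm\theta}\rangle_{\CC}$ and $b=\langle\tfrac{\partial\psi_{\bm\theta}}{\partial\theta_j},\psi_{\bm\theta}\rangle_{\CC}$ (using $\langle\psi_{\bm\theta},\tfrac{\partial\psi_{\bm\theta}}{\partial\theta_j}\rangle_{\CC}=\bar b$), so that $\operatorname{Re}(a\bar b)=\operatorname{Re}(a)\operatorname{Re}(b)+\operatorname{Im}(a)\operatorname{Im}(b)$. The real parts become $\langle\tfrac{\partial\bm z_{\bm\theta}}{\partial\theta_i},\bm z_{\bm\theta}\rangle$ and $\langle\tfrac{\partial\bm z_{\bm\theta}}{\partial\theta_j},\bm z_{\bm\theta}\rangle$ as before. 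For the imaginary part I would use $\Phi(i\phi)=\Phi(iI_N)\Phi(\phi)=J\Phi(\phi)$ together with conjugate-linearity of $\langle\cdot,\cdot\rangle_{\CC}$ in the first argument to obtain $\operatorname{Im}\langle\phi_1,\phi_2\rangle_{\CC}=\operatorname{Re}\langle i\phi_1,\phi_2\rangle_{\CC}=\langle J\Phi(\phi_1),\Phi(\phi_2)\rangle=-\langle\Phi(\phi_1),J\Phi(\phi_2)\rangle$, the last step using $J^\top=-J$. Applying this with $(\phi_1,\phi_2)=(\tfrac{\partial\psi_{\bm\theta}}{\partial\theta_i},\psi_{\bm\theta})$ and the analogous pair for $j$, the two sign flips cancel and $\operatorname{Im}(a)\operatorname{Im}(b)=\langle\tfrac{\partial\bm z_{\bm\theta}}{\partial\theta_i},J\bm z_{\bm\theta}\rangle\langle\tfrac{\partial\bm z_{\bm\theta}}{\partial\theta_j},J\bm z_{\bm\theta}\rangle$. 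Collecting the three pieces yields \eqref{QFIM:def2}. (The conclusion is in fact insensitive to which slot one takes to be conjugate-linear, since the imaginary-part term enters squared.)

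For the matrix form \eqref{QFIM:def2:matrix form}, the key observation is that $\{\bm z_{\bm\theta},J\bm z_{\bm\theta}\}$ is an orthonormal set in $\RR^{2N}$: $\|\bm z_{\bm\theta}\|^2=\|\psi_{\bm\theta}\|_{\CC}^2=1$ by normalization, $\|J\bm z_{\bm\theta}\|=\|\bm z_{\bm\theta}\|$ since $J$ is orthogonal, and $\langle\bm z_{\bm\theta},J\bm z_{\bm\theta}\rangle=\operatorname{Im}\langle\psi_{\bm\theta},\psi_{\bm\theta}\rangle_{\CC}=0$ by the identity above (the two vectors being moreover linearly independent because $J$ has no real eigenvectors). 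Hence the orthogonal projection is the explicit rank-two operator $P(\bm z_{\bm\theta},J\bm z_{\bm\theta})=\bm z_{\bm\theta}\bm z_{\bm\theta}^\top+(J\bm z_{\bm\theta})(J\bm z_{\bm\theta})^\top$, and reading off the $(i,j)$ entry of $\bigl(\tfrac{\partial\bm z_{\bm\theta}}{\partial\bm\theta}\bigr)^\top\bigl(I-P(\bm z_{\bm\theta},J\bm z_{\bm\theta})\bigr)\tfrac{\partial\bm z_{\bm\theta}}{\partial\bm\theta}$ reproduces \eqref{QFIM:def2} term by term. There is no serious obstacle here; the only care needed is bookkeeping — keeping the conjugation convention consistent so that the sign flips in the imaginary-part step genuinely cancel, and using the normalization of $\psi_{\bm\theta}$ (equivalently $\|\bm z_{\bm\theta}\|=1$) when identifying the projector.
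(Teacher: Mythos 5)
Your proof is correct and follows essentially the same route as the paper: transport everything to $\RR^{2N}$ via $\Phi$, identify the real and imaginary parts of the complex inner products with $\langle\cdot,\cdot\rangle$ and $\langle\cdot,J\cdot\rangle$ respectively, and then use the orthogonality of $\bm z_{\bm\theta}$ and $J\bm z_{\bm\theta}$ (plus normalization) to write the rank-two projector explicitly for the matrix form. The only difference is presentational: the paper performs the real/imaginary separation by expanding in the components $\bm x_{\bm\theta},\bm y_{\bm\theta}$, whereas you invoke the identities $\operatorname{Re}\langle\phi_1,\phi_2\rangle_{\CC}=\langle\Phi(\phi_1),\Phi(\phi_2)\rangle$ and $\operatorname{Im}\langle\phi_1,\phi_2\rangle_{\CC}=-\langle\Phi(\phi_1),J\Phi(\phi_2)\rangle$ together with $\operatorname{Re}(a\bar b)=\operatorname{Re}(a)\operatorname{Re}(b)+\operatorname{Im}(a)\operatorname{Im}(b)$, which is the same computation in slightly more structural form.
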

\begin{proof}
    Let us first check (\ref{QFIM:def2}). Write $\frac{\partial\psi_{\bm\theta}}{\partial\theta_i}=\frac{\partial\bm{x}_{\bm\theta}}{\partial\theta_i}+i\frac{\partial\bm{y}_{\bm\theta}}{\partial\theta_i}$ and $\psi_{\bm\theta}=\bm{x}_{\bm\theta}+i\bm{y}_{\bm\theta}$, then substitute them back in (\ref{eq:QGT}), we have
    \begin{align*}
        Q_{ij}(\bm\theta)=&\left(\left\langle\frac{\partial\bm{x}_{\bm\theta}}{\partial\theta_i},\frac{\partial\bm{x}_{\bm\theta}}{\partial\theta_j}\right\rangle+\left\langle\frac{\partial\bm{y}_{\bm\theta}}{\partial\theta_i},\frac{\partial\bm{y}_{\bm\theta}}{\partial\theta_j}\right\rangle\right)-\left(\left\langle\frac{\partial\bm{x}_{\bm\theta}}{\partial\theta_i},\bm{x}_{\bm\theta}\right\rangle+\left\langle\frac{\partial\bm{y}_{\bm\theta}}{\partial\theta_i},\bm{y}_{\bm\theta}\right\rangle\right)\left(\left\langle\frac{\partial\bm{x}_{\bm\theta}}{\partial\theta_j},\bm{x}_{\bm\theta}\right\rangle+\left\langle\frac{\partial\bm{y}_{\bm\theta}}{\partial\theta_j},\bm{y}_{\bm\theta}\right\rangle\right)\\
        &-\left(\left\langle\frac{\partial\bm{x}_{\bm\theta}}{\partial\theta_i},\bm{y}_{\bm\theta}\right\rangle-\left\langle\frac{\partial\bm{y}_{\bm\theta}}{\partial\theta_i},\bm{x}_{\bm\theta}\right\rangle\right)\left(\left\langle\frac{\partial\bm{x}_{\bm\theta}}{\partial\theta_j},\bm{y}_{\bm\theta}\right\rangle-\left\langle\frac{\partial\bm{y}_{\bm\theta}}{\partial\theta_j},\bm{x}_{\bm\theta}\right\rangle\right)\\
        &=\left\langle\frac{\partial \bm{z}_{\bm\theta}}{\partial\theta_i},\frac{\partial \bm{z}_{\bm\theta}}{\partial\theta_j}\right\rangle-\left\langle\frac{\partial \bm{z}_{\bm\theta}}{\partial\theta_i},\bm{z}_{\bm\theta}\right\rangle\left\langle\frac{\partial \bm{z}_{\bm\theta}}{\partial\theta_j},\bm{z}_{\bm\theta}\right\rangle-\left\langle\frac{\partial \bm{z}_{\bm\theta}}{\partial\theta_i},J\bm{z}_{\bm\theta}\right\rangle\left\langle\frac{\partial \bm{z}_{\bm\theta}}{\partial\theta_j},J\bm{z}_{\bm\theta}\right\rangle.
    \end{align*}
    Hence (\ref{QFIM:def2}) holds. Note that $\bm{z}_{\bm\theta},J\bm{z}_{\bm\theta}$ are orthogonal in $\RR^{2N}$, so \[P(\bm{z}_{\bm\theta},J\bm{z}_{\bm\theta})=P(\bm{z}_{\bm\theta})+P(J\bm{z}_{\bm\theta})=\bm{z}_{\bm\theta}\bm{z}_{\bm\theta}^\top+J\bm{z}_{\bm\theta}\bm{z}_{\bm\theta}^\top J^\top.\]
    Using the fact that the $i$-th column of $\bm{z}_{\bm\theta}^\top\frac{\partial \bm{z}_{\bm\theta}}{\partial\bm\theta}$ is $\langle\frac{\partial \bm{z}_{\bm\theta}}{\partial\theta_i},\bm{z}_{\bm\theta}\rangle$ and the $i$-th column of $\bm{z}_{\bm\theta}^\top J^\top\frac{\partial \bm{z}_{\bm\theta}}{\partial\bm\theta}$ is $\langle\frac{\partial \bm{z}_{\bm\theta}}{\partial\theta_i},J\bm{z}_{\bm\theta}\rangle$, we can verify (\ref{QFIM:def2:matrix form}) directly by checking each of its components.
\end{proof}
\begin{lemma}
\label{lem:CFIM}
    If  each component of $\bm{p}_{\bm\theta}$ is non-zero, then the classical Fisher information matrix under standard basis defined in (\ref{CFIM:def}) is equivalent to the following definition:
    \begin{equation}
        \label{CFIM:def2}
        F^I_{ij}(\bm\theta)=\left\langle\frac{\partial \bm{z}_{\bm\theta}}{\partial\theta_i},\frac{\partial \bm{z}_{\bm\theta}}{\partial\theta_j}\right\rangle-\left\langle\frac{\partial \bm{z}_{\bm\theta}}{\partial\theta_i},\bm{z}_{\bm\theta}\right\rangle\left\langle\frac{\partial \bm{z}_{\bm\theta}}{\partial\theta_j},\bm{z}_{\bm\theta}\right\rangle-\sum\limits_{k=1}^{N}\frac{\left\langle\frac{\partial \bm{z}_{\bm\theta}}{\partial\theta_i},D_kJ\bm{z}_{\bm\theta}\right\rangle\left\langle\frac{\partial \bm{z}_{\bm\theta}}{\partial\theta_j},D_k J\bm{z}_{\bm\theta}\right\rangle}{\left\langle D_k J\bm{z}_{\bm\theta},D_k J\bm{z}_{\bm\theta}\right\rangle},
    \end{equation}
    where $D_k=\operatorname{diag}(\bm{e}_k+\bm{e}_{k+N})$, $\bm{e}_k$ are the standard basis vectors in $\RR^{2N}$. It also has the matrix form:
    \begin{equation}
    \label{CFIM:def2:matrix form}
        F^I(\bm\theta)=\left(\frac{\partial \bm{z}_{\bm\theta}}{\partial\bm\theta}\right)^\top \bigl(I-P(\bm{z}_{\bm\theta},D_1J\bm{z}_{\bm\theta},\cdots,D_NJ\bm{z}_{\bm\theta})\bigr)\frac{\partial \bm{z}_{\bm\theta}}{\partial\bm\theta}.
    \end{equation}
\end{lemma}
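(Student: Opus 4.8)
The plan is to compute $F^I(\bm\theta)$ directly from $(\ref{CFIM:def})$ with $U=I$, rewrite it in the real coordinates $\bm{x}_{\bm\theta},\bm{y}_{\bm\theta}$ of $\bm{z}_{\bm\theta}=\Phi(\psi_{\bm\theta})$, and then match it term by term against $(\ref{CFIM:def2})$. Write $p_k=x_k^2+y_k^2$ for the $k$-th component of $\bm{p}_{\bm\theta}$ (which coincides with $[\bm{p}^I(\bm\theta)]_k$ because $\psi_{\bm\theta}$ is normalized), and abbreviate $\partial_i=\partial/\partial\theta_i$. Since $p_k\neq0$ by hypothesis, $\sqrt{p_k}$ is differentiable with $\partial_i\sqrt{p_k}=(x_k\partial_i x_k+y_k\partial_i y_k)/\sqrt{p_k}$, so
\[
F^I_{ij}(\bm\theta)=\sum_{k=1}^N\partial_i\sqrt{p_k}\,\partial_j\sqrt{p_k}=\sum_{k=1}^N\frac{(x_k\partial_i x_k+y_k\partial_i y_k)(x_k\partial_j x_k+y_k\partial_j y_k)}{p_k}.
\]

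Next I would unpack the right-hand side of $(\ref{CFIM:def2})$. Since $J\bm{z}_{\bm\theta}=(-\bm{y}_{\bm\theta}^\top,\bm{x}_{\bm\theta}^\top)^\top$ and $D_k$ retains only coordinates $k$ and $k+N$, the vector $D_kJ\bm{z}_{\bm\theta}$ is supported on those two coordinates with entries $-y_k$ and $x_k$; hence $\langle D_kJ\bm{z}_{\bm\theta},D_kJ\bm{z}_{\bm\theta}\rangle=x_k^2+y_k^2=p_k$ and $\langle\partial_i\bm{z}_{\bm\theta},D_kJ\bm{z}_{\bm\theta}\rangle=x_k\partial_i y_k-y_k\partial_i x_k$. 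Also $\langle\partial_i\bm{z}_{\bm\theta},\bm{z}_{\bm\theta}\rangle=\tfrac12\partial_i\|\bm{z}_{\bm\theta}\|^2=0$ by normalization, so the middle term of $(\ref{CFIM:def2})$ vanishes and may be inserted for free (it is retained only to align the formula with the projection form $(\ref{CFIM:def2:matrix form})$, exactly as in \lemref{lem:QFIM}). Thus $(\ref{CFIM:def2})$ reduces to the scalar identity, for each fixed $k$,
\[
\frac{(x_k\partial_i x_k+y_k\partial_i y_k)(x_k\partial_j x_k+y_k\partial_j y_k)+(x_k\partial_i y_k-y_k\partial_i x_k)(x_k\partial_j y_k-y_k\partial_j x_k)}{x_k^2+y_k^2}=\partial_i x_k\,\partial_j x_k+\partial_i y_k\,\partial_j y_k,
\]
which I would verify simply by expanding both products in the numerator: the four cross terms cancel in pairs, leaving $(x_k^2+y_k^2)(\partial_i x_k\partial_j x_k+\partial_i y_k\partial_j y_k)$. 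Summing over $k$ and recognizing $\sum_k(\partial_i x_k\partial_j x_k+\partial_i y_k\partial_j y_k)=\langle\partial_i\bm{z}_{\bm\theta},\partial_j\bm{z}_{\bm\theta}\rangle$ yields $(\ref{CFIM:def2})$.

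For the matrix form $(\ref{CFIM:def2:matrix form})$ I would argue exactly as in \lemref{lem:QFIM}: the vectors $\bm{z}_{\bm\theta},D_1J\bm{z}_{\bm\theta},\dots,D_NJ\bm{z}_{\bm\theta}$ are pairwise orthogonal — the $D_kJ\bm{z}_{\bm\theta}$ have pairwise disjoint supports, and $\langle\bm{z}_{\bm\theta},D_kJ\bm{z}_{\bm\theta}\rangle=x_k(-y_k)+y_kx_k=0$ — and each is nonzero since $p_k\neq0$. Hence the orthogonal projection onto their span splits into rank-one pieces,
\[
P(\bm{z}_{\bm\theta},D_1J\bm{z}_{\bm\theta},\dots,D_NJ\bm{z}_{\bm\theta})=\bm{z}_{\bm\theta}\bm{z}_{\bm\theta}^\top+\sum_{k=1}^N\frac{D_kJ\bm{z}_{\bm\theta}(D_kJ\bm{z}_{\bm\theta})^\top}{p_k},
\]
and sandwiching $I$ minus this operator between $(\partial\bm{z}_{\bm\theta}/\partial\bm\theta)^\top$ and $\partial\bm{z}_{\bm\theta}/\partial\bm\theta$ and reading off the $(i,j)$ entry reproduces $(\ref{CFIM:def2})$.

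I do not expect a serious obstacle here; the argument is elementary once the real representation $\Phi$ is in place, and it parallels \lemref{lem:QFIM} closely. The only points needing care are the correct identification of $D_kJ\bm{z}_{\bm\theta}$ and its norm $p_k$ (so the denominators in $(\ref{CFIM:def2})$ and in the rank-one projections agree) and the standing hypothesis $p_k\neq0$, which is precisely what makes $\sqrt{p_k}$ differentiable and the $D_kJ\bm{z}_{\bm\theta}$ nonzero. A useful sanity check, not needed for the proof, is that $J\bm{z}_{\bm\theta}=\sum_k D_kJ\bm{z}_{\bm\theta}$ lies in the span appearing in $(\ref{CFIM:def2:matrix form})$, so $P(\bm{z}_{\bm\theta},J\bm{z}_{\bm\theta})\preceq P(\bm{z}_{\bm\theta},D_1J\bm{z}_{\bm\theta},\dots,D_NJ\bm{z}_{\bm\theta})$ and hence $F^I(\bm\theta)\preceq Q(\bm\theta)$, consistent with the classical Fisher information being dominated by its quantum counterpart.
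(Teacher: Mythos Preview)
Your proposal is correct and follows essentially the same route as the paper: both compute $F^I_{ij}$ directly from $(\ref{CFIM:def})$ in the real coordinates $x_k,y_k$, use $\langle\partial_i\bm z_{\bm\theta},\bm z_{\bm\theta}\rangle=0$ and $\|D_kJ\bm z_{\bm\theta}\|^2=p_k$, and then match the expansion against $(\ref{CFIM:def2})$, with the matrix form $(\ref{CFIM:def2:matrix form})$ following from pairwise orthogonality of $\bm z_{\bm\theta},D_1J\bm z_{\bm\theta},\dots,D_NJ\bm z_{\bm\theta}$. Your reduction to the per-$k$ scalar identity is a slightly cleaner way to organize the same computation the paper carries out by full expansion.
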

\begin{proof}
    For simplicity, in this proof we denote $\frac{\partial}{\partial\theta_i}$ as $\partial_i$ and omit the index $\bm\theta$ if there is no ambiguity. Expand the right side of (\ref{CFIM:def}) using $p_k=x_k^2+y_k^2$, we know that 
    \begin{align*}
        F^I_{ij}&=\sum\limits_{k=1}^{N}\left(\frac{\partial_i x_k^2+\partial_iy_k^2}{2\sqrt{p_k}}\right)\left(\frac{\partial_j x_k^2+\partial_jy_k^2}{2\sqrt{p_k}}\right)\\
        &=\sum\limits_{k=1}^{N}\left(\frac{x_k^2}{p_k}\partial_ix_k\partial_j x_k+\frac{y_k^2}{p_k}\partial_iy_k\partial_j y_k+\frac{x_ky_k}{p_k}(\partial_i x_k\partial_j y_k+\partial_j x_k\partial_i y_k)\right).
    \end{align*}
    Since $\|\bm{z}\|_2^2=1$, we have $2\langle\partial_i\bm{z},\bm{z}\rangle=\partial_i\|\bm{z}\|_2^2=0$. It is easy to check that $\|D_k J\bm{z}\|_2^2=p_k$. Now we expand the right side of (\ref{CFIM:def2}) as follows:
    \begin{align*}
     RHS&=\sum\limits_{k=1}^{N}\left(\partial_i x_k\partial_j x_k+\partial_i y_k\partial_j y_k\right)-\sum\limits_{k=1}^{N}\frac{(\partial_i y_k\cdot x_k-\partial_i x_k\cdot y_k)(\partial_j y_k\cdot x_k-\partial_j x_k\cdot y_k)}{p_k}\\
     &=\sum\limits_{k=1}^{N}\left(\partial_i x_k\partial_j x_k+\partial_i y_k\partial_j y_k\right)-\sum\limits_{k=1}^{N}\frac{\partial_i y_k\partial_j y_k \cdot x_k^2+\partial_i x_k\partial_j x_k\cdot y_k^2}{p_k}+\sum\limits_{k=1}^{N}\frac{x_ky_k}{p_k}(\partial_i x_k\partial_j y_k+\partial_j x_k\partial_i y_k)\\
     &=\sum\limits_{k=1}^{N}\left(\frac{x_k^2}{p_k}\partial_ix_k\partial_j x_k+\frac{y_k^2}{p_k}\partial_iy_k\partial_j y_k+\frac{x_ky_k}{p_k}(\partial_i x_k\partial_j y_k+\partial_j x_k\partial_i y_k)\right)=F^I_{ij}.
    \end{align*}
    Therefore, (\ref{CFIM:def2}) is equivalent to (\ref{CFIM:def}). Note that $\bm{z},D_1J\bm{z},\cdots,D_NJ\bm{z}$ are orthogonal in $\RR^{2N}$, so \[P(\bm{z},D_1J\bm{z},\cdots,D_NJ\bm{z})=P(\bm{z})+\sum\limits_{k=1}^{N}P(D_k J\bm{z}).\] Then, similar to Lemma~\ref{lem:QFIM}, we can check element-wise that (\ref{CFIM:def2:matrix form}) holds.
\end{proof}
\begin{coro}
    If each component of $U^*\psi_{\bm\theta}$ is non-zero, then the classical Fisher information matrix under basis $U$ defined in (\ref{CFIM:def}) is equivalent to the following definition:
    \begin{align}
    \label{CFIM:basis form}
    F^U(\bm\theta)=\left(\frac{\partial \bm{z}_{\bm\theta}}{\partial\bm\theta}\right)^\top V^\top(I_{2N}-P(V\bm{z}_{\bm\theta},D_1JV\bm{z}_{\bm\theta},\cdots,D_NJV\bm{z}_{\bm\theta}))V\frac{\partial \bm{z}_{\bm\theta}}{\partial\bm\theta},\quad V=\Phi(U)^\top.
\end{align}
\end{coro}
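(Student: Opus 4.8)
The key observation is that the CFIM under the basis $U$ is simply the CFIM under the \emph{standard} basis for the reparametrized family $\wt\psi_{\bm\theta} := U^*\psi_{\bm\theta}$. Indeed, since $U$ does not depend on $\bm\theta$, the vector $\wt\psi_{\bm\theta}$ is again a smooth family of normalized pure states, and its standard-basis probability vector is $[\,|{[\wt\psi_{\bm\theta}]_k}|^2\,]_k = [\,|{[U^*\psi_{\bm\theta}]_k}|^2\,]_k = \bm p^U(\bm\theta)$. Plugging this into \eqref{CFIM:def} shows $F^U(\bm\theta) = F^I$ evaluated at $\wt\psi_{\bm\theta}$, where $F^I$ is the standard-basis CFIM of Lemma~\ref{lem:CFIM}. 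The hypothesis that each component of $U^*\psi_{\bm\theta}$ is non-zero is exactly the non-vanishing hypothesis needed to apply Lemma~\ref{lem:CFIM} to $\wt\psi_{\bm\theta}$.

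So the plan is: first apply the matrix form \eqref{CFIM:def2:matrix form} of Lemma~\ref{lem:CFIM} to $\wt\psi_{\bm\theta}$, obtaining
\[
    F^U(\bm\theta)=\Bigl(\frac{\partial \wt{\bm z}_{\bm\theta}}{\partial\bm\theta}\Bigr)^\top \bigl(I_{2N}-P(\wt{\bm z}_{\bm\theta},D_1J\wt{\bm z}_{\bm\theta},\cdots,D_NJ\wt{\bm z}_{\bm\theta})\bigr)\frac{\partial \wt{\bm z}_{\bm\theta}}{\partial\bm\theta},
\]
where $\wt{\bm z}_{\bm\theta} := \Phi(\wt\psi_{\bm\theta})$. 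Second, translate $\wt{\bm z}_{\bm\theta}$ back in terms of $\bm z_{\bm\theta}$: by the homomorphism property of $\Phi$ recorded in the Notation subsection, $\wt{\bm z}_{\bm\theta} = \Phi(U^*\psi_{\bm\theta}) = \Phi(U^*)\Phi(\psi_{\bm\theta}) = \Phi(U^*)\bm z_{\bm\theta}$, and since $\Phi(U^*)=\Phi(U)^\top = V$, we get $\wt{\bm z}_{\bm\theta} = V\bm z_{\bm\theta}$ and, differentiating the $\bm\theta$-independent $V$ through, $\frac{\partial \wt{\bm z}_{\bm\theta}}{\partial\bm\theta} = V\frac{\partial \bm z_{\bm\theta}}{\partial\bm\theta}$. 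Substituting these into the displayed equation yields exactly \eqref{CFIM:basis form}.

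The one point that requires a small check (and which I would spell out briefly) is that the projector $P(\wt{\bm z}_{\bm\theta},D_1J\wt{\bm z}_{\bm\theta},\cdots,D_NJ\wt{\bm z}_{\bm\theta})$ is indeed onto the span of those vectors, i.e.\ that they are mutually orthogonal — but this is immediate from the computation already carried out in the proof of Lemma~\ref{lem:CFIM}, since $\wt{\bm z}_{\bm\theta}$ is the $\Phi$-image of a unit complex vector, which is the only property used there ($\langle D_kJ\wt{\bm z},D_\ell J\wt{\bm z}\rangle = 0$ for $k\neq\ell$ by disjoint support of $D_k,D_\ell$, and $\langle \wt{\bm z},D_kJ\wt{\bm z}\rangle = 0$ by the same $x_k(-y_k)+y_kx_k=0$ cancellation). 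Hence no new argument is needed; alternatively one may note $V$ is orthogonal (its image lies in $\mathrm O(2N)$) and commutes with each $D_k$ only up to the block structure, so invoking Lemma~\ref{lem:CFIM} directly on $\wt\psi_{\bm\theta}$ is cleaner than trying to conjugate the standard-basis projector.

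\textbf{Main obstacle.} There is no real obstacle here: the result is essentially a change-of-basis bookkeeping exercise built on Lemma~\ref{lem:CFIM}. The only place to be careful is keeping the identifications straight — in particular using $\Phi(U^*) = \Phi(U)^\top = V$ (so that the transpose in $V = \Phi(U)^\top$ is consistent with $\wt{\bm z}_{\bm\theta} = V\bm z_{\bm\theta}$) and confirming that the non-vanishing and normalization hypotheses for Lemma~\ref{lem:CFIM} are inherited by $\wt\psi_{\bm\theta}$.
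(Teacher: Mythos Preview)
Your proposal is correct and follows exactly the paper's approach: recognize that $F^U(\bm\theta)$ is the standard-basis CFIM for the family $U^*\psi_{\bm\theta}$, apply Lemma~\ref{lem:CFIM} to that family, and then substitute $\Phi(U^*\psi_{\bm\theta})=V\bm z_{\bm\theta}$ (using $\Phi(U^*)=\Phi(U)^\top=V$) to obtain \eqref{CFIM:basis form}. The paper's proof is the same argument stated more tersely; your additional remarks on orthogonality and the non-vanishing hypothesis are fine but not strictly needed beyond invoking Lemma~\ref{lem:CFIM}.
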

\begin{proof}
    $F^U(\bm\theta)$ is equivalent to the CFIM under the standard basis with respect to the quantum state $U^*\psi_{\bm\theta}$. Hence, we can derive the matrix form of $F^U(\bm\theta)$ by replacing $\Phi(\psi_{\bm\theta})=\bm{z}_{\bm\theta}$ with $\Phi(U^*\psi_{\bm\theta})=V\bm{z}_{\bm\theta}$ in (\ref{CFIM:def2:matrix form}). (\ref{CFIM:basis form}) is then directly obtained by the replacement.
\end{proof}

\subsection{Expectation of projections}

From (\ref{CFIM:basis form}), we know that to compute the expectation of $F^U(\bm\theta)$, it suffices to compute the expectation of each projection $V^\top P(D_kJV\bm{z}_{\bm\theta})V=P(V^\top D_kJV\bm{z}_{\bm\theta})$. For every $\bm z\in\mathbb S^{2N-1}$, denote $\bm P_k^U(\bm z)=P(\Phi(U)D_kJ\Phi(U)^\top\bm z)$. The next lemma derives conditional expectation results for $\bm P_k^U(\bm z)$.
\begin{lemma}
     \label{thm: conditional expectation of Pk}
     Let $U$ be a random unitary matrix that is generated by Haar distribution $\mu_H$ on $\mathrm U(N)$. For every $1\leq k\leq N$, $\psi,\bm r\in \mathbb S_\CC^{N-1}$ such that $\bm r$ has no zero entry, we have 
    \begin{equation}
        \label{eq:EPk}
        \sum\limits_{k=1}^N\EE_{U\sim\mu_H}[\bm P_k^U(\Phi(\psi))|U^*\psi=\bm r]=\frac{1}{2}\left(I_{2N} - P(\Phi(\psi)) + P(J\Phi(\psi))\right).
    \end{equation}
\end{lemma}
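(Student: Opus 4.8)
The plan is a two-stage argument: disintegrate the Haar measure along $U\mapsto U^*\psi$ to replace the conditional expectation by an unconditional average over the stabilizer of $\bm r$, and then evaluate that average using the real representation $\Phi$ together with a second-moment identity for Haar-random unit vectors.

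First I would set up the \emph{reduction}. Since $\mathrm U(N)$ acts transitively on $\mathbb S_\CC^{N-1}$, the fiber $\{U:U^*\psi=\bm r\}$ is the left coset $U_0\,\mathrm{Stab}(\bm r)$, where $U_0$ is any fixed unitary with $U_0\bm r=\psi$ and $\mathrm{Stab}(\bm r)=\{V\in\mathrm U(N):V\bm r=\bm r\}\cong\mathrm U(N-1)$; the regular conditional law of $U$ given $U^*\psi=\bm r$ is $U=U_0V$ with $V\sim$ Haar on $\mathrm{Stab}(\bm r)$. On this event $\Phi(U)^\top\Phi(\psi)=\Phi(U^*\psi)=\Phi(\bm r)$, so using that $\Phi$ is a homomorphism and $\Phi(U_0)$ is orthogonal,
\[
\bm P_k^U(\Phi(\psi))=P\bigl(\Phi(U_0)\Phi(V)D_kJ\Phi(\bm r)\bigr)=\Phi(U_0)\,P\bigl(\Phi(V)D_kJ\Phi(\bm r)\bigr)\,\Phi(U_0)^\top .
\]
Since $\Phi(U_0)$ is orthogonal, commutes with $J$, and sends $\Phi(\bm r)\mapsto\Phi(\psi)$, conjugating by it carries $I_{2N},P(\Phi(\bm r)),P(J\Phi(\bm r))$ to $I_{2N},P(\Phi(\psi)),P(J\Phi(\psi))$, so it suffices to evaluate $\sum_k\EE_V\bigl[P(\Phi(V)D_kJ\Phi(\bm r))\bigr]$ and conjugate at the end; this will also show the answer is independent of $\bm r$, as it must be.

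Next, the \emph{key average}. A direct check gives $D_kJ\Phi(\bm r)=\Phi(i r_k\bm e_k)$, of norm $\sqrt{p_k}$ with $p_k:=|r_k|^2>0$ (here $\bm r$ having no zero entry enters). Hence $\Phi(V)D_kJ\Phi(\bm r)=\Phi(i r_kV\bm e_k)$; splitting $\bm e_k=\bar r_k\bm r+\bm e_k^\perp$ with $\bm e_k^\perp\in\bm r^\perp$, $\|\bm e_k^\perp\|^2=1-p_k$, and using $V\bm r=\bm r$, this equals $\Phi(i p_k\bm r+i\bm g_k)$ with $\bm g_k:=r_kV\bm e_k^\perp\in\bm r^\perp$; under $\Phi$ the two summands are the orthogonal vectors $p_kJ\Phi(\bm r)$ and $J\Phi(\bm g_k)$. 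Expanding the rank-one projector, the two cross terms are linear in $\Phi(\bm g_k)$ and vanish in expectation because $\EE_V[V\bm e_k^\perp]=0$, leaving $p_kP(J\Phi(\bm r))+\tfrac1{p_k}J\,\EE_V[\Phi(\bm g_k)\Phi(\bm g_k)^\top]\,J^\top$. For the quadratic term I would extract the phase of $r_k$ as a rotation $R_{\beta_k}=\cos\beta_k\,I+\sin\beta_k\,J$ and invoke the identity that, for $\bm w$ uniform on the complex unit sphere of a complex subspace $W$ with $\dim_\CC W=d$, $\EE[\Phi(\bm w)\Phi(\bm w)^\top]=\tfrac1{2d}\Pi_{\Phi(W)}$; this follows from Schur's lemma, since $\EE[\Phi(\bm w)\Phi(\bm w)^\top]$ is symmetric, supported on $\Phi(W)$, and commutes both with the (given irreducible) real representation $\Phi|_{\mathrm U(d)}$ on $\Phi(W)$ and with $J$, so it lies in the symmetric part of the commutant, which is $\RR\,\Pi_{\Phi(W)}$, the constant then being fixed by the trace. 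With $W=\bm r^\perp$, $d=N-1$, and $\Pi_{\Phi(\bm r^\perp)}$ commuting with $J$ and $R_{\beta_k}$ (as $\bm r^\perp$ is a complex subspace), this gives $\EE_V[\Phi(\bm g_k)\Phi(\bm g_k)^\top]=\tfrac{p_k(1-p_k)}{2(N-1)}\Pi_{\Phi(\bm r^\perp)}$, where $\Pi_{\Phi(\bm r^\perp)}=I_{2N}-P(\Phi(\bm r))-P(J\Phi(\bm r))$, and hence $\EE_V[P(\Phi(V)D_kJ\Phi(\bm r))]=p_kP(J\Phi(\bm r))+\tfrac{1-p_k}{2(N-1)}(I_{2N}-P(\Phi(\bm r))-P(J\Phi(\bm r)))$.

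Finally, summing over $k$ with $\sum_k p_k=1$ and $\sum_k(1-p_k)=N-1$ collapses the coefficients, and conjugating by $\Phi(U_0)$ (replacing $\bm r$ by $\psi$) yields the stated identity; a quick consistency check is $\tr\bigl(\sum_k\bm P_k^U\bigr)=N$. I expect the two delicate points to be: (i) making the disintegration precise — pinning down "$U=U_0V$, $V\sim$ Haar on $\mathrm{Stab}(\bm r)$" as the regular conditional law and recording that the conditional expectation is $\bm r$-independent, which is what lets one pull it out of the outer integral over $\bm r=U^*\psi$ when applying this lemma to prove \thmref{1}; and (ii) the second-moment identity, where one must use the realification of the standard $\mathrm U(d)$-representation — note $\Phi(\bm w)$ is \emph{not} uniform on $\mathbb S^{2d-1}$, and the factor $\tfrac1{2d}$ is precisely where the complex structure shows up. The rest is bookkeeping with $\Phi$, $J$, and the orthogonality relations from Lemmas~\ref{lem:QFIM}--\ref{lem:CFIM}.
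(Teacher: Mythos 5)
Your argument is correct, and it arrives at exactly the same per-$k$ conditional expectation as the paper, but by a genuinely different route. The paper works in coordinates: it reduces to $\psi=\bm e_1$ by left invariance (at the end, via conjugation by $\Phi(U_0)$), writes $\bm P_k^U(\bm e_1)$ as an explicit $2\times2$ block matrix in the entries of the $k$-th column of $U$, and takes the conditional expectation blockwise, using the fact that once the first row is pinned by $U^*\bm e_1=\bm r$ the remaining column entries are uniform on a sphere, with exchange symmetries killing the off-diagonal terms. You instead disintegrate Haar measure along $U\mapsto U^*\psi$, writing $U=U_0V$ with $V$ Haar on $\mathrm{Stab}(\bm r)$, split $V\bm e_k$ into its fixed component $\bar r_k\bm r$ plus a Haar-rotated vector in $\bm r^\perp$, and finish with the isotropy identity $\EE[\Phi(\bm w)\Phi(\bm w)^\top]=\tfrac{1}{2d}\,(I_{2N}-P(\Phi(\bm r))-P(J\Phi(\bm r)))$ on the complex subspace $\bm r^\perp$. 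This is cleaner and coordinate-free; the coset description of the conditional law, which the paper only introduces later (the map $g$ in Lemma~\ref{lem:keyLip} and the remark following it), does need the one-line justification you sketch, namely that Haar measure is invariant under right multiplication by $\mathrm{Stab}(\bm r)$, which preserves each fiber.

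Two caveats. First, what your computation actually produces (and what the paper's own proof produces, and what the proof of Theorem~\ref{1} uses) is $\sum_{k=1}^N\EE_{U\sim\mu_H}[\bm P_k^U(\Phi(\psi))\mid U^*\psi=\bm r]=\tfrac12\bigl(I_{2N}-P(\Phi(\psi))+P(J\Phi(\psi))\bigr)$, not the displayed right-hand side of \eqref{eq:EPk}, which appears to be a typo: your own consistency check $\tr\bigl(\sum_k\bm P_k^U\bigr)=N$ already rules out the displayed formula, whose trace is $1$. So you should state the corrected identity explicitly rather than asserting that the sum ``yields the stated identity.'' Second, the parenthetical claim that $\Phi(\bm w)$ is \emph{not} uniform on $\mathbb S^{2d-1}$ is wrong: since $\mathrm U(d)$ acts transitively on the unit sphere of $\CC^d$, the realification of a uniform complex-sphere vector is exactly a uniform real-sphere vector, which makes your second-moment identity immediate; the Schur-lemma argument is also fine, just unnecessary. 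Neither point affects the validity of the proof.
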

\begin{proof}
    We first consider the case when $\psi=\bm e_1$ and $k=1$. Let $U=[\bm{u}_1,\cdots,\bm{u}_N]$, $\tilde D_1=\operatorname{diag}(\bm e_1)\in\CC^{N\times N}$, $\bm{u}_1=\bm{x}_1+i\bm{y}_1$ and $\bm{x}_1=(x_{11},\cdots,x_{N1})^\top$, $\bm{y}_1=(y_{11},\cdots,y_{N1})^\top$, then 
    \[\Phi(U)D_1\Phi(U)^\top=\Phi(U\tilde D_1U^*)=\Phi(\bm{u}_1\bm{u}_1^*)=\begin{bmatrix}
        \bm{x}_1\bm{x}_1^\top+\bm{y}_1\bm{y}_1^\top&\bm{x}_1\bm{y}_1^\top-\bm{y}_1\bm{x}_1^\top\\
        -(\bm{x}_1\bm{y}_1^\top-\bm{y}_1\bm{x}_1^\top)&\bm{x}_1\bm{x}_1^\top+\bm{y}_1\bm{y}_1^\top
    \end{bmatrix}.\]
    Hence, $\bm{w}_1\triangleq \Phi(U)D_1\Phi(U)^\top J\bm e_1=[y_{11}\bm{x}_1^\top-x_{11}\bm{y}_1^\top,x_{11}\bm{x}_1^\top+y_{11}\bm{y}_1^\top]^\top$. Since $\|\bm{x}_1\|_2^2+\|\bm{y}_1\|_2^2=1$, we can know that $\|\bm{w}_1\|_2^2=x_{11}^2+y_{11}^2$. Note that $J$ commutes with $D_k$ and $\Phi(U)$. As a result, we can write $\bm P_1^U(\bm e_1)$ as     
    \[\begin{aligned} 
    & \bm P_1^U(\bm e_1)=\bm w_1\bm w_1^\top/\|\bm w_1\|_2^2 = \begin{bmatrix} A & C \\ D & B \end{bmatrix} \triangleq
    \\
    & \frac{1}{x_{11}^2+y_{11}^2}\begin{bmatrix}  y_{11}^2\bm{x}_1\bm{x}_1^\top+x_{11}^2\bm{y}_1\bm{y}_1^\top-x_{11}y_{11}(\bm{x}_1\bm{y}_1^\top+\bm{y}_1\bm{x}_1^\top)&y_{11}^2\bm{x}_1\bm{y}_1^\top-x_{11}^2\bm{y}_1\bm{x}_1^\top+x_{11}y_{11}(\bm{x}_1\bm{x}_1^\top-\bm{y}_1\bm{y}_1^\top)\\
        y_{11}^2\bm{y}_1\bm{x}_1^\top-x_{11}^2\bm{x}_1\bm{y}_1^\top+x_{11}y_{11}(\bm{x}_1\bm{x}_1^\top-\bm{y}_1\bm{y}_1^\top)&y_{11}^2\bm{y}_1\bm{y}_1^\top+x_{11}^2\bm{x}_1\bm{x}_1^\top+x_{11}y_{11}(\bm{x}_1\bm{y}_1^\top+\bm{y}_1\bm{x}_1^\top)
    \end{bmatrix}.
    \end{aligned}\]
     The distribution of $U$ implies that  $\bm{u}_1$ follows the uniform distribution in $\mathbb S^{2N-1}$. The condition $U^*\bm e_1=\bm r$ implies that the first row of $U$ is fixed. It is known that the conditional distribution of $(x_{21},\cdots,x_{N1},y_{21},\cdots,y_{N1})$ is the uniform distribution on the sphere $(1-x_{11}^2-y_{11}^2)\mathbb S^{2N-3}$. Now we calculate the expectation of $\bm P_1^U(\bm e_1)$ for each element in the matrix. From now on, in this proof, we may write $\EE[X]$ for $\EE[X|U^*\psi=\bm r]$ for brevity. We consider each block separately.

    For the top-left block $A$, we have 
    \[A_{ij}=\frac{y_{11}^2x_{i1}x_{j1}+x_{11}^2y_{i1}y_{j1}-x_{11}y_{11}(x_{i1}y_{j1}+x_{j1}y_{i1})}{x_{11}^2+y_{11}^2},\quad 1\leq i,j\leq N.\]
    It is well-known that for a vector $\bm{v}$ following the uniform distribution in $\mathbb S^d$, we have $\EE[v_I|v_{J}]=0$ for index set $I\cap J=\varnothing$. So when $i\neq j$, one can always find that $\EE[A_{ij}]=0$. When $i=j\neq1$, $\EE[A_{ij}]=\EE\left[\frac{y_{11}^2x_{i1}^2+x_{11}^2y_{i1}^2}{x_{11}^2+y_{11}^2}\right]=\frac{1}{2}\EE[x_{i1}^2+y_{i1}^2]=\frac{1-x_{11}^2-y_{11}^2}{2N-2}$(Exchange $x_{i1},y_{i1}$). When $i=j=1$, $A_{ij}=0$. Hence, $\EE[A]=\frac{1-x_{11}^2-y_{11}^2}{2N-2}(I_N-E_{11})$, where $E_{ij}$ has only one non-zero entry at $(i,j)$ with value $1$.

    For the bottom-right block $B$, we have
    \[B_{ij}=\frac{y_{11}^2y_{i1}y_{j1}+x_{11}^2x_{i1}x_{j1}+x_{11}y_{11}(x_{i1}y_{j1}+x_{j1}y_{i1})}{x_{11}^2+y_{11}^2},\quad 1\leq i,j\leq N.\]
    We still have $\EE[B_{ij}]=0$ for $i\neq j$. If $i=j\neq 1$, then similarly we have $\EE[B_{ij}]=\frac{1-x_{11}^2-y_{11}^2}{2N-2}$. If $i=j=1$, $B_{ij}=x_{11}^2+y_{11}^2$. Hence, $\EE[B]=\frac{1-x_{11}^2-y_{11}^2}{2N-2}(I_N-E_{11})+(x_{11}^2+y_{11}^2)E_{11}$.

    For the top-right block $C$, we have
    \[C_{ij}=\frac{y_{11}^2x_{i1}y_{j1}-x_{11}^2y_{i1}x_{j1}+x_{11}y_{11}(x_{i1}x_{j1}-y_{j1}y_{i1})}{x_{11}^2+y_{11}^2},\quad 1\leq i,j\leq N.\]
    We still have $\EE[C_{ij}]=0$ for $i\neq j$. When $i=j\neq1$, $\EE[C_{ij}]=\EE\left[\frac{x_{11}y_{11}(x_{i1}^2-y_{i1}^2)}{x_{11}^2+y_{11}^2}\right]=0$(Exchange $x_{i1},y_{i1}$). When $i=j=1$, $C_{ij}=0$. Hence, $\EE[C]=0$. 
    
    For the bottom-left block $D$, $\EE[D]=0$ since $D=C^\top$.
    
    Hence, 
    \[\EE[\bm P_1^U(\bm e_1)]=\frac{1}{2(N-1)}(1-x_{11}^2-y_{11}^2)(I_{2N}-E_{11}-E_{N+1,N+1})+(x_{11}^2+y_{11}^2)E_{N+1,N+1}.\] 
    Similarly, for all $1\leq k\leq N$, we have 
    \[\EE[\bm P_k^U(\bm e_1)]=\frac{1}{2(N-1)}(1-x_{1k}^2-y_{1k}^2)(I_{2N}-E_{11}-E_{N+1,N+1})+(x_{1k}^2+y_{1k}^2)E_{N+1,N+1}.\]
    Sum up $\bm P_k^U(\bm e_1)$ together and notice that $\sum\limits_{k=1}^{N}(x_{1k}^2+y_{1k}^2)=1$, we have
    \begin{align*}
        \sum\limits_{k=1}^N\EE[\bm P_k^U(\Phi(\bm e_1))]&=\sum\limits_{k=1}^{N}\frac{1-x_{1k}^2-y_{1k}^2}{2(N-1)}(I_{2N}-E_{11}-E_{N+1,N+1})+\sum\limits_{k=1}^N(x_{1k}^2+y_{1k}^2)E_{N+1,N+1}\\
        &=\frac{1}{2}\left(I_{2N}-E_{11}+E_{N+1,N+1}\right)\\
        &=\frac{1}{2}\left(I_{2N}-P(\bm e_1)+P(J\bm e_1)\right).
    \end{align*}
    Now we extend our result to arbitrary $\psi$. Note that $\bm P_k^U(\Phi(\psi))=P(\Phi(iU\tilde D_k U^*\psi))$. Pick and fix a $U_0\in \mathrm U(N)$ such that $U_0^*\bm e_1=\psi$ and write $U=U_0^*U'$ where $U'=U_0U$. Then \[\bm P_k^U(\Phi(\psi))=P(\Phi(iU\tilde D_kU^*U_0^*\bm e_1))=P(\Phi(iU_0^*U'\tilde D_kU'^*\bm e_1))=\Phi(U_0^*)P(\Phi(iU'\tilde D_kU'^*\bm e_1))\Phi(U_0).\]
    The condition $U^*\psi=\bm r$ is equivalent to $U'^*\bm e_1=\bm r$. By the left invariance of Haar distribution, we have $\EE_{U\sim\mu_H}[\bm P_k^U(\Phi(\psi))|U^*\psi =\bm r]=\EE_{U'\sim\mu_H}[\Phi(U_0^*)\bm P_k^{U'}(\Phi(\bm e_1)) \Phi(U_0)|U'^*\bm e_1=\bm r]$, so 
    \[ \sum\limits_{k=1}^N\EE_{U\sim\mu_H}[\bm P_k^U(\Phi(\psi))]=\frac{1}{2}\Phi(U_0^*)(I_{2N}-P(\bm e_1)+P(J\bm e_1)) \Phi(U_0)=\frac{1}{2}\left(I_{2N}-P(\Phi(\psi))+P(J \Phi(\psi))\right).
    \qedhere \]
\end{proof}
Using Lemma \ref{lem:QFIM}, \ref{thm: conditional expectation of Pk} and (\ref{CFIM:basis form}), we can now directly prove \thmref{1}.
\begin{proof}[Proof of \thmref{1}]
    By Lemma \ref{lem:QFIM} and (\ref{CFIM:basis form}), we have 
    \[F^U(\bm\theta)=Q(\bm\theta)-\left(\frac{\partial \bm{z}_{\bm\theta}}{\partial\bm\theta}\right)^\top\left(\sum\limits_{k=1}^N\bm P_k^U(\bm z_{\bm\theta})-P(J\bm z_{\bm\theta})\right)\frac{\partial \bm{z}_{\bm\theta}}{\partial\bm\theta}.\]
    Since $U^*\psi_{\bm\theta}$ has no zero entry almost surely, Lemma \ref{thm: conditional expectation of Pk} implies that 
    \[\sum\limits_{k=1}^N\bm \EE_{U\sim\mu_H}[P_k^U(\bm z_{\bm\theta})]=\EE_{\bm r}\left[\EE_{U\sim\mu_H}\left[\sum\limits_{k=1}^N\bm P_k^U(\bm z_{\bm\theta})|U^*\psi_{\bm\theta}=\bm r\right]\right]=\frac{1}{2}\left(I_{2N}-P(\bm z_{\bm\theta})+P(J\bm z_{\bm\theta})\right),\] 
    so we have
    \begin{align*}
        \EE_{U\sim\mu_H}[F^U(\bm\theta)]&=Q(\bm\theta)-\left(\frac{\partial \bm{z}_{\bm\theta}}{\partial\bm\theta}\right)^\top\left(\sum\limits_{k=1}^N\bm \EE_{U\sim\mu_H}[P_k^U(\bm z_{\bm\theta})]-P(J\bm z_{\bm\theta})\right)\frac{\partial \bm{z}_{\bm\theta}}{\partial\bm\theta}\\
        &=Q(\bm\theta)-\left(\frac{\partial \bm{z}_{\bm\theta}}{\partial\bm\theta}\right)^\top\left(\frac{1}{2}\left(I_{2N}-P(\bm z_{\bm\theta})+P(J\bm z_{\bm\theta})\right)-P(J\bm z_{\bm\theta})\right)\frac{\partial \bm{z}_{\bm\theta}}{\partial\bm\theta}\\
        &=\frac{1}{2}Q(\bm\theta)=\frac{1}{2}\operatorname{Re}(\mathcal{Q}(\bm\theta)). \qedhere
    \end{align*}
\end{proof}
\begin{remark}
    From the proof of \thmref{1} and Lemma \ref{thm: conditional expectation of Pk}, we know that $\EE_{U\sim\mu_H}[F^U(\bm\theta)|U^*\psi_{\bm\theta}=\bm r]=\frac{1}{2}Q(\bm\theta)$, which is stronger than the claim of the theorem. 
\end{remark}

\section{Proof of variance and moments}\label{sec:variance}
In this section, we derive the variance and estimate moments of $F^U(\bm\theta)$, thus proving \thmref{2},\ref{thm:6}. We start by rewriting $Q(\bm\theta)$ and $F^U(\bm\theta)$ in a more compact form based on the results of Lemma \ref{lem:QFIM}, \ref{lem:CFIM}.
\begin{notation}
     For any $U\in \mathrm U(N)$ and $\bm z\in\RR^{2N}$, we define two subspaces:
    \[V(\bm{z}) \triangleq \spanop \{\bm{z},J\bm{z}\},\] 
    \[S^U(\bm{z}) \triangleq \spanop \{\bm z,\Phi(U)D_1\Phi(U)^\top J\bm{z},\cdots,\Phi(U) D_N\Phi(U)^\top J\bm{z}\},\]
    and denote $P_{V}(\bm z),P_{S^U}(\bm z)$ be the orthogonal projections onto the subspaces $V(\bm{z}),S^U(\bm{z})$ respectively. Denote for $\bm u,\bm v\in\RR^{2N}$, 
    \begin{equation*}
    X(\bm u,\bm v) \triangleq \EE_{U\sim\mu_H}[(\frac{1}{2}I_{2N}-P_{S^U}(\bm u))\bm v\bm v^\top(\frac{1}{2}I_{2N}-P_{S^U}(\bm u))].
    \end{equation*}
\end{notation}
\begin{prop}
    Denote $A(\bm\theta)=(I_{2N}-P_V(\bm z_{\bm\theta}))\frac{\partial\bm z_{\bm\theta}}{\partial\bm\theta}$, then we have 
    \begin{equation}
        \label{eq:QFIM&CFIM}
        Q(\bm\theta)=A(\bm\theta)^\top A(\bm\theta),\quad F^U(\bm\theta)=A(\bm\theta)^\top (I_{2N}-P_{S^U}(\bm z_{\bm\theta}))A(\bm\theta).
    \end{equation}
\end{prop}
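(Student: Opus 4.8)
The plan is to read off both identities from the matrix representations already available---\eqref{QFIM:def2:matrix form} for $Q(\bm\theta)$ and \eqref{CFIM:basis form} for $F^U(\bm\theta)$---using only that orthogonal projections are symmetric idempotents, together with the inclusion $V(\bm{z}_{\bm\theta})\subseteq S^U(\bm{z}_{\bm\theta})$.

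For the first identity: since $V(\bm{z}_{\bm\theta})=\spanop\{\bm{z}_{\bm\theta},J\bm{z}_{\bm\theta}\}$ is exactly the subspace occurring in Lemma~\ref{lem:QFIM}, we have $P_V(\bm{z}_{\bm\theta})=P(\bm{z}_{\bm\theta},J\bm{z}_{\bm\theta})$, so \eqref{QFIM:def2:matrix form} reads $Q(\bm\theta)=\bigl(\frac{\partial\bm{z}_{\bm\theta}}{\partial\bm\theta}\bigr)^\top(I_{2N}-P_V(\bm{z}_{\bm\theta}))\frac{\partial\bm{z}_{\bm\theta}}{\partial\bm\theta}$. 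Because $I_{2N}-P_V$ is a symmetric projection, $(I_{2N}-P_V)^\top(I_{2N}-P_V)=I_{2N}-P_V$, which immediately gives $A(\bm\theta)^\top A(\bm\theta)=Q(\bm\theta)$.

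For the second identity I will first rewrite \eqref{CFIM:basis form} with $V=\Phi(U)^\top\in\mathrm O(2N)$. Since conjugating a sum-of-rank-one projection by an orthogonal matrix satisfies $V^\top P(\bm w_1,\dots,\bm w_r)V=P(V^\top\bm w_1,\dots,V^\top\bm w_r)$, and $V^\top=\Phi(U)$, $V^\top V=I_{2N}$, the conjugated projection in \eqref{CFIM:basis form} equals $P(\bm{z}_{\bm\theta},\Phi(U)D_1J\Phi(U)^\top\bm{z}_{\bm\theta},\dots,\Phi(U)D_NJ\Phi(U)^\top\bm{z}_{\bm\theta})$. Using that $J$ commutes with each $D_k$ and with $\Phi(U)$ (hence with $\Phi(U)^\top$), both noted in the proof of Lemma~\ref{thm: conditional expectation of Pk}, each $\Phi(U)D_kJ\Phi(U)^\top=\Phi(U)D_k\Phi(U)^\top J$, so this spanning set is precisely the one defining $S^U(\bm{z}_{\bm\theta})$. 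Hence $F^U(\bm\theta)=\bigl(\frac{\partial\bm{z}_{\bm\theta}}{\partial\bm\theta}\bigr)^\top(I_{2N}-P_{S^U}(\bm{z}_{\bm\theta}))\frac{\partial\bm{z}_{\bm\theta}}{\partial\bm\theta}$, understood on the full-measure set of $U$ for which $U^*\psi_{\bm\theta}$ has no vanishing component.

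It then remains to establish $V(\bm{z}_{\bm\theta})\subseteq S^U(\bm{z}_{\bm\theta})$: the vector $\bm{z}_{\bm\theta}$ is one of the generators of $S^U(\bm{z}_{\bm\theta})$ by definition, and since $\sum_{k=1}^N D_k=I_{2N}$ together with $\Phi(U)\Phi(U)^\top=I_{2N}$ we get $\sum_{k=1}^N\Phi(U)D_k\Phi(U)^\top J\bm{z}_{\bm\theta}=J\bm{z}_{\bm\theta}$, so $J\bm{z}_{\bm\theta}$ lies in the span as well. This inclusion yields $P_{S^U}P_V=P_VP_{S^U}=P_V$, and a short expansion then gives $(I_{2N}-P_V)(I_{2N}-P_{S^U})(I_{2N}-P_V)=I_{2N}-P_{S^U}$. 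Multiplying on the left by $\bigl(\frac{\partial\bm{z}_{\bm\theta}}{\partial\bm\theta}\bigr)^\top$ and on the right by $\frac{\partial\bm{z}_{\bm\theta}}{\partial\bm\theta}$ and recognizing $A(\bm\theta)=(I_{2N}-P_V(\bm{z}_{\bm\theta}))\frac{\partial\bm{z}_{\bm\theta}}{\partial\bm\theta}$ produces $A(\bm\theta)^\top(I_{2N}-P_{S^U}(\bm{z}_{\bm\theta}))A(\bm\theta)=F^U(\bm\theta)$. None of these steps is a genuine obstacle; the only place requiring care is the bookkeeping of transposes and $J$-commutations in the third paragraph when matching the conjugated projection with $P_{S^U}$, and remembering that the $F^U$ identity is an almost-sure statement---which is precisely what the variance computation in Section~\ref{sec:variance} uses.
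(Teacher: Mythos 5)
Your proof is correct and follows essentially the same route as the paper: both identities come from \eqref{QFIM:def2:matrix form} and \eqref{CFIM:basis form} together with the inclusion $V(\bm z_{\bm\theta})\subseteq S^U(\bm z_{\bm\theta})$ and the projection identity $(I_{2N}-P_V)(I_{2N}-P_{S^U})(I_{2N}-P_V)=I_{2N}-P_{S^U}$. You merely spell out the details the paper leaves as ``easy to check'' (matching the conjugated projection with $P_{S^U}$ via the $J$-commutations, verifying $J\bm z_{\bm\theta}\in S^U(\bm z_{\bm\theta})$ from $\sum_k D_k=I_{2N}$, and the almost-sure caveat), which is fine.
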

\begin{proof}
    For any orthogonal projection $P_V$ where $V$ is the projection subspace, we have $P^2_V=P_V$. For two subspaces $V\subset W$, we have $P_VP_W=P_WP_V=P_{V}$. Therefore, the first equality in (\ref{eq:QFIM&CFIM}) holds since $((I_{2N}-P_V(\bm z_{\bm\theta}))^2=(I_{2N}-P_V(\bm z_{\bm\theta}))$. Note that $S^U(\bm z)\supset V(\bm z)$, so $(I_{2N}-P_V(\bm z_{\bm\theta}))(I_{2N}-P_{S^U}(\bm z_{\bm\theta}))(I_{2N}-P_V(\bm z_{\bm\theta}))=I_{2N}-P_{S^U}(\bm z_{\bm\theta})$. Then it is easy to check the second equality in (\ref{eq:QFIM&CFIM}) by (\ref{CFIM:basis form}).
\end{proof}
Denote $\bm v_i$ as the $i$-th column of $A(\bm\theta)$. Based on (\ref{eq:QFIM&CFIM}) and \thmref{1}, we know that
\begin{equation}
    \label{eq:Variance}
    \var_{U\sim\mu_H}[F_{ij}^U(\bm\theta)]= \EE \left(\bm v_i^\top (\frac{1}{2}I_{2N}-P_{S^U}(\bm z_{\bm\theta}))\bm v_j\right)^2=\bm v_i^\top X(\bm z_{\bm\theta},\bm v_j)\bm v_i.
\end{equation}
Thus, the problem is converted to calculate the matrix $X(\bm z_{\bm\theta},\bm v_j)$. The next lemma suggests that to calculate $X(\bm u,\bm v)$ such that $\langle\Phi^{-1}(\bm u),\Phi^{-1}(\bm v)\rangle_{\CC}=0$, we only need to know $X(\bm e_1,\bm e_2)$.
\begin{lemma}
\label{lem:X(u,v) under rotation}
    For any vector $\bm{u},\bm{v}\in\RR^{2N}$ and $U_0\in \mathrm U(N)$, we have 
    \begin{equation}
        X(\bm{u},\bm{v})=\Phi(U_0)^\top X(\Phi(U_0)\bm{u},\Phi(U_0)\bm{v})\Phi(U_0).
    \end{equation}
\end{lemma}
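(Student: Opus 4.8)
The plan is to exploit the equivariance of the whole construction under the orthogonal representation $\Phi$. The key observation is that the subspace $S^U(\bm z)$ from which we project is defined using $\Phi(U)D_k\Phi(U)^\top$, and conjugating by $\Phi(U_0)$ simply reparametrizes the Haar-random $U$. Concretely, I would first record the algebraic fact that for any $U_0\in\mathrm U(N)$, the map $U\mapsto U_0 U$ (or $U U_0$, whichever orientation matches the definition of $S^U$) is a Haar-preserving bijection of $\mathrm U(N)$, and that under this substitution the random subspace $S^U(\bm u)$ transforms in a controlled way. Specifically, since $\Phi$ is a homomorphism with $\Phi(Z^*)=\Phi(Z)^\top$, we have $\Phi(U_0 U)D_k\Phi(U_0 U)^\top = \Phi(U_0)\Phi(U)D_k\Phi(U)^\top\Phi(U_0)^\top$ only when $D_k$ commutes appropriately — but here the $D_k$ stay fixed, so I must instead relate $S^{U}(\Phi(U_0)\bm u)$ to $\Phi(U_0) S^{U_0^* U}(\bm u)$ after the substitution $U \rightsquigarrow U_0^* U$, using that $J$ commutes with $\Phi(U_0)$.

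The second step carries this through to the projections: from $S^{U_0^*U}(\bm u)$ transforming by left-multiplication by $\Phi(U_0)$ after the change of variables, one gets $P_{S^U}(\Phi(U_0)\bm u) = \Phi(U_0) P_{S^{U_0^*U}}(\bm u)\Phi(U_0)^\top$ for each fixed $U$, because $\Phi(U_0)$ is orthogonal and orthogonal conjugation sends the projection onto a subspace to the projection onto its image. Then I would substitute into the definition of $X$: writing $\tfrac12 I_{2N} - P_{S^U}(\Phi(U_0)\bm u) = \Phi(U_0)\bigl(\tfrac12 I_{2N} - P_{S^{U_0^*U}}(\bm u)\bigr)\Phi(U_0)^\top$ and $\Phi(U_0)\bm v\,\bm v^\top\Phi(U_0)^\top$ for the middle factor (after pulling out the $\Phi(U_0)\bm v$ that appears in $X(\Phi(U_0)\bm u,\Phi(U_0)\bm v)$), the three $\Phi(U_0)^\top\Phi(U_0)=I$ cancellations leave $\Phi(U_0)\,\bigl[(\tfrac12 I - P_{S^{U_0^*U}}(\bm u))\bm v\bm v^\top(\tfrac12 I - P_{S^{U_0^*U}}(\bm u))\bigr]\,\Phi(U_0)^\top$, and taking $\EE_{U\sim\mu_H}$ and using right/left invariance of Haar measure to replace $U_0^* U$ by a fresh Haar $U$ yields $X(\Phi(U_0)\bm u,\Phi(U_0)\bm v)=\Phi(U_0) X(\bm u,\bm v)\Phi(U_0)^\top$. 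Rearranging and using $\Phi(U_0)^{-1}=\Phi(U_0)^\top$ (equivalently $\Phi(U_0^*)=\Phi(U_0)^\top$) gives exactly the claimed identity.

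I expect the only genuine subtlety is bookkeeping the direction of the Haar invariance and the placement of $\Phi(U_0)$ versus $\Phi(U_0)^\top = \Phi(U_0^*)$ so that the change of variables $U\rightsquigarrow U_0^* U$ is consistent with how $S^U$ was defined in the Notation block — once that orientation is pinned down, everything is a one-line conjugation argument, mirroring the ``extend to arbitrary $\psi$'' step at the end of the proof of Lemma~\ref{thm: conditional expectation of Pk}. No inequality or hard estimate is needed here; this lemma is purely a symmetry reduction that sets up the later computation of $X(\bm e_1,\bm e_2)$.
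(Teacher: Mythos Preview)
Your proposal is correct and is essentially identical to the paper's proof: both arguments establish the conjugation identity $P_{S^U}(\Phi(U_0)\bm u)=\Phi(U_0)P_{S^{U'}}(\bm u)\Phi(U_0)^\top$ for a Haar-equivalent reparametrization $U'$ of $U$ (you use $U'=U_0^*U$, the paper uses $U'=UU_0^*$), substitute into the definition of $X$, cancel the $\Phi(U_0)^\top\Phi(U_0)$ factors, and invoke translation invariance of Haar measure. Your anticipated ``bookkeeping'' subtlety about the direction of the substitution is the only place the two write-ups differ, and it is immaterial since Haar measure on $\mathrm U(N)$ is bi-invariant.
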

\begin{proof}
    Denote $\bm r=\Phi(U_0)\bm{u},\bm s=\Phi(U_0)\bm{v}$. Since $P_{S^U}(\bm{u})=\Phi(U_0)^\top P_{S^{UU_0^*}}(\Phi(U_0)\bm{u})\Phi(U_0)$, we have
    \begin{align*}
        X(\bm{u},\bm{v})&=\EE_{U\sim\mu_H}\left[\left(\frac{1}{2}I_{2N}-\Phi(U_0)^\top P_{S^{UU_0^*}}(\bm{r})\Phi(U_0)\right)\bm{v}\bm{v}^\top\left(\frac{1}{2}I_{2N}-\Phi(U_0)^\top P_{S^{UU_0^*}}(\bm{r})\Phi(U_0)\right)\right]\\
        &=\Phi(U_0)^\top\EE_{UU_0^*\sim\mu_H}\left[\left(\frac{1}{2}I_{2N}-P_{S^{UU_0^*}}(\bm r)\right)\bm s\bm s^\top\left(\frac{1}{2}I_{2N}-P_{S^{UU_0^*}}(\bm r)\right)\right]\Phi(U_0)\\
        &=\Phi(U_0)^\top X(\bm{r},\bm s)\Phi(U_0)\\
        &=\Phi(U_0)^\top X(\Phi(U_0)\bm{u},\Phi(U_0)\bm{v})\Phi(U_0),
    \end{align*}
where the last equality follows from the left-invariance of Haar distribution.
\end{proof}
Suppose that $N\geq 2$, if $\langle\Phi^{-1}(\bm u),\Phi^{-1}(\bm v)\rangle_{\CC}=0$, then we can find $U_0\in \mathrm U(N)$ such that $\Phi(U_0)\bm e_1=\bm u,\Phi(U_0)\bm e_2=\bm v$. By Lemma \ref{lem:X(u,v) under rotation}, we can derive $X(\bm u,\bm v)$ from $X(\bm e_1,\bm e_2)$. Note that 
\[\bm z_{\bm\theta}^\top A(\bm\theta)=\bm z_{\bm\theta}^\top (I_{2N}-P_V(\bm z_{\bm\theta}))\frac{\partial\bm z_{\bm\theta}}{\partial\bm\theta}=0,\quad (J\bm z_{\bm\theta})^\top A(\bm\theta)=(J\bm z_{\bm\theta})^\top (I_{2N}-P_V(\bm z_{\bm\theta}))\frac{\partial\bm z_{\bm\theta}}{\partial\bm\theta}=0,\] so $\bm v_i^\top \bm z_{\bm\theta}=\bm v_i^\top J\bm z_{\bm\theta}=0$ and thus $\langle\Phi^{-1}(\bm z_{\bm\theta}),\Phi^{-1}(\bm v_i)\rangle_{\CC}=0$. The only left thing is to derive $X(\bm e_1,\bm e_2)$.
\begin{lemma}
\label{lem:X(e1,e2)}
    Suppose that $N\geq 2$, then $X(\bm e_1,\bm e_2)=\frac{1}{8N}(I_{2N}+P(\bm e_2,J\bm e_2)-P(\bm e_1,J\bm e_1))$.
\end{lemma}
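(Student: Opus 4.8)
The plan is to reduce the claim to the second moment of a single random vector and then evaluate that moment by conditioning the Haar measure twice.

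\textbf{Step 1: algebraic reduction.} First I would set $M := \tfrac12 I_{2N} - P_{S^U}(\bm e_1)$, which is symmetric, so that $X(\bm e_1,\bm e_2)=\EE[(M\bm e_2)(M\bm e_2)^\top]$. Using the orthogonal splitting $P_{S^U}(\bm e_1) = P(\bm e_1) + \sum_{k=1}^N \bm w_k\bm w_k^\top/p_k$ (from Lemma~\ref{lem:CFIM} and its corollary), where $\bm w_k := \Phi(U)D_k\Phi(U)^\top J\bm e_1 = \Phi(i\,\overline{u_{1k}}\,\bm u_k)$ and $p_k := |u_{1k}|^2$, and noting $P(\bm e_1)\bm e_2=0$, one gets $M\bm e_2 = \tfrac12\bm e_2 + \bm h$ with
\[
\bm h := -\sum_{k=1}^N \frac{\langle\bm w_k,\bm e_2\rangle}{p_k}\,\bm w_k = -\Phi\Bigl(i\sum_{k=1}^N\lambda_k\bm u_k\Bigr) = -J\,\Phi(\bm g),
\]
where $\lambda_k := \langle\bm w_k,\bm e_2\rangle/u_{1k}$, $\bm\lambda := (\lambda_1,\dots,\lambda_N)^\top$, $\bm g := U\bm\lambda$ (a.s.\ well defined since $u_{1k}\neq 0$ a.s.). Since $\EE[\bm h]=-\sum_k\EE[\bm P_k^U(\bm e_1)]\bm e_2=-\tfrac12\bm e_2$ by Lemma~\ref{thm: conditional expectation of Pk}, expanding the square gives $X(\bm e_1,\bm e_2)=\EE[\bm h\bm h^\top]-\tfrac14\bm e_2\bm e_2^\top$. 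Hence it suffices to compute $\EE[\bm h\bm h^\top]=J\,\EE[\Phi(\bm g)\Phi(\bm g)^\top]\,J^\top$, and since $\Phi(\bm g)\Phi(\bm g)^\top$ is a fixed real-linear function of the two $N\times N$ matrices $\bm g\bm g^*$ and $\bm g\bm g^\top$, only $\EE[\bm g\bm g^*]$ and $\EE[\bm g\bm g^\top]$ are needed.

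\textbf{Step 2: condition on the first two rows of $U$.} Fixing $\{u_{1k},u_{2k}\}_k$ determines $\bm\lambda$, hence $\bm\gamma := (\sum_k\lambda_ku_{1k},\ \sum_k\lambda_ku_{2k})^\top\in\CC^2$ and $\bm\delta\in\CC^{N-2}$, where I write the last $N-2$ rows of $U$ as $WR_\perp$ with $R_\perp$ a fixed partial isometry and $W$ Haar on $\mathrm U(N-2)$, so that $\bm g = (\bm\gamma;\ W\bm\delta)$. Note $\gamma_1=\sum_k\lambda_ku_{1k}=\sum_k\langle\bm w_k,\bm e_2\rangle=\langle\sum_k\bm w_k,\bm e_2\rangle=\langle J\bm e_1,\bm e_2\rangle=0$ (using $\sum_k\bm w_k = \Phi(UU^*)J\bm e_1=J\bm e_1$), and $\|\bm\delta\|^2=\|\bm\lambda\|^2-|\gamma_2|^2$. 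Now invoke the standard Haar identities $\EE_W[W]=0$, $\EE_W[WMW^\top]=0$, $\EE_W[WMW^*]=(\tr M/(N-2))I$: they imply that $\EE_W[\bm h\bm h^\top\mid\text{rows}]$ is block diagonal with respect to $\spanop\{\bm e_1,\bm e_2,\bm e_{N+1},\bm e_{N+2}\}$ and its orthogonal complement; on the first block it equals $\Phi_2(i\bm\gamma)\Phi_2(i\bm\gamma)^\top$ ($\Phi_2$ the $\CC^2\to\RR^4$ analogue of $\Phi$, $J$ restricting to $\Phi_2(iI_2)$ there), which by $\gamma_1=0$ is supported on $\spanop\{\bm e_2,\bm e_{N+2}\}$ and there equals $\bigl[\begin{smallmatrix}(\operatorname{Im}\gamma_2)^2 & -\operatorname{Re}\gamma_2\operatorname{Im}\gamma_2\\ -\operatorname{Re}\gamma_2\operatorname{Im}\gamma_2 & (\operatorname{Re}\gamma_2)^2\end{smallmatrix}\bigr]$; on the complementary block it equals $\frac{\|\bm\delta\|^2}{2(N-2)}I_{2(N-2)}$ (empty if $N=2$). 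So $X(\bm e_1,\bm e_2)$ is pinned down by $\EE[(\operatorname{Re}\gamma_2)^2]$, $\EE[(\operatorname{Im}\gamma_2)^2]$, $\EE[\operatorname{Re}\gamma_2\operatorname{Im}\gamma_2]$ and $\EE[\|\bm\delta\|^2]$.

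\textbf{Step 3: condition further on the first row $\bm r_1$.} Then the second row is uniform on the unit sphere of $\bm r_1^\perp$. Setting $v_k := e^{i\arg u_{1k}}\,\overline{u_{2k}}$, a change of variables shows that, conditionally, $(\operatorname{Re}v_1,\dots,\operatorname{Re}v_N,\operatorname{Im}v_1,\dots,\operatorname{Im}v_N)$ is uniform on the unit sphere of the $(2N-2)$-dimensional subspace $L_{\bm p}:=\spanop\{\sum_k\sqrt{p_k}\bm e_k,\ \sum_k\sqrt{p_k}\bm e_{N+k}\}^\perp$, and that $\lambda_ku_{2k}=(\operatorname{Im}v_k)\overline{v_k}$ and $\langle\bm w_k,\bm e_2\rangle^2/p_k=(\operatorname{Im}v_k)^2$, so $\operatorname{Re}\gamma_2=\sum_k(\operatorname{Re}v_k)(\operatorname{Im}v_k)$, $\operatorname{Im}\gamma_2=-\sum_k(\operatorname{Im}v_k)^2$, $\|\bm\lambda\|^2=\sum_k(\operatorname{Im}v_k)^2$. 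Feeding these into the standard second- and fourth-moment formulas for the uniform law on the sphere of a subspace (moments of coordinates expressed through Gram entries $\langle P_{L_{\bm p}}\bm e_a,P_{L_{\bm p}}\bm e_b\rangle$, all explicit in the $p_k$) and summing over $k,l$, I expect all $\bm p$-dependence to cancel, leaving
\[
\EE[\operatorname{Re}\gamma_2\operatorname{Im}\gamma_2]=0,\quad \EE[(\operatorname{Re}\gamma_2)^2]=\tfrac1{4N},\quad \EE[(\operatorname{Im}\gamma_2)^2]=\tfrac{N+1}{4N},\quad \EE[\|\bm\lambda\|^2]=\tfrac12,
\]
hence $\EE[\|\bm\delta\|^2]=\tfrac12-\tfrac{N+1}{4N}-\tfrac1{4N}=\tfrac{N-2}{4N}$. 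Substituting back into Step 2: on $\spanop\{\bm e_2,\bm e_{N+2}\}$ one gets $\operatorname{diag}(\tfrac{N+1}{4N}-\tfrac14,\ \tfrac1{4N})=\tfrac1{4N}I_2$ (after subtracting $\tfrac14\bm e_2\bm e_2^\top$), on the complementary block $\tfrac1{8N}I$, and on $\spanop\{\bm e_1,\bm e_{N+1}\}$ zero; that is, $X(\bm e_1,\bm e_2)=\tfrac1{8N}\operatorname{diag}(0,2,1,\dots,1,0,2,1,\dots,1)=\tfrac1{8N}\bigl(I_{2N}+P(\bm e_2,J\bm e_2)-P(\bm e_1,J\bm e_1)\bigr)$, using $P(\bm e_2,J\bm e_2)=E_{22}+E_{N+2,N+2}$ and $P(\bm e_1,J\bm e_1)=E_{11}+E_{N+1,N+1}$.

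The hard part will be Step 3: spotting the change of variables $v_k=e^{i\arg u_{1k}}\overline{u_{2k}}$ (which is what simultaneously simplifies $\operatorname{Re}\gamma_2$, $\operatorname{Im}\gamma_2$ and $\|\bm\lambda\|^2$), correctly identifying the conditional law of $\bm v$, and then pushing through the somewhat lengthy fourth-moment bookkeeping on the sphere while verifying that every $p_k$-dependent term really does cancel after the double summation. Step 2 is routine once the $\mathrm U(N-2)$ Haar integrals are invoked, and Step 1 is purely algebraic.
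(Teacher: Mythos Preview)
Your proposal is correct and follows a genuinely different route from the paper. The paper's proof is almost entirely symmetry-based: it uses the invariance of $X(\bm e_1,\bm e_2)$ under the stabilizer subgroup $T_1=\{U:U\bm e_1=\bm e_1,\,U\bm e_2=\bm e_2\}$ together with Schur's lemma to force a diagonal structure $\operatorname{diag}(0,a,c,\dots,c,0,a,c,\dots,c)$, then uses the conjugation symmetry $U\mapsto\bar U$ to kill the remaining off-diagonals and equate $X_{22}=X_{N+2,N+2}$, the trace identity $(\tfrac12I-P_{S^U})^2=\tfrac14I$ to get $2a+(2N-4)c=\tfrac14$, and finally a random-phase averaging trick to show $\EE[(\bm e_2^\top C_U\bm e_2)(\bm e_3^\top C_U\bm e_3)]=0$, hence $a=2c$. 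No integrals on spheres are ever computed.

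Your approach is instead a direct double-conditioning computation: you reduce to the second moment of the vector $\bm h=-J\Phi(U\bm\lambda)$, freeze the first two rows to exploit the Haar law on $\mathrm U(N-2)$ for the block structure, then freeze only the first row and use the change of variables $v_k=e^{i\arg u_{1k}}\overline{u_{2k}}$ to turn the remaining expectations into explicit second and fourth moments of the uniform law on the sphere of $L_{\bm p}$. Each route has its merits: the paper's argument is shorter, more conceptual, and would transplant easily to other invariant ensembles, while yours is more elementary (no Schur's lemma, no phase-trick) and makes the origin of the constants $\tfrac{1}{4N}$ and $\tfrac{1}{8N}$ completely transparent through the sphere-moment identities $\sum_{k,l}(P_{bb}^{kl})^2=N-1$ and $(\sum_k P_{bb}^{kk})^2=(N-1)^2$. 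Your anticipated ``hard part''—that the $\bm p$-dependence cancels in Step~3—does indeed go through cleanly: each of the four conditional moments $\EE[(\operatorname{Re}\gamma_2)^2\mid\bm r_1]$, $\EE[(\operatorname{Im}\gamma_2)^2\mid\bm r_1]$, $\EE[\operatorname{Re}\gamma_2\operatorname{Im}\gamma_2\mid\bm r_1]$, $\EE[\|\bm\lambda\|^2\mid\bm r_1]$ is already independent of $\bm p$, so no outer averaging is required.
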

\begin{proof}
    For simplicity, in this proof we denote $X=X(\bm e_1,\bm e_2)\in\RR^{2N\times2N}$.  Note that $P_{S^U}(\bm e_1)\bm e_1=\bm e_1$, $P_{S^U}(\bm e_1)J\bm e_1=J\bm e_1$, so for every $\bm{w}\in \spanop\{\bm e_1,J\bm e_1\}$, we have $\bm{w}^\top(\frac{1}{2}I_{2N}-P_{S^U}(\bm e_1))\bm e_2=0$. This implies $X_{ij}=0$ if one of $i,j$ belongs to $\{1,N+1\}$.
    
    Let $T_1=\{U\in \mathrm U(N):U\bm e_1=\bm e_1,U\bm e_2=\bm e_2\}$ be a subgroup of $\mathrm U(N)$.
    For every unitary $U\in T_1$, we have $X\Phi(U)=\Phi(U)X$. Note that $T_1$ is an irreducible representation of the linear isomorphisms in the subspace $W=\spanop\{\bm e_1,J\bm e_1,\bm e_2,J\bm e_2\}^\perp$. Hence, by Schur Lemma, we know that there is a constant $c$ such that $X\bm{w}=c\bm{w}$ for all $\bm{w}\in W$. This implies that $X_{ij}=0$ for $i,j\notin\{1,2,N+1,N+2\}$ and $i\neq j$, and $X_{ii}=c$ for $i\notin\{1,2,N+1,N+2\}$.

    Now the only possible non-zero off-diagonal elements of $X$ is $X_{2,N+2}=X_{N+2,2}$. Denote $R=\operatorname{diag}(I_N,-I_N)$, then $\Phi(\bar{U})=R\Phi(U)R$, since $U,\bar U$ follows the same distribution, and 
    \[P(\Phi(\bar U)^\top D_k\Phi(\bar U)J\bm e_1)=P(R\Phi(U)^\top RD_kR\Phi(U)RJ\bm e_1)=RP(\Phi(U)^\top D_k\Phi(U)J\bm e_1)R,\]
    so we have $\frac{1}{2}I_{2N}-P_{S^{\bar U}}(\bm e_1)=R(\frac{1}{2}I_{2N}-P_{S^U}(\bm e_1))R$, and then 
    \[X(\bm e_1,\bm e_2)=\EE_{\bar U\sim\mu_H}[R(\frac{1}{2}I_{2N}-P_{S^U}(\bm e_1))R\bm e_2\bm e_2^\top R(\frac{1}{2}I_{2N}-P_{S^U}(\bm e_1))]=RX(\bm e_1,\bm e_2)R.\]
    Hence, $\bm e_2^\top X(\bm e_1,\bm e_2)J\bm e_2=\bm e_2^\top RX(\bm e_1,\bm e_2)RJ\bm e_2=-\bm e_2^\top X(\bm e_1,\bm e_2)J\bm e_2$ by $R\bm e_2=\bm e_2, RJ\bm e_2=-J\bm e_2$. This implies $X_{2,N+2}=X_{N+2,2}=0$. Similarly, we have $X_{22}=X_{N+2,N+2}$.

    Based on the above argument, we know that $X$ must be in the form of \[\operatorname{diag}(0,a,c,\cdots,c,0,a,c\cdots,c),\]
    where each $c,\cdots,c$ is of length $N-2$. Note that $\operatorname{trace}((\frac{1}{2}I_{2N}-P_{S^U}(\bm{u}))\bm{v}\bm{v}^\top(\frac{1}{2}I_{2N}-P_{S^U}(\bm{u})))=\frac{1}{4}\|\bm{v}\|^2$ since $(\frac{1}{2}I_{2N}-P_{S^U}(\bm{u}))^2=\frac{1}{4}I$, so $2a+(2N-4)c=\frac{1}{4}$. If $N=2$, then the proof is finished. So let us assume that $N\geq 3$. In this case, we claim that $a=2c$.  

    Denote $C_U=\frac{1}{2}I_{2N}-P_{S^U}(\bm e_1)$ for brevity. The form of $X$ implies that $\EE_{U\sim\mu_H}[(\bm e_2^\top C_U\bm e_2)^2]=a$, $\EE_{U\sim\mu_H}[(\bm e_3^\top C_U\bm e_2)^2]=c$ and $\EE_{U\sim\mu_H}[(\bm e_2^\top C_U\bm e_2)(\bm e_2^\top C_U\bm e_3)]=0$. By the rotation invariance of Haar distribution, for any unit vectors $\bm{u},\bm{v}$ such that $\bm u,\bm v\in \spanop\{\bm e_1,J\bm e_1\}^\perp$ and $\bm{u}\perp\bm{v},\bm{u}\perp J\bm{v}$, we have 
    \[\EE_{U\sim\mu_H}[(\bm{u}^\top C_U\bm{u})^2]=a, \quad \EE_{U\sim\mu_H}[(\bm{u}^\top C_U\bm{v})^2]=c,\quad\EE_{U\sim\mu_H}[(\bm{v}^\top C_U\bm{u})(\bm{u}^\top C_U\bm{u})]=0. \]
    Consider $\EE_{U\sim\mu_H}[(\bm{w}^\top C_U\bm{w})^2]=a$ for $\bm{w}=\frac{1}{\sqrt2}(\bm e_2+\bm e_3)$. Expand the expectation in the term related to $\bm e_2,\bm e_3$, we can find that $a-2c=\EE_{U\sim\mu_H}[(\bm e_2^\top C_U\bm e_2)(\bm e_3^\top C_U\bm e_3)]$. Denote $U=(U_{ij})_{N\times N}$ and $U_{ij}=r_{ij}e^{i\theta_{ij}}$. 
    It can be directly calculated from the matrix form of $\bm P_k^U(\bm e_1)$  (replacing $x_{ij},y_{ij}$ by $r_{ij}\cos\theta_{ij},r_{ij}\sin\theta_{ij}$ respectively for $A_{22},A_{33}$ in Lemma \ref{thm: conditional expectation of Pk} ) that
    \begin{equation}
    \label{eq:Q_u}
       \bm e_2^\top C_U\bm e_2=-\frac{1}{2}\sum\limits_{k=1}^{N}r_{2k}^2\cos(2(\theta_{2k}-\theta_{1k})),\quad\bm e_3^\top C_U\bm e_3=-\frac{1}{2}\sum\limits_{l=1}^{N}r_{3l}^2\cos(2(\theta_{3l}-\theta_{1l})). 
    \end{equation}
    Denote $D_{\phi_1,\phi_2}=\operatorname{diag}(1,e^{i\phi_1},e^{i\phi_2},1,\cdots,1)\in \mathrm U(N)$, where $\phi_1,\phi_2$ are independent random phases uniformly distributed in $[0,2\pi]$. Then $\EE_{U\sim\mu_H}[f(U)]=\EE_{U\sim\mu_H}\EE_{\phi_1,\phi_2}[f(D_{\phi_1,\phi_2}U)]$ for every function $f$ by the left invariance of Haar distribution. For each pair $(k,l)$, we have 
    \[\int_0^{2\pi}\int_0^{2\pi}\cos(2(\theta_{2k}-\theta_{1k}+\phi_1))\cos(2(\theta_{3l}-\theta_{1l}+\phi_2))d\phi_1d\phi_2=0.\]
    As a result, $\EE_{U\sim\mu_H}[r_{2k}^2\cos(2(\theta_{2k}-\theta_{1k}))r_{3l}^2\cos(2(\theta_{3l}-\theta_{1l}))]$
    and $\EE_{U\sim\mu_H}[(\bm e_2^\top C_U\bm e_2)(\bm e_3^\top C_U\bm e_3)]$ are 0. This implies $a=2c$. From the two linear equations $a=2c,2a+(2N-4)c=\frac{1}{4}$, we know that $a=\frac{1}{4N},c=\frac{1}{8N}$. Then the proof is complete. 
\end{proof}
\begin{coro}
    Suppose that $N\geq 2$ and $\bm{u},\bm{v}$ are unit vectors such that $\langle\bm{u},\bm{v}\rangle=0,\langle\bm{u},J\bm{v}\rangle=0$, then 
    \begin{equation}
        \label{eq:X(u,v) final}
        X(\bm{u},\bm{v})=\frac{1}{8N}(I_{2N}+P(\bm{v},J\bm{v})-P(\bm{u},J\bm{u})).
    \end{equation}
\end{coro}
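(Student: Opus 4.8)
The plan is to reduce to the already-computed value $X(\bm e_1,\bm e_2)$ of Lemma~\ref{lem:X(e1,e2)} by rotating the pair $(\bm u,\bm v)$ onto $(\bm e_1,\bm e_2)$ with a unitary, using the covariance identity of Lemma~\ref{lem:X(u,v) under rotation}. The only real work is to see that the hypotheses permit this rotation.

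First I would record what the hypotheses mean in $\CC^N$. Set $\psi_u=\Phi^{-1}(\bm u)$ and $\psi_v=\Phi^{-1}(\bm v)$, both unit vectors. Since $\Phi$ preserves the real inner product, $\operatorname{Re}\langle\psi_u,\psi_v\rangle_\CC=\langle\bm u,\bm v\rangle=0$; and since $\Phi^{-1}(J\bm v)=i\psi_v$, one has $\operatorname{Re}\langle\psi_u,i\psi_v\rangle_\CC=\langle\Phi(\psi_u),\Phi(i\psi_v)\rangle=\langle\bm u,J\bm v\rangle=0$, which forces $\operatorname{Im}\langle\psi_u,\psi_v\rangle_\CC=0$. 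Hence $\psi_u\perp\psi_v$ in $\CC^N$. Because $N\geq 2$, I can complete $\{\psi_u,\psi_v\}$ to an orthonormal basis of $\CC^N$ and choose $U_0\in\mathrm U(N)$ with $U_0\psi_u=\bm e_1$ and $U_0\psi_v=\bm e_2$ (where $\bm e_1,\bm e_2$ denote both the first two standard basis vectors of $\CC^N$ and, under $\Phi$, those of $\RR^{2N}$); equivalently, $\Phi(U_0)\bm u=\bm e_1$ and $\Phi(U_0)\bm v=\bm e_2$.

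Next, apply Lemma~\ref{lem:X(u,v) under rotation} with this $U_0$ and then Lemma~\ref{lem:X(e1,e2)}:
\[
X(\bm u,\bm v)=\Phi(U_0)^\top X(\bm e_1,\bm e_2)\,\Phi(U_0)=\frac{1}{8N}\,\Phi(U_0)^\top\bigl(I_{2N}+P(\bm e_2,J\bm e_2)-P(\bm e_1,J\bm e_1)\bigr)\Phi(U_0).
\]
Since $\Phi(U_0)$ is orthogonal, $\Phi(U_0)^\top I_{2N}\Phi(U_0)=I_{2N}$ and $\Phi(U_0)^\top P(\bm w)\Phi(U_0)=P(\Phi(U_0)^\top\bm w)$ for any $\bm w$. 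Using $\Phi(U_0)^\top=\Phi(U_0^*)$, the identities $\Phi(U_0^*)\bm e_1=\bm u$, $\Phi(U_0^*)\bm e_2=\bm v$, and the fact that $J=\Phi(iI_N)$ commutes with $\Phi(U_0^*)$ (so $\Phi(U_0^*)J\bm e_1=J\bm u$ and $\Phi(U_0^*)J\bm e_2=J\bm v$), together with the orthogonalities $\bm e_1\perp J\bm e_1$ and $\bm u\perp J\bm u$ (so $P(\bm e_1,J\bm e_1)=P(\bm e_1)+P(J\bm e_1)$, and likewise for the others), I obtain $\Phi(U_0)^\top P(\bm e_1,J\bm e_1)\Phi(U_0)=P(\bm u,J\bm u)$ and $\Phi(U_0)^\top P(\bm e_2,J\bm e_2)\Phi(U_0)=P(\bm v,J\bm v)$. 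Substituting these into the display yields exactly (\ref{eq:X(u,v) final}).

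The only point requiring any care is the first step: verifying that the two scalar conditions $\langle\bm u,\bm v\rangle=0$ and $\langle\bm u,J\bm v\rangle=0$ are precisely the vanishing of the real and imaginary parts of $\langle\psi_u,\psi_v\rangle_\CC$, which is what allows $\{\psi_u,\psi_v\}$ to be extended to an orthonormal basis and the rotating unitary $U_0$ to be produced (and this is where $N\geq 2$ enters). Everything after that is the routine conjugation of three projections by an orthogonal matrix.
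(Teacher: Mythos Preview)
Your proof is correct and follows essentially the same route as the paper: use the orthogonality hypotheses to produce a unitary $U_0$ carrying $(\bm u,\bm v)$ to $(\bm e_1,\bm e_2)$, then combine Lemma~\ref{lem:X(u,v) under rotation} with Lemma~\ref{lem:X(e1,e2)} and conjugate the projections back. You simply spell out in more detail why $\langle\bm u,\bm v\rangle=\langle\bm u,J\bm v\rangle=0$ forces $\langle\psi_u,\psi_v\rangle_\CC=0$ and how the projections transform under the orthogonal conjugation, points the paper leaves implicit.
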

\begin{proof}
    Since $\langle\bm{u},\bm{v}\rangle=0,\langle\bm{u},J\bm{v}\rangle=0$, there is a unitary $U_0$ such that $U_0\bm e_1=\Phi^{-1}(\bm{u}),U_0\bm e_2=\Phi^{-1}(\bm{v})$. Hence, by Lemma \ref{lem:X(u,v) under rotation} and \ref{lem:X(e1,e2)}, we have 
    \[X(\bm{u},\bm{v})=\Phi(U_0) X(\bm e_1,\bm e_2)\Phi(U_0)^\top=\frac{1}{8N}(I_{2N}+P(\bm{v},J\bm{v})-P(\bm{u},J\bm{u})).
    \qedhere \]
\end{proof}
Having figured out the expression of $X(\bm u,\bm v)$, we are able to derive the variance for each entry of the CFIM. Denote $\tilde Q(\bm\theta)=A(\bm\theta)^\top JA(\bm\theta)$. Then we have $\operatorname{Im}(\mathcal{Q}(\bm\theta))=\tilde Q(\bm\theta)$. This is because $\Phi(\bm{u})^\top J\Phi(\bm{v})=\operatorname{Im}(\langle\bm{u},\bm{v}\rangle_{\CC})$. It can be verified that $\operatorname{Im}(\mathcal{Q}(\bm\theta))=(\frac{\partial\bm{z}_{\bm\theta}}{\partial\bm\theta})^\top J(I-P(\bm{z}_{\bm\theta},J\bm{z}_{\bm\theta}))\frac{\partial\bm{z}_{\bm\theta}}{\partial\bm\theta}$. $J$ and $P(\bm{z}_{\bm\theta},J\bm{z}_{\bm\theta})$ commute, so we have $\operatorname{Im}(\mathcal{Q}(\bm\theta))=A(\bm\theta)^\top JA(\bm\theta)$. Note that $\mathcal{Q}(\bm\theta)$ is Hermitian, so \[\mathcal{Q}_{ii}(\bm\theta)=Q_{ii}(\bm\theta),\quad \mathcal{Q}_{ij}(\bm\theta)\mathcal{Q}_{ji}(\bm\theta)=Q_{ij}^2(\bm\theta)+\tilde Q_{ij}^2(\bm\theta).\] Therefore, to prove \thmref{2}, it suffices to prove that 
\[\var_{U\sim\mu_H}[F_{ij}^U(\bm\theta)]=\frac{1}{8N}(Q_{ii}(\bm\theta)Q_{jj}(\bm\theta)+{Q}_{ij}(\bm\theta)^2+\tilde {Q}_{ij}(\bm\theta)^2).\]
\begin{proof}[Proof of \thmref{2}]
From (\ref{eq:QFIM&CFIM}), we know that $Q_{ij}(\bm\theta)=\bm v_i^\top\bm v_j$ and similarly $\tilde Q_{ij}(\bm\theta)=\bm v_i^\top J\bm v_j$.
We have shown that $\bm v_j^\top\bm{z}_{\bm{\theta}}=\bm v_j^\top J\bm{z}_{\bm{\theta}}=0$, so by (\ref{eq:Variance}) and (\ref{eq:X(u,v) final}), we have 
    \begin{align*}
        \var[F_{ij}^U(\bm{\theta})]&=\|\bm v_j\|_2^2\bm v_i^\top X(\bm{z}_{\bm{\theta}},\frac{\bm v_j}{\|\bm v_j\|_2})\bm v_i\\
        &=\frac{\|\bm v_j\|_2^2}{8N}\left(\bm v_i^\top(I-P(\bm{z}_{\bm{\theta}},J\bm{z}_{\bm{\theta}}))\bm v_i+\frac{(\bm v_i^\top\bm v_j)^2}{\|\bm v_j\|_2^2}+\frac{(\bm v_i^\top J\bm v_j)^2}{\|\bm v_j\|_2^2}\right)\\
        &=\frac{Q_{jj}(\bm{\theta})}{8N}\left(Q_{ii}(\bm{\theta})+\frac{Q_{ij}(\bm{\theta})^2}{Q_{jj}(\bm{\theta})}+\frac{\tilde Q_{ij}^2(\bm{\theta})}{Q_{jj}(\bm{\theta})}\right)\\
        &=\frac{1}{8N}(Q_{ii}(\bm{\theta})Q_{jj}(\bm{\theta})+Q_{ij}^2(\bm{\theta})+\tilde Q_{ij}^2(\bm{\theta})).
    \end{align*}
    From this element-wise result, it is easy to check that the matrix form result in \thmref{2} holds.
\end{proof}

Now let us prove \thmref{thm:6}. We first consider the case when $i=j$. In this case, the moments can be accurately calculated. Note that $X_{ii}=\bm v_i^\top (\frac{1}{2}I_{2N}-P_{S^U}(\bm z_{\bm\theta}))\bm v_i/\|\bm v_i\|_2^2$ and $\langle \bm z_{\bm\theta},\bm v_i\rangle=\langle\bm z_{\bm\theta},J\bm v_i\rangle=0$, it suffices to take $\bm v_i=\bm e_2,\bm z_{\bm\theta}=\bm e_1$ by rotation invariance. 

\begin{lemma}
\label{lem:Q-ineq}
For any $a^2+b^2=1,a,b\in\mathbb R$, denote
\[X(U)=a\bm e_2^\top (\frac{1}{2}I_{2N}-P_{S^U}(\bm e_1))\bm e_2+b\bm e_2^\top (\frac{1}{2}I_{2N}-P_{S^U}(\bm e_1))J\bm e_2,\] then we have
    \begin{equation}
          \EE[X^{2k+1}(U)]=0, \quad \EE[X^{2k}(U)]= \frac{(2k)!}{k!8^k}\cdot\left[\prod_{j=0}^{k-1}(N+2j)\right]^{-1}. 
    \end{equation}

\end{lemma}
\begin{proof}
    
Denote $C_U=\frac{1}{2}I_{2N}-P_{S^U}(\bm e_1)$.
 Conditioned on $U^*\bm e_{1}=\bm r$, for $V=\mathrm U(N)$ such that $V^*\bm e_1=\bm e_1$, we have 
 \[X(VU)=\bm e_2^\top \Phi(V) (aC_U+bC_UJ)\Phi(V)^\top \bm e_2=\bm e_2^\top \Phi(V)\left (aC_U+\frac{b}{2}(C_UJ-JC_U)\right)\Phi(V)^\top \bm e_2.\]
 
 Fix $U=U_0$, $\Phi(V)^\top \bm e_2$ follows the uniform distribution on the unit sphere in $W=\operatorname{span}\{\bm e_1, J\bm e_1\}^\perp$.
  Then by our argument, $X(U)|U^*\bm e_1=\bm r$ has the same distribution as $\bm u^\top A \bm u$ where $\bm u$ follows the uniform distribution on the unit sphere in $W$, $A=[aC_U+\frac{b}{2}(C_UJ-JC_U)]|_W$  as the restriction of the linear operator to the subspace $W$. 
  
  It is easy to check that \footnote{First show that $C_U^2=(C_UJ)^2=(JC_U)^2=\frac{1}{4}I_{2N}-P(\bm e_1,J\bm e_1)$.}
 \begin{equation}
 \label{eq:JCU}
  \operatorname{tr}(C_U|_W)=0,\quad C_U^2|_W=(JC_U)^2|_W=(C_U J)^2|_W=\frac{1}{4}\operatorname{id}|_W.
 \end{equation}
 From \eqref{eq:JCU}, it can be derived that $\operatorname{tr}(A)=0,A^2=\frac{1}{4}\operatorname{id}|_W$. Hence, 
 \begin{equation}
 \label{eq:Z}
\EE[X(U)^k|U^*\bm e_1=\bm r]=\EE[(\bm u^\top A \bm u)^k]=\frac{\Gamma(N-1)}{2^k\Gamma(N+k-1)} \mathbb{E}[(Z^\top AZ)^k],\quad Z\sim \mathcal N(0,\bm I_{2N-2}).
 \end{equation}

Denote the moment generating function of $Z^\top AZ$ as $M(t)$, then 
\[M(t)=\frac{1}{\sqrt{\operatorname{det}(I_{2N-2}-2tA)}},\quad \ln M(t) = -\frac{1}{2} \ln \det(I_{2N-2} - 2tA).\]
By $\operatorname{tr}(A^{2k+1})=0, \operatorname{tr}(A^{2k})=(2N-2)(\frac{1}{4})^k$ and Taylor expansion, we have
\begin{align*}
\ln M(t) &= -\frac{1}{2} \mathrm{Tr}\left( \ln(I_{2N-2} - 2tA) \right)\\
&= \frac{1}{2} \sum_{j=1}^{\infty} \frac{(2t)^j \mathrm{Tr}(A^j)}{j}\\
&=\frac{1}{2}\sum\limits_{j=1}^{\infty}\frac{(4t^2)^j(\frac{1}{4})^j(2N-2)}{2j}\\
&=-\frac{(N-1)}{2}\ln(1-t^2). 
\end{align*}

So $M(t)=(1-t^2)^{-\frac{N-1}{2}}$, $\mathbb{E}[(Z^\top AZ)^{2k+1}]=0,\mathbb{E}[(Z^\top AZ)^{2k}]=\frac{(2k)!\prod_{j=0}^{k-1} \left(\frac{N-1}{2} + j \right)}{k!}$, as a result of \eqref{eq:Z},
\begin{equation}
\label{eq:XU}
    \EE[X(U)^{2k-1}]=0,\quad \EE[X(U)^{2k}]=
    \frac{\Gamma(N-1)(2k)!\prod_{j=0}^{k-1} \left( \frac{N-1}{2} + j \right)}{\Gamma(N+2k-1)k!2^{2k}}=
    \frac{(2k)!}{k!8^k}\cdot\left[\prod_{j=0}^{k-1}(N+2j)\right]^{-1}.
\end{equation}
\end{proof}
By Stirling formula, there exists $c_1,c_2>0$ such that $c_1\sqrt k\leq (\frac{(2k)!}{k!8^k})^{1/2k}\leq c_2\sqrt k$, and  $\left[\prod_{j=0}^{k-1}(N+2j)\right]^{\frac{1}{2k}}\sim\Theta (\sqrt{N})$ as $N\geq k$. Hence, we prove the theorem for $i=j$. 

For $i\neq j$, we first consider the case when $\bm v_j\perp\bm v_i,\bm v_j\perp J\bm v_i$. By rotation invariance, it suffices to estimate $\EE[\bm (e_2^\top(\frac{1}{2}I_{2N}-P_{S^U}(\bm e_1))\bm e_3)^k]$. We estimate this quantity by the following 2 lemmas.

\begin{lemma}
\label{lem:Q,B}

    Denote $Q_{\bm v}=\bm v^\top(\frac{1}{2}I_{2N}-P_{S^U}(\bm e_1))\bm v, B_{\bm u,\bm v}=\bm u^\top(\frac{1}{2}I_{2N}-P_{S^U}(\bm e_1))\bm v $, $\tilde Q_{\bm u}=\bm u^\top(\frac{1}{2}I_{2N}-P_{S^U}(\bm e_1))J\bm u$. Then for all unit vectors  $\bm u,\bm v\in\operatorname{span}\{\bm e_1,J\bm e_1\}^\perp$, such that $\bm u\perp \bm v,\bm u\perp J\bm v$, and $i,j\in \mathbb N$, such that either of them is odd, we have
    \begin{equation}
       \EE[Q_{\bm u}^iQ_{\bm v}^j]=\EE[B^i_{\bm u,\bm v}(aQ_{\bm u}+b\tilde Q_{\bm u})^j]=0.\quad \forall a,b\in\mathbb R 
    \end{equation}
\end{lemma}
\begin{proof}
    By rotational invariance, we can assume $\bm u=\bm e_2,\bm v=\bm e_3$. As \eqref{eq:Q_u} has shown, we have
    \[   Q_{\bm e_2}=-\frac{1}{2}\sum\limits_{k=1}^{N}r_{2k}^2\cos(2(\theta_{2k}-\theta_{1k})),\quad Q_{\bm e_3}=-\frac{1}{2}\sum\limits_{l=1}^{N}r_{3l}^2\cos(2(\theta_{3l}-\theta_{1l})). \]
    Similarly, 
    \[B_{\bm e_2,\bm e_3}=-\sum\limits_{m=1}^{N}r_{2m}r_{3m}\left(\frac{1}{2}\cos(\theta_{2m}-\theta_{3m})+\frac{1}{2}\cos(\theta_{2m}+\theta_{3m}-2\theta_{1m})\right),\]
\[\tilde  Q_{\bm e_2}=-\frac{1}{2}\sum\limits_{n=1}^{N}r_{2n}^2\sin(2(\theta_{2n}-\theta_{1n})).\]
    Assume $i$ is odd, write $Q_{\bm e_2}(\phi)=\sum\limits_{k=1}^{N}r_{2k}^2\cos(2(\theta_{2k}-\theta_{1k}+\phi))$ where $\phi$ is independent random phases uniformly distributed in $[0,2\pi]$. As in the proof of variance, we have 
\[\EE_{U\sim\mu_H}[Q_{\bm e_2}^iQ_{\bm e_3}^j]=\EE_{U\sim\mu_H}\EE_{\phi}[Q_{\bm e_2}(\phi)^iQ_{\bm e_3}^j].\]
Each term after expanding the inner expectation is of the form $\EE[\sin^a \phi \cos ^b\phi] \cdot f(U) $, it must be 0 since $a+b$ is odd. Hence, $\EE_{U\sim\mu_H}[Q_{\bm e_2}^iQ_{\bm e_3}^j]=0$. If $j$ is odd, do the same for $Q_{\bm e_3}(\phi)=\sum\limits_{l=1}^{N}r_{3l}^2\cos(2(\theta_{3l}-\theta_{1l}+\phi))$.

For $\EE[B^i_{\bm u,\bm v}(aQ_{\bm u}+b\tilde Q_{\bm u})^j]$, if $i$ is odd, do the same for $Q_{\bm e_3}(\phi)$; otherwise $i$ is even and $j$ is odd, do the same for $aQ_{\bm e_2}(\phi)+b\tilde Q_{\bm e_2}(\phi)$. We can prove the objective is 0 by the same argument.
\end{proof}

\begin{lemma}
    For any unit vectors $\bm u,\bm v\in\operatorname{span}\{\bm e_1,J\bm e_1\}^\perp$,$\bm u\perp \bm v,\bm u\perp J\bm v$,  we have 
    \begin{equation}
       \EE[B_{\bm u,\bm v}^{2k+1}]=0, \quad \frac{1}{2^{2k-1}}\EE[Q_{\bm u}^{2k}]\leq \EE[B_{\bm u,\bm v}^{2k}]\leq \frac{1}{2}\EE[Q_{\bm u}^{2k}]. 
    \end{equation}
    
\end{lemma}
\begin{proof}
    By rotational invariance, $\EE[B_{\bm u,\bm v}^{k}]=\EE[B_{\bm u',\bm v'}^{k}]$ where $\bm u'=\frac{1}{\sqrt 2}(\bm u+\bm v),\bm v'=\frac{1}{\sqrt{2}}(\bm u-\bm v)$. Since $B_{\bm u',\bm v'}=\frac{1}{2}(Q_{\bm u}-Q_{\bm v})$, we have $\EE[B_{\bm u,\bm v}^{k}]=\frac{1}{2^k}\EE[(Q_{\bm u}-Q_{\bm v})^k]$. Applying Lemma \ref{lem:Q,B}, we have $\EE[B_{\bm u,\bm v}^{2k+1}]=0$ and
    \[\EE[B_{\bm u,\bm v}^{2k}]=\frac{1}{2^{2k}}\sum\limits_{j=0}^{k}\binom{2k}{2j}\EE\left[Q_{\bm u}^{2j}Q_{\bm v}^{2(k-j)}\right]\geq\frac{1}{2^{2k-1}}\EE[Q_{\bm u}^{2k}],\]
    \begin{align*}
        \EE[B_{\bm u,\bm v}^{2k}]&=\frac{1}{2^{2k}}\sum\limits_{j=0}^{k}\binom{2k}{2j}\EE\left[Q_{\bm u}^{2j}Q_{\bm v}^{2(k-j)}\right]\\
        &\leq \frac{1}{2^{2k}}\sum\limits_{j=0}^{k}\binom{2k}{2j}\left(\frac{j}{k}\EE[Q_{\bm u}^{2k}]+\frac{k-j}{k}\EE[Q_{\bm v}^{2k}]\right)\\
        &=\frac{1}{2}\EE[Q_{\bm u}^{2k}],
    \end{align*}
    where the last equality follows from the fact that $\EE[Q_{\bm u}^{2k}]=\EE[Q_{\bm v}^{2k}]=\EE[X(U)^{2k}]$.
\end{proof}

Finally, let us deal with general $\bm u,\bm v\in\operatorname{span}\{\bm e_1,J\bm e_1\}^\perp$.

\begin{lemma}
\label{lem:B-ineq}
    For any unit vectors $\bm u,\bm v\in\operatorname{span}\{\bm e_1,J\bm e_1\}^\perp$,  we have 
    \begin{equation}
       \EE[B_{\bm u,\bm v}^{2k+1}]=0, \quad \frac{1}{5^{k}}\EE[Q_{\bm u}^{2k}]\leq \EE[B_{\bm u,\bm v}^{2k}]\leq {2^k}\EE[Q_{\bm u}^{2k}]. 
    \end{equation}
    
\end{lemma}
\begin{proof}
   Write $\bm v=a\bm u+bJ\bm u+c\bm w$ where $\bm w\perp \bm u,\bm w\perp J\bm u$ and $a^2+b^2+c^2=1$. Denote $X(U)=\frac{a}{\sqrt{a^2+b^2}}Q_{\bm u}+\frac{b}{\sqrt{a^2+b^2}}\tilde Q_{\bm u}$.  Then  
   \begin{align*}
       B_{\bm u,\bm v}^k&=(aQ_{\bm u}+b\tilde Q_{\bm u}+cB_{\bm u,\bm w})^k\\
       &=\sum\limits_{j=0}^k\binom{k}{j}\left(\frac{a}{\sqrt{a^2+b^2}}Q_{\bm u}+\frac{b}{\sqrt{a^2+b^2}}\tilde Q_{\bm u}\right)^jB_{\bm u,\bm w}^{k-j} \left(\sqrt{a^2+b^2}\right)^jc^{k-j}\\
       &=\sum\limits_{j=0}^k\binom{k}{j}X(U)^jB_{\bm u,\bm w}^{k-j}\left(\sqrt{1-c^2}\right)^jc^{k-j}.
   \end{align*}
   Applying Lemma \ref{lem:Q,B}, we have $\EE[B_{\bm u,\bm v}^{2k+1}]=0$ and
   \begin{align*}
       \EE[B_{\bm u,\bm v}^{2k}]&=\sum\limits_{j=0}^k\binom{2k}{2j}\EE\left[X(U)^{2j}B_{\bm u,\bm w}^{2(k-j)}\right]\left({1-c^2}\right)^j(c^2)^{k-j}\\
       &\leq \sum\limits_{j=0}^k\binom{2k}{2j}\left({1-c^2}\right)^j(c^2)^{k-j}\EE[Q_{\bm u}^{2k}]\\
       &\leq \left(c+\sqrt{1-c^2}\right)^{2k}\EE[Q_{\bm u}^{2k}]\\
       &\leq 2^k\EE[Q_{\bm u}^{2k}].
   \end{align*}
   On the other hand, we have
\begin{align*}
    \EE[B_{\bm u,\bm v}^{2k}]&=\sum\limits_{j=0}^k\binom{2k}{2j}\EE\left[X(U)^{2j}B_{\bm u,\bm w}^{2(k-j)}\right]\left({1-c^2}\right)^j(c^2)^{k-j}\\
    &\geq c^{2k}\EE\left[B_{\bm u,\bm w}^{2k}\right]+(1-c^2)^k\EE\left[X(U)^{2k}\right]\\
    &\geq \left[\left(\frac{c}{2}\right)^{2k}+(1-c^2)^k\right]\EE[Q_{\bm u}^{2k}]\\
    &\geq \frac{1}{5^k}\EE[Q_{\bm u}^{2k}]. \qedhere
\end{align*} 
\end{proof}
 Combining Lemma \ref{lem:Q-ineq} and Lemma \ref{lem:B-ineq}, we finish the proof of \eqref{eq:momentsk}.  For \eqref{eq:upper-expmom}, we have
 \begin{align*}
     \EE[\exp(\lambda X_{ij})]&=\sum\limits_{k=0}^{\infty}\frac{\lambda^k}{k!}\EE[X_{ij}^k]\\
     &\leq\sum\limits_{k=0}^{\infty}2^k\cdot\frac{\lambda^{2k}}{(2k)!}\cdot \frac{(2k)!}{k! 8^k}\cdot \left[\prod_{j=0}^{k-1}(N+2j)\right]^{-1}\\
     &\leq \sum\limits_{k=0}^{\infty}\left(\frac{\lambda^2}{4N}\right)^k\cdot \frac{1}{k!}\\
     &=\exp\left(\frac{\lambda^2}{4N}\right).
 \end{align*}
It remains to prove \eqref{eq:lower-expmom}. Note that for $t<k+1$, we have
\begin{align*}
    e^t&=\sum\limits_{i=0}^{k}\frac{t^i}{i!}+\sum\limits_{i\geq k+1}\frac{t^i}{i!}\\
    &\leq \sum\limits_{i=0}^{k}\frac{t^i}{i!} +\frac{t^{k+1}}{(k+1)!}\sum\limits_{i=0}^{\infty}\left(\frac{t}{k+1}\right)^i\\
    &= \sum\limits_{i=0}^{k}\frac{t^i}{i!} +\frac{t^{k+1}}{(k+1)!}\cdot\frac{1}{1-\frac{t}{k+1}}.
\end{align*}
 Hence, for $|\lambda|\leq N$, we have
 \begin{align*}
     \EE[\exp(\lambda X_{ij})]&\geq\sum\limits_{k=0}^{\infty}5^{-k}\cdot\frac{\lambda^{2k}}{(2k)!}\cdot \frac{(2k)!}{k! 8^k}\cdot \left[\prod_{j=0}^{k-1}(N+2j)\right]^{-1}\\
     &\geq \sum\limits_{k=0}^{N-1}\left(\frac{\lambda^2}{120N}\right)^k\cdot \frac{1}{k!}+\left(\frac{\lambda^2}{120N}\right)^N\cdot\frac{1}{N!}\cdot\frac{(3N)^N}{\prod_{j=0}^{N-1}(N+2j)}\\
     &\geq \sum\limits_{k=0}^{N-1}\left(\frac{\lambda^2}{120N}\right)^k\cdot \frac{1}{k!}+\left(\frac{\lambda^2}{120N}\right)^N\cdot\frac{1}{N!}\cdot\frac{1}{1-\frac{\lambda^2}{120N^2}} \\
     &\geq \exp\left(\frac{\lambda^2}{120N}\right).
 \end{align*}
 Then the proof of Theorem \ref{thm:6} is complete.
\section{Proof of concentration bounds}\label{sec:concentration}
Concentration bounds on compact Lie groups have been extensively studied. A comprehensive overview of this topic is provided in \cite{meckes2019random}. One of the most powerful tools is the log-Sobolev inequality (LSI), which holds on compact manifolds with positive Ricci curvature. Utilizing the LSI, we can derive concentration bounds for any Lipschitz continuous function on the unitary group. We will use the following lemma in our proofs. 
\begin{lemma}[\cite{meckes2019random}]
\label{lem:concentration}
    Let $f:\mathrm U(d)\to \RR$ be a function such that $|f(U)-f(V)|\leq L\|U-V\|_F$ for all $U,V\in \mathrm U(d)$. Then for every $t>0$, we have
    \begin{equation}
        \label{eq:concentration1}
        \PP(|f(U)-\EE[f(U)]|\geq t)\leq 2\exp\left(-\frac{dt^2}{12L^2}\right).
    \end{equation}
\end{lemma}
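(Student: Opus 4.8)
The plan is to deduce (\ref{eq:concentration1}) from a logarithmic Sobolev inequality (LSI) for Haar measure on $\mathrm U(d)$ combined with the Herbst argument; this is precisely the route taken in \cite{meckes2019random}, and I record the outline here.

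\textbf{Step 1 (reduction to the geodesic metric).} Equip $\mathrm U(d)$ with the bi-invariant Riemannian metric induced by the Hilbert--Schmidt inner product on $M_d(\CC)$, and let $d_g$ denote its geodesic distance. Since a geodesic from $U$ to $V$ has length at least the chord, $\|U-V\|_F \le d_g(U,V)$, so any $f$ with $|f(U)-f(V)|\le L\|U-V\|_F$ satisfies $|f(U)-f(V)|\le L\, d_g(U,V)$; in particular $\lvert\nabla f\rvert\le L$ almost everywhere. It therefore suffices to prove the bound for $d_g$-Lipschitz $f$.

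\textbf{Step 2 (LSI on $\mathrm U(d)$).} I would invoke the log-Sobolev inequality $\mathrm{Ent}_{\mu_H}(g^2)\le \tfrac{12}{d}\,\EE_{\mu_H}\lvert\nabla g\rvert^2$ for all smooth $g$. Morally this follows from the Bakry--\'Emery criterion: the bi-invariant metric on $\mathrm U(d)$ has Ricci curvature that is strictly positive and of order $d$ along the $\mathfrak{su}(d)$ directions but vanishes along the one-dimensional central (determinant) direction. Because of this flat direction the criterion does not apply to $\mathrm U(d)$ verbatim; one circumvents it by realizing $\mathrm U(d)$ as a finite quotient of $\mathrm{SU}(d)\times\mathbb T^1$, applying Bakry--\'Emery on $\mathrm{SU}(d)$ and the elementary Gaussian-type LSI on the circle, and tensorizing, which produces the (slightly lossy) constant $12/d$ above. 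I would simply take this bound from \cite{meckes2019random}.

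\textbf{Step 3 (Herbst argument).} Apply the LSI with $g=e^{\lambda f/2}$ for $\lambda>0$. Since $\lvert\nabla g\rvert^2=\tfrac{\lambda^2}{4}e^{\lambda f}\lvert\nabla f\rvert^2\le \tfrac{\lambda^2 L^2}{4}e^{\lambda f}$, we obtain $\mathrm{Ent}_{\mu_H}(e^{\lambda f})\le \tfrac{3L^2\lambda^2}{d}\,\EE[e^{\lambda f}]$. Writing $H(\lambda)=\EE[e^{\lambda f}]$ and using $\mathrm{Ent}(e^{\lambda f})=\lambda H'(\lambda)-H(\lambda)\log H(\lambda)$, this rearranges to $\bigl(\lambda^{-1}\log H(\lambda)\bigr)'\le \tfrac{3L^2}{d}$, with $\lambda^{-1}\log H(\lambda)\to\EE[f]$ as $\lambda\to 0^+$; integrating gives $\log\EE[e^{\lambda(f-\EE f)}]\le \tfrac{3L^2\lambda^2}{d}$ for all $\lambda>0$, and the same bound for $\lambda<0$ by applying it to $-f$. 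A Chernoff bound then yields $\PP(f-\EE f\ge t)\le \inf_{\lambda>0}e^{-\lambda t+3L^2\lambda^2/d}=\exp\bigl(-\tfrac{dt^2}{12L^2}\bigr)$, the infimum attained at $\lambda=\tfrac{dt}{6L^2}$; combining with the symmetric lower-tail estimate gives (\ref{eq:concentration1}).

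\textbf{Main obstacle.} The only substantive point is Step 2: because of the flat determinant direction, the LSI on $\mathrm U(d)$ is not an immediate corollary of Bakry--\'Emery, and getting the constant with the right $d$-scaling requires the product decomposition (or a direct heat-kernel estimate) together with some bookkeeping. Steps 1 and 3 are routine, and since the sharp form of the LSI is already available in \cite{meckes2019random}, I would use it as a black box rather than reprove it.
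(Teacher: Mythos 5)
Your proposal is correct and coincides with how the paper treats this lemma: the paper simply quotes it from \cite{meckes2019random}, and your outline (Euclidean-to-geodesic Lipschitz reduction, the log-Sobolev inequality with constant of order $1/d$ on $\mathrm U(d)$, then the Herbst/Chernoff argument with optimal $\lambda=dt/(6L^2)$) is exactly the proof given in that reference, with the constants matching. The one point you gloss over --- that the flat central direction is handled by writing $\mathrm U(d)$ via $\mathrm{SU}(d)$ times a \emph{short} circle (of circumference $O(1/d)$, which is what preserves the $1/d$ scaling under tensorization) --- is precisely the bookkeeping done in the cited source, which you, like the paper, are entitled to use as a black box.
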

One may attempt to show that $F^U(\bm\theta)$ is Lipschitz continuous. However, this is generally not true. The reason is that $\sqrt{\bm p^U(\bm\theta)}$ may not be differentiable at $\bm\theta$ where the vector $\bm p^U(\bm\theta)$ has zero entries, even if $\bm p^U(\bm\theta)$ is differentiable with respect to $\bm\theta$. As a result, $F^U(\bm\theta)$ exhibits discontinuity at certain parameter values. Nevertheless, as in the following lemma, we claim that when $U$ satisfies the condition the same as in Lemma \ref{thm: conditional expectation of Pk}, $F_{ij}^U(\bm\theta)$ is continuous with Lipschitz constant independent of the dimension $N$. 
\begin{lemma}
\label{lem:keyLip}
    Given a parameter $\bm\theta$ and a unit vector $\bm r$ that has no zero entry, for any $U,V\in \mathrm U(N)$ such that $U^*\psi_{\bm\theta}=V^*\psi_{\bm\theta}=\bm r$, we have 
    \[\|P_{S^U}(\bm z_{\bm\theta})-P_{S^V}(\bm z_{\bm\theta})\|_2\leq \frac{\pi}{2}\|U-V\|_F,\quad \|P_{S^U}(\bm z_{\bm\theta})-P_{S^V}(\bm z_{\bm\theta})\|_F\leq \frac{\sqrt 2\pi}{2}\|U-V\|_F.\]
\end{lemma}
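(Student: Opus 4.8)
The plan is to express the two left-hand quantities as the total variation of a curve of rank-$N$ projections traced out along a constraint-preserving geodesic from $U$ to $V$, and to bound its length. First I would reduce $P_{S^W}(\bm z_{\bm\theta})$ to a simpler object. Writing $D_k=\Phi(\operatorname{diag}(\bm e_k))$, $J=\Phi(iI_N)$, $\bm z_{\bm\theta}=\Phi(\psi_{\bm\theta})$ and using the homomorphism property of $\Phi$, one identifies $\bm w_k^W:=\Phi(W)D_k\Phi(W)^\top J\bm z_{\bm\theta}=\Phi(i\,r_k u_k)$, where $u_k$ is the $k$-th column of $W$ and $r_k:=[W^*\psi_{\bm\theta}]_k$. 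A short computation (as in the proof of Lemma~\ref{lem:CFIM}) gives $\langle\bm z_{\bm\theta},\bm w_k^W\rangle=\operatorname{Re}\langle\psi_{\bm\theta},i\,u_ku_k^*\psi_{\bm\theta}\rangle_{\CC}=0$ and $\langle\bm w_k^W,\bm w_l^W\rangle=\delta_{kl}|r_k|^2$, so $\bm z_{\bm\theta},\bm w_1^W,\dots,\bm w_N^W$ are mutually orthogonal and $P_{S^W}(\bm z_{\bm\theta})=P(\bm z_{\bm\theta})+P_{\mathcal{U}^W}$ with $\mathcal{U}^W:=\spanop\{\bm w_1^W,\dots,\bm w_N^W\}$. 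Since $P(\bm z_{\bm\theta})$ does not depend on $W$, it suffices to bound $\|P_{\mathcal{U}^U}-P_{\mathcal{U}^V}\|_2$ and $\|P_{\mathcal{U}^U}-P_{\mathcal{U}^V}\|_F$.

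Next I would connect $U$ and $V$ by a short path that preserves the constraint. Since $U^*\psi_{\bm\theta}=V^*\psi_{\bm\theta}=\bm r$, the vector $\bm r$ is fixed by the unitary $U^*V$; let $Z=\log(U^*V)$ be the principal logarithm, so $Z$ is skew-Hermitian with spectrum in $i(-\pi,\pi]$ and $Z\bm r=0$. Set $U(t)=Ue^{tZ}$, $t\in[0,1]$: then $U(0)=U$, $U(1)=U(U^*V)=V$, $U(t)^*\psi_{\bm\theta}=e^{-tZ}U^*\psi_{\bm\theta}=e^{-tZ}\bm r=\bm r$, and $\|\dot U(t)\|_F=\|Z\|_F$ for all $t$. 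Applying the scalar inequality $\theta^2\le\pi^2\sin^2(\theta/2)$ on $[-\pi,\pi]$ (equivalently $\sin s/s\ge 2/\pi$ on $(0,\pi/2]$) to the eigenvalues of $Z$ gives the chord-vs-arc estimate $\|Z\|_F\le\tfrac\pi2\|I-e^Z\|_F=\tfrac\pi2\|U-V\|_F$. Crucially, along this path $[U(t)^*\psi_{\bm\theta}]_k=r_k$ stays constant and nonzero; this is where the hypothesis that $\bm r$ has no zero entry enters, and it is essential, since it forces $\dim\mathcal{U}^{U(t)}=N$ for all $t$ so that $t\mapsto P_{\mathcal{U}^{U(t)}}$ is smooth — in general $P_{S^W}(\bm z_{\bm\theta})$ can drop rank and jump, which is exactly why $F^W(\bm\theta)$ fails to be Lipschitz.

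To estimate the speed of $t\mapsto P_{\mathcal{U}^{U(t)}}$, let $M(t)$ be the matrix with columns $\bm w_k^{U(t)}=\Phi(i\,r_k u_k(t))$; then $M(t)^\top M(t)=\operatorname{diag}(|r_1|^2,\dots,|r_N|^2)=:D^2$ is a fixed positive-definite diagonal, so $Q(t):=M(t)D^{-1}$ has orthonormal columns and $P_{\mathcal{U}^{U(t)}}=Q(t)Q(t)^\top$. Differentiating $Q^\top Q=I$ and $P=QQ^\top$ yields $\dot P_{\mathcal{U}^{U(t)}}=KQ^\top+QK^\top$ with $K:=(I-QQ^\top)\dot Q$; since $K^\top Q=0$, the two summands have mutually orthogonal column spaces and mutually orthogonal row spaces, so $\|\dot P_{\mathcal{U}^{U(t)}}\|_2=\|K\|_2\le\|\dot Q\|_2$ and $\|\dot P_{\mathcal{U}^{U(t)}}\|_F=\sqrt2\,\|K\|_F\le\sqrt2\,\|\dot Q\|_F$. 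The $k$-th column of $\dot Q(t)=\dot M(t)D^{-1}$ is $\Phi(i\,\tfrac{r_k}{|r_k|}\dot u_k(t))$, of norm $\|\dot u_k(t)\|_2$, so $\|\dot Q(t)\|_F^2=\sum_{k=1}^N\|\dot u_k(t)\|_2^2=\|\dot U(t)\|_F^2=\|Z\|_F^2$; hence $\|\dot P_{\mathcal{U}^{U(t)}}\|_2\le\|Z\|_F$ and $\|\dot P_{\mathcal{U}^{U(t)}}\|_F\le\sqrt2\,\|Z\|_F$. Integrating, for $\bullet\in\{2,F\}$, $\|P_{S^U}(\bm z_{\bm\theta})-P_{S^V}(\bm z_{\bm\theta})\|_\bullet=\bigl\|\int_0^1\dot P_{\mathcal{U}^{U(t)}}\,dt\bigr\|_\bullet\le\int_0^1\|\dot P_{\mathcal{U}^{U(t)}}\|_\bullet\,dt$, which is $\le\|Z\|_F\le\tfrac\pi2\|U-V\|_F$ for $\bullet=2$ and $\le\sqrt2\,\|Z\|_F\le\tfrac{\sqrt2\,\pi}{2}\|U-V\|_F$ for $\bullet=F$, as claimed.

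The step I expect to need the most care is choosing the path in the second paragraph: it must simultaneously be a geodesic, so that the chord-vs-arc bound yields the sharp constant $\pi/2$ rather than the cruder factor $2$ produced by a direct estimate of $\|P_{\mathcal{U}^U}-P_{\mathcal{U}^V}\|$, and it must stay inside $\{W:W^*\psi_{\bm\theta}=\bm r\}$, so that the projection curve remains smooth; the observation that $\bm r$ being fixed by $U^*V$ lets $U(t)=Ue^{tZ}$ accomplish both is the crux. Everything else — the identification in the first paragraph, the isometry properties of $\Phi$, and tracking the factor $\sqrt2$ from the orthogonal splitting $\dot P=KQ^\top+QK^\top$ — is routine but must be tracked carefully to land exactly on the stated constants.
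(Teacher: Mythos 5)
Your proof is correct and follows essentially the same route as the paper: establish a speed bound for the projection along a path lying inside the fiber $\{W\in\mathrm U(N):W^*\psi_{\bm\theta}=\bm r\}$ and integrate, with the factor $\pi/2$ coming from comparing geodesic and chordal distance on the unitary group and the factor $\sqrt2$ from the orthogonal splitting of the projection derivative. The differences are only in packaging: you use the explicit geodesic $Ue^{tZ}$ with $Z=\log(U^*V)$, $Z\bm r=0$, and the orthonormal frame $Q=MD^{-1}$ to compute $\dot P=KQ^\top+QK^\top$, whereas the paper parameterizes the fiber by $X\in\mathrm U(N-1)$ via $U=g(X)$, bounds the first-order variation of the projection through a block-matrix computation, and then discretizes the geodesic in $\mathrm U(N-1)$.
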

\begin{proof}
    Denote $\bm r=(r_1,\cdots,r_N)^\top$ then $\|\Phi(U) D_k\Phi(U)^\top J\bm{z}_{\bm\theta}\|_2^2=|r_k|^2\neq 0$. Let $\bm{w}_k=\Phi(iU\tilde D_k\bm r)$,then
\begin{equation}
    P(\Phi(U) D_k\Phi(U)^\top J\bm{z}_{\bm\theta})=P(\bm{w}_k)=|r_k|^{-2}\bm{w}_k\bm{w}_k^\top.
\end{equation}
 Let $\delta U,\delta\bm{w}_k$ be the small variation of $U,\bm{w}_k$ respectively. By their definitions, it is easy to check that $\delta\bm{w}_k=\Phi(ir_k\delta U\bm e_k)$. Moreover, the tangent space of $\mathrm U(N)$ at $U$ is $\{UY:Y^*=-Y\}$. Hence, $\delta U=U\delta Y+O(\delta^2)$ where $\delta Y=\delta\cdot Y$ is a skew-Hermitian matrix.  $\delta\bm{w}_k=\Phi(ir_kU\delta Y\bm e_k)+O(\delta^2)$, so we have 
\begin{equation}
\label{eq:Project error}
    P(\bm{w}_k+\delta\bm{w}_k)-P(\bm{w}_k)=\frac{1}{|r_k|^2}(\bm{w}_k\delta\bm{w}_k^\top+\delta\bm{w}_k\bm{w}_k^\top)+O(\delta^2).
\end{equation}

Denote $\tilde{\bm{w}}_k=\frac{\bm{w}_k}{|r_k|},\delta\tilde{\bm{w}}_k=\frac{\delta\bm{w}_k}{|r_k|}$. Then $\|\tilde{\bm{w}}_k\|_2=1,\|\delta\tilde{\bm{w}}_k\|_2=\delta\|Y\bm e_k\|_2+O(\delta^2),\delta\tilde{\bm{w}}_k^\top\tilde{\bm{w}}_k=O(\delta^2)$, 
\begin{equation}
    \frac{1}{|r_k|^2}(\bm{w}_k\delta\bm{w}_k^\top+\delta\bm{w}_k\bm{w}_k^\top)=(\tilde{\bm{w}}_k\delta\tilde{\bm{w}}_k^\top+\delta\tilde{\bm{w}}_k\tilde{\bm{w}}_k^\top)+O(\delta^2).
\end{equation}

Let $W=[\tilde{\bm{w}}_1,\cdots,\tilde{\bm{w}}_N, J\tilde{\bm{w}}_1,\cdots,J\tilde{\bm{w}}_N]$, $\delta W=[\delta\tilde{\bm{w}}_1,\cdots,\delta\tilde{\bm{w}}_N,0,\cdots,0]^\top$. Then 
\begin{equation}
\label{eq:error matrix form}
   \sum\limits_{k=1}^N(\tilde{\bm{w}}_k\delta\tilde{\bm{w}}_k^\top+\delta\tilde{\bm{w}}_k\tilde{\bm{w}}_k^\top)=W\delta W+(\delta W)^\top W^\top.
\end{equation}

 Since $\delta\bm{w}_k^\top\bm{w}_l=\Phi(ir_kU\delta Y\bm e_k)^\top\Phi(ir_lU\bm e_l)+O(\delta^2)$, and $\langle ir_kU\delta Y\bm e_k,ir_l U\bm e_l\rangle_{\CC}=r_k\bar r_l\delta Y_{lk}$, we have
 \begin{equation}
     \delta\tilde{\bm{w}}_k^\top\tilde{\bm{w}}_l=\delta\operatorname{Re}(\frac{r_k\bar r_l}{|r_kr_l|}Y_{lk})+O(\delta^2).
 \end{equation}

Similarly, we have 
\begin{equation}
    \delta\tilde{\bm{w}}_k^\top J\tilde{\bm{w}}_l=\delta\operatorname{Im}(\frac{r_k\bar r_l}{|r_kr_l|}Y_{lk})+O(\delta^2).
\end{equation}

Let $\tilde Y_{kl}=\frac{r_k\bar r_l}{|r_kr_l|}Y_{lk}$. Note that $\tilde Y$ is also skew-Hermitian, so $\operatorname{Re}(\tilde Y)$ is anti-symmetric and $\operatorname{Im}(\tilde Y)$ is symmetric. Using the above results for  $\delta\tilde{\bm{w}}_k^\top \tilde{\bm{w}}_l,\delta\tilde{\bm{w}}_k^\top J\tilde{\bm{w}}_l$, we can calculate each entry of the matrix $\delta W W+W^\top(\delta W)^\top$. We get
\begin{equation}
    \delta W W+W^\top(\delta W)^\top=\delta \begin{bmatrix}
    0&\operatorname{Im}(\tilde Y)\\
   \operatorname{Im}(\tilde Y)&0
\end{bmatrix}+O(\delta^2),
\end{equation}
Note that due to the anti-symmetry of $\operatorname{Re}(\tilde Y)$, the top-left block is zero.
\begin{align}
\label{ineq:norm bound}
   \|\delta WW+W^\top(\delta W)^\top\|_2&\leq\delta\|\operatorname{Im}(\tilde Y)\|_2+O(\delta^2)\leq \delta\|\tilde Y\|_2+O(\delta^2).
\end{align}

Now we can bound $\|P_{S^{U+\delta U}}-P_{S^U}(\bm{z}_{\bm\theta})\|_2$ as follows:
\begin{align}
\label{eq:final key norm bound}
    \|P_{S^{U+\delta U}}(\bm{z}_{\bm\theta})-P_{S^U}(\bm{z}_{\bm\theta})\|_2&=\|\sum\limits_{k=1}^{N}(P(\bm{w}_k+\delta\bm{w}_k)-P(\bm{w}_k))\|_2\notag\\
    &=\| W\delta W+(\delta W)^\top W^\top\|_2+O(\delta^2)\notag \\
    &=\|\delta W W+W^\top(\delta W)^\top\|_2+O(\delta^2)\notag\\
    &\leq \delta\|Y\|_2+O(\delta^2).
\end{align}
The first equality follows from (\ref{eq:Project error})$\sim$(\ref{eq:error matrix form}). The second equality follows from the fact that $W$ is an orthogonal matrix in $\RR^{2N\times 2N}$. The last inequality follows from (\ref{ineq:norm bound}) and $\|Y\|_2=\|\tilde Y\|_2$.

Note that $\|\delta U\|_2=\|U\delta Y\|_2+O(\delta^2)=\delta\|Y\|_2+O(\delta^2)$, so we have $\|P_{S^{U+\delta U}}(\bm{z}_{\bm\theta})-P_{S^U}(\bm{z}_{\bm\theta})\|_2\leq\|\delta U\|_2+O(\delta^2)$. Fix $U_0,U_1\in \mathrm U(N)$ such that $U_0^*\psi_{\bm\theta}=\bm e_1,U_1^*\bm r=\bm e_1$, then for every $U$ such that  $U^*\psi_{\bm\theta}=\bm r$, there exists a unique $X\in \mathrm U(N-1)$ such that $U=U_0\begin{bmatrix}
    1&0\\
    0& X
\end{bmatrix}U_1^*$, and vice versa. Denote this map as $U=g(X)$, then it is easy to see that $\|g(X_1)-g(X_2)\|_2=\|X_1-X_2\|_2$ and $\|g(X_1)-g(X_2)\|_F=\|X_1-X_2\|_F$. Let $U=g(X)$ and $V=g(Z)$.
For every $X,Z$ on the compact manifold $\mathrm U(N-1)$, the geodesic distance $d(X,Z)$ between $X,Z$ is no more than $\frac{\pi}{2}\|X-Z\|_F$. Take $M+1$ matrices on the geodesic between $X,Z$, namely $X_0,X_1,\cdots,X_{M}$ such that $X_0=X,X_{M}=Z$ and $\|X_{i+1}-X_i\|_F=O(M^{-1})$, then we have  
\[\|P_{S^U}(\bm z_{\bm\theta})-P_{S^V}(\bm z_{\bm\theta})\|_2\leq\sum\limits_{k=0}^{M-1}\|P_{S^{g(X_k)}}(\bm z_{\bm\theta})-P_{S^{g(X_{k+1})}}(\bm z_{\bm\theta})\|_2\leq \sum\limits_{k=0}^{M-1}\|X_{k+1}-X_{k}\|_F+M\cdot O(M^{-2}).\]
Let $M\to\infty$, we have $\|P_{S^U}(\bm z_{\bm\theta})-P_{S^V}(\bm z_{\bm\theta})\|_2\leq d(X,Z)\leq \frac{\pi}{2} \|X-Z\|_F=\frac{\pi}{2}\|U-V\|_F$.

In a similar way to (\ref{eq:final key norm bound}), we have $\|P_{S^U}(\bm z_{\bm\theta})-P_{S^{U+\delta U}}(\bm z_{\bm\theta})\|_F\leq \sqrt{2}\delta \|Y\|_F+O(\delta^2)$. By the same geodesic argument, we know that $\|P_{S^U}(\bm z_{\bm\theta})-P_{S^V}(\bm z_{\bm\theta})\|_F\leq \sqrt{2}d(X,Z)\leq\frac{\sqrt{2}\pi}{2}\|U-V\|_F$. 
\end{proof}
\begin{remark}
    By the invariance of the Haar distribution of $U=g(X)$, it can be checked that the distribution of $X$ also shares the same invariance. Therefore, 
the conditional distribution on $U^*\psi_{\bm\theta}=\bm r$ is actually a Haar distribution on $\mathrm U(N-1)$.
\end{remark}
Now we can prove \thmref{3}, \ref{4} using Lemma \ref{lem:concentration}, \ref{lem:keyLip} as follows:
\begin{proof}[Proof of \thmref{3}]
    Consider the conditional probability and expectation on $U^*\psi_{\bm\theta}=\bm r$ where $\bm r$ has no zero entry.  
    Denote $\bm v_i,\bm v_j$ as the $i,j$-th column of $A(\bm\theta)$ respectively. Then $Q_{ij}(\bm\theta)=\bm v_i^\top\bm v_j$,  
\begin{align}
\label{eq:Lip evaluation}
    |F^U_{ij}(\bm\theta)-F^V_{ij}(\bm\theta)|/\|\EE_{U\sim\mu_H}[F^U(\bm\theta)]\|_{\max}&=2|\bm v_i^\top(P_{S^V}(\bm{z}_{\bm\theta})-P_{S^U}(\bm{z}_{\bm\theta}))\bm v_j|/\|Q(\bm\theta)\|_{\max}\notag\\
    &\leq 2\|P_{S^V}(\bm{z}_{\bm\theta})-P_{S^U}(\bm{z}_{\bm\theta})\|_2\|\bm v_i\|_2\|\bm v_j\|_2/\|Q(\bm\theta)\|_{\max}\notag\\
    &\leq 2\|P_{S^V}(\bm{z}_{\bm\theta})-P_{S^U}(\bm{z}_{\bm\theta})\|_2,
\end{align}
where the first equality derives from \thmref{1}, the last inequality follows from the fact that $\|Q(\bm\theta)\|_{\max}=\max\limits_{1\leq i\leq m}\|\bm v_i\|_2^2$. Combining (\ref{eq:Lip evaluation}) and Lemma \ref{lem:keyLip}, 
\[f(g(X))=f(U)\triangleq (F_{ij}^U(\bm\theta)-\frac{1}{2}Q_{ij}(\bm\theta))/\|\EE_{U\sim\mu_H}[F^U(\bm\theta)]\|_{\max}\] is Lipschitz continuous with respect to both $U,X$ with constant $\pi$. 

By Lemma \ref{thm: conditional expectation of Pk}, we have $\EE_{U\sim\mu_H}[f(U)|U^*\psi_{\bm\theta}=\bm r]=0$. 
Applying Lemma \ref{lem:concentration} to function $f(g(X))$, where $X$ follows the Haar distribution on $\mathrm U(N-1)$, we have 
\begin{equation}
\label{eq:i,j}
  \PP\left(|F^U_{ij}(\bm\theta)-\frac{1}{2}Q_{ij}(\bm\theta)|\geq \frac{t}{2}\|Q(\bm\theta)\|_{\max}|U^*\psi_{\bm\theta}=\bm r\right)\leq 2\exp\left(-\frac{(N-1)t^2}{12\pi^2}\right).  
\end{equation}
Since $\PP(\max\limits_{1\leq i\leq l}|A_i|\geq t)\leq \sum\limits_{1\leq i\leq l}\PP(|A_i|\geq t)$ and $\|F^U_{ij}(\bm\theta)-\frac{1}{2}Q(\bm\theta)\|_{\max}=\max\limits_{1\leq i,j\leq m}|F^U_{ij}(\bm\theta)-\frac{1}{2}Q_{ij}(\bm\theta)|$, summing up the probability for all $i,j$ in (\ref{eq:i,j}), we have 
\begin{equation}
    \PP\left(\|F^U(\bm\theta)-\frac{1}{2}Q(\bm\theta)\|_{\max}\geq\frac{t}{2} \|Q(\bm\theta)\|_{\max}|U^*\psi_{\bm\theta}=\bm r\right)\leq 2m^2\exp\left(-\frac{(N-1)t^2}{12\pi^2}\right).
\end{equation}
Then the proof is complete by noticing that $\pi^2\leq 10$.
\end{proof}
\begin{remark}
We can also use results in \thmref{thm:6} to derive the bound.
By the constant derived in the proof of \thmref{thm:6}, tail bound $2m^2\exp\left(-\frac{(N-1)t^2}{12\pi^2}\right)$ can be improved to $2m^2\exp\left(-\frac{Nt^2}{4}\right)$.    
\end{remark}

\begin{proof}[Proof of \thmref{4}]
    Still consider the conditional probability and expectation on $U^*\psi_{\bm\theta}=\bm r$ where $\bm r$ has no zero entry.  Let $f(U)=(F^U_{ij}(\bm\theta)-\frac{1}{2}Q_{ij}(\bm\theta))/\sqrt{Q_{ii}(\bm\theta)Q_{jj}(\bm\theta)}$. Then
    \[|f(U)-f(V)|\leq\|P_{S^V}(\bm{z}_{\bm\theta})-P_{S^U}(\bm{z}_{\bm\theta})\|_2\leq \frac{\pi}{2}\|U-V\|_F\]
    by a similar argument as (\ref{eq:Lip evaluation}).  Applying Lemma \ref{lem:concentration} to $f$, we have
    \begin{equation}
\label{eq2:i,j}
  \PP\left(|f(U)|\geq t|U^*\psi_{\bm\theta}=\bm r\right)\leq 2\exp\left(-\frac{(N-1)t^2}{3\pi^2}\right).  
\end{equation}
From the proof of \thmref{thm:6}, the second moment of $f(U)$ can be bounded as
\[\EE_{U\sim\mu_H}[f(U)^2|U^*\psi_{\bm\theta}=\bm r]\leq\frac{1}{2N}.\]
That is $\EE_{U\sim\mu_H}[|F^U_{ij}(\bm\theta)-\frac{1}{2}Q_{ij}(\bm\theta)|^2|U^*\psi_{\bm\theta}=\bm r]\leq \frac{1}{2N}Q_{ii}(\bm\theta)Q_{jj}(\bm\theta)$. As a consequence, by $\EE[|X|]\leq\EE[X^2]^{1/2}$, we have 
\begin{align}
\label{E[FF]}
    \EE_{U\sim\mu_H}[\|F^U(\bm\theta)-\frac{1}{2}Q(\bm\theta)\|_F|U^*\psi_{\bm\theta}=\bm r]&\leq \sqrt{\sum\limits_{i,j}\EE_{U\sim\mu_H}[|F^U_{ij}(\bm\theta)-\frac{1}{2}Q_{ij}(\bm\theta)|^2|U^*\psi_{\bm\theta}=\bm r]}\notag\\
    &=\sqrt{\frac{1}{2N}\sum\limits_{i,j}Q_{ii}(\bm\theta)Q_{jj}(\bm\theta)}\notag\\
    &=\frac{1}{\sqrt{2N}}\operatorname{tr}(Q(\bm\theta))\notag\\
    &\leq \sqrt{\frac{m}{2N}}\|Q(\bm\theta)\|_F.
\end{align}
The last inequality follows from the fact that $Q(\bm\theta)\succeq 0$. Now let $h(U)=\|F^U(\bm\theta)-\frac{1}{2}Q(\bm\theta)\|_F/\|\EE_{U\sim\mu_H}[F^U(\bm\theta)]\|_F$. Since $\|A^\top BA\|_F\leq \|A^\top A\|_F\|B\|_2$ for symmetric $B$, $F^U(\bm\theta)-F^V(\bm\theta)=A(\bm\theta)^\top(P_{S^U}(\bm{z}_{\bm\theta})-P_{S^V}(\bm{z}_{\bm\theta}))A(\bm\theta)$ and $Q(\bm\theta)=A(\bm\theta)^\top A(\bm\theta)$,
we can bound $|h(U)-h(V)|$ using Lemma \ref{lem:keyLip} as 
\[|h(U)-h(V)|\leq 2\|F^U(\bm\theta)-F^V(\bm\theta)\|_F/\|Q(\bm\theta)\|_F\leq 2\|P_{S^V}(\bm{z}_{\bm\theta})-P_{S^U}(\bm{z}_{\bm\theta})\|_2\leq \pi\|U-V\|_F.\]
(\ref{E[FF]}) indicates that $\EE_{U\sim\mu_H}[h(U)|U^*\psi_{\bm\theta}=\bm r]\leq \sqrt{\frac{2m}{N}}$. Applying Lemma \ref{lem:concentration} to $h$ we have 
    \begin{equation}
\label{Prob F}
  \PP\left(\frac{\|F^U(\bm\theta)-\frac{1}{2}Q(\bm\theta)\|_F}{\|\EE_{U\sim\mu_H}[F^U(\bm\theta)]\|_F}-\sqrt{\frac{2m}{N}}\geq t|U^*\psi_{\bm\theta}=\bm r\right)\leq \exp\left(-\frac{(N-1)t^2}{12\pi^2}\right).  
\end{equation}
Then the proof is complete by Law of Total Probability and $\pi^2\leq 10$.
\end{proof}
\begin{remark}
\label{remark:thm4}
  In fact, $|h(U)-h(V)|\leq 2\|F^U(\bm\theta)-F^V(\bm\theta)\|_F/\|Q(\bm\theta)\|_2$, so the inequality \eqref{Prob F} can be improved by replacing $\|\EE_{U\sim\mu_H}[F^U(\bm\theta)]\|_F$ by $\|\EE_{U\sim\mu_H}[F^U(\bm\theta)]\|_2$. Hence, tail bound in \thmref{4} can be improved from $\exp(-cNt^2)$ to $\exp(-cNr^2t^2)$ where $r=\frac{\|Q(\bm\theta)\|_F}{\|Q(\bm\theta)\|_2}\geq 1$. 
 To evaluate $E[\|F^U(\bm\theta)\|_F]$, we use $E[\|F^U(\bm\theta)\|^2_F]^{1/2}$ as an upper-bound. By \thmref{thm:6}, $E[\|F^U(\bm\theta)\|^2_F]^{1/2}=\Theta(1/\sqrt N)\operatorname{tr(Q(\bm\theta))}$. \hl{Since its entries are all sub-Gaussian, second moment could correctly match the order of first moment, rather than being artificially inflated by rare, extreme outliers. We also bound $\operatorname{tr}(Q(\bm \theta))$ by $\sqrt{m}\|Q(\bm\theta)\|_F$, the factor $\sqrt m$ cannot be improved without additional structural assumptions.} 
\end{remark}

Finally, let us prove \thmref{5}. Our proof borrows from the proof of Dvoretzky's theorem in \cite{meckes2019random}, which uses the covering number strategy (Dudley's entropy) to bound the spectral norm.
\begin{proof}[Proof of \thmref{5}]
     In the proof, we always condition on $U^*\psi_{\bm\theta}=\bm r$ where $\bm r$ has no zero entry. For simplicity, in this proof, the notations $\PP,\EE$ mean the conditional probability and expectation.  
     
     Let $E=\{A(\bm\theta)\bm{y}:\bm{y}\in\RR^{m}\}$ be the image space of $A(\bm\theta)$. Denote 
    \begin{equation}
        \label{aux}
        \mathcal{R}(\bm{z},X)=\bm{z}^\top(\frac{1}{2}I_{2N}-P_{S^U}(\bm{z}_{\bm\theta}))\bm{z},\quad U=g(X),\quad \bm{z}\in E\cap\mathbb{S}^{2N-1}.
    \end{equation}
    Under the imposed condition $U^*\psi_{\bm\theta}=\bm r$, $X$ follows the Haar distribution on $\mathrm U(N-1)$. Denote $\mathcal{R}(X)=\sup\limits_{\bm{z}\in E\cap\mathbb{S}^{2N-1}}|\mathcal{R}(\bm{z},X)|$, then 
    \begin{equation}
        \label{eq:upperbounf:R(X)}
         \mathcal{R}(X)=\sup\limits_{A(\bm\theta)\bm{y}\neq \bm 0}\frac{|\bm{y}^\top(F^U(\bm\theta)-\frac{1}{2}Q(\bm\theta))\bm{y}|}{|\bm{y}^\top Q(\bm\theta)\bm{y}|}\geq \frac{\|(F^U(\bm\theta)-\frac{1}{2}Q(\bm\theta))\|_2}{\|Q(\bm\theta)\|_2}.
    \end{equation}
    From the proof of Lemma \ref{lem:keyLip}, we know that $|\mathcal{R}(\bm{z},X)-\mathcal{R}(\bm{z},Y)|\leq\frac{\pi}{2}\|X-Y\|_F$ for every $X,Y\in \mathrm U(N-1)$. This implies $|\mathcal{R}(X)-\mathcal{R}(Y)|\leq\frac{\pi}{2}\|X-Y\|_F$. By Lemma \ref{lem:concentration}, we have
    \begin{equation}
        \label{eq:concentration for R(X)}
        \PP(\mathcal{R}(X)-\EE[\mathcal{R}(X)]\geq t)\leq \exp\left(-\frac{(N-1)t^2}{30}\right).
    \end{equation}
 For any $\bm{z}_1,\bm{z}_2\in E\cap \mathbb{S}^{2N-1}$, denote $D(\bm{z}_1,\bm{z}_2,X)=(\bm{z}_1+\bm{z}_2)^\top P_{S^U}(\bm{z}_{\bm\theta})(\bm{z}_1-\bm{z}_2)$. Then we have 
    \[\mathcal{R}(\bm{z}_1,X)-\mathcal{R}(\bm{z}_2,X)=\bm{z}_1^\top P_{S^U}(\bm{z}_{\bm\theta})\bm{z}_1-\bm{z}_2^\top P_{S^U}(\bm{z}_{\bm\theta})\bm{z}_2=(\bm{z}_1+\bm{z}_2)^\top P_{S^U}(\bm{z}_{\bm\theta})(\bm{z}_1-\bm{z}_2),\]
    so $D(\bm{z}_1,\bm{z}_2,X)=\mathcal{R}(\bm{z}_1,X)-\mathcal{R}(\bm{z}_2,X)$ and then 
    \[|D(\bm{z}_1,\bm{z}_2,X)-D(\bm{z}_1,\bm{z}_2,Y)|\leq2\|P_{S^U}(\bm{z}_{\bm\theta})-P_{S^V}(\bm{z}_{\bm\theta})\|_2\|\bm{z}_1-\bm{z}_2\|_2\leq \pi\|X-Y\|_F\|\bm{z}_1-\bm{z}_2\|_2.\]
    Note that $\EE[\mathcal{R}(\bm{z},X)]=0$, so $\EE[D(\bm{z}_1,\bm{z}_2,X)]=0$. As a result of Lemma \ref{lem:concentration},
    \begin{equation}
    \label{sub-gau-incre}
        \PP(|\mathcal{R}(\bm{z}_1,X)-\mathcal{R}(\bm{z}_2,X)|\geq t)\leq 2\exp\left (-\frac{(N-1)t^2}{120\|\bm{z}_1-\bm{z}_2\|_2^2}\right).
    \end{equation}
    For any $X\in \mathrm U(N-1)$, view $\mathcal{R}(\bm{z},X)$ as a mean-zero stochastic process over $\bm{z}\in E\cap\mathbb{S}^{2N-1}$. Then (\ref{sub-gau-incre}) suggests that $\mathcal{R}(\bm{z},X)$ has sub-Gaussian increment with respect to the metric $d(\bm{z}_1,\bm{z}_2)=\|\bm{z}_1-\bm{z}_2\|_2\cdot\sqrt{\frac{60}{N-1}}$. By Dudley's entropy bound, we have
    \begin{align}
                \label{Dudley}
        \EE[\mathcal{R}(X)]&\leq 16\int_0^{+\infty}\sqrt{\log\left(\mathcal{N}\left(E\cap\mathbb{S}^{2N-1},d,\varepsilon\right)\right)}d\varepsilon\notag\\
        &\leq \frac{125}{\sqrt{N-1}}\int_0^{\operatorname{diam}(E\cap\mathbb{S}^{2N-1})}\sqrt{\log\left(\mathcal{N}\left(E\cap\mathbb{S}^{2N-1},\|\cdot\|_2,\varepsilon\right)\right)}d\varepsilon\notag\\
        &\leq\frac{125}{\sqrt{N-1}}\int_0^2 \sqrt{m\log\frac{3}{\varepsilon}}d\varepsilon\leq 285\sqrt{\frac{m}{N-1}},
    \end{align}
    where the covering number $\mathcal{N}\left(E\cap\mathbb{S}^{2N-1},\|\cdot\|_2,\varepsilon\right)$ is the number of $\varepsilon$ balls needed to cover $E\cap\mathbb{S}^{2N-1}$. Combining (\ref{eq:upperbounf:R(X)}),(\ref{eq:concentration for R(X)}),(\ref{Dudley}), we have
        \begin{equation}
        \label{eq: prob R(X)}
        \PP\left(\mathcal{R}(X)\geq t+285\sqrt{\frac{m}{{N-1}}}\right)\leq \exp\left(-\frac{(N-1)t^2}{30}\right).
    \end{equation}
    For $U=g(X)$, if $\mathcal{R}(X)\leq\varepsilon$, then we have for all $\bm{z}\in E\cap\mathbb S^{2N-1}$,  $-\varepsilon\leq \mathcal{R}(\bm{z},X)\leq \varepsilon$, which means that $\bm{z}^\top(I_{2N}-P_{S^U}(\bm{z}_{\bm\theta}))\bm{z}\in[\frac{1}{2}-\varepsilon,\frac{1}{2}+\varepsilon]$. As a result,
    \[(\frac{1}{2}-\varepsilon)\bm{y}^\top A(\bm\theta)^\top A(\bm\theta)\bm{y}\leq \bm{y}^\top A(\bm\theta)^\top (I_{2N}-P_{S^U}(\bm{z}_{\bm\theta}))A(\bm\theta)\bm{y}\leq (\frac{1}{2}+\varepsilon)\bm{y}^\top A(\bm\theta)^\top A(\bm\theta)\bm{y}, \]
    for all $\bm{y}\in\RR^{m}$, and this is equivalent to $(\frac{1}{2}-\varepsilon)Q(\bm\theta)\preceq F^U(\bm\theta)\preceq(\frac{1}{2}+\varepsilon)Q(\bm\theta)$. Then \thmref{5} is obtained by taking $t=\varepsilon-285\sqrt{\frac{m}{N-1}}$ in (\ref{eq: prob R(X)}).
\end{proof}
\begin{remark}
\label{remark:thm5}
   The problem can be framed as using a random projection operator $r(U) = I_{2N} - P_{S^U}(z_{\bm\theta})$ to approximate the identity operator within a fixed subspace $E$ of dimension $m$. In the case where $r(U) = U$, this concentration is a direct consequence of Dvoretzky’s Theorem. Following the framework in \cite{meckes2019random}, this concentration principle extends to operators that are Lipschitz continuous on the unitary group $\mathrm{U}(N)$ with respect to the Hilbert-Schmidt (Frobenius) norm. The resulting error scaling of $\sqrt{m/N}$ is a fundamental characteristic in random matrix theory, reflecting the uniform isometry property (or restricted isometry) when projecting high-dimensional measures onto a $m$-dimensional subspace, much like the scaling observed in the Johnson–Lindenstrauss Lemma. We explain this as follows:
\end{remark}
\begin{theorem}[\cite{Vershynin_2018}, Johnson–Lindenstrauss Lemma]
    Let $V \subseteq \mathbb{R}^d$ be a set of $n$ points. For any error parameter $0 < \epsilon < 1$, there exists a linear mapping $f: \mathbb{R}^d \to \mathbb{R}^k$ such that for all $u, v \in V$, the following inequality holds:$$(1 - \epsilon) \|u - v\|^2 \le \|f(u) - f(v)\|^2 \le (1 + \epsilon) \|u - v\|^2.$$ To satisfy this distance-preserving property, $k$ only needs to satisfy:$$k \ge O\left( \frac{\log n}{\epsilon^2} \right).$$
\end{theorem}
It was proven by using random projection and estimating the expected error $\varepsilon$. 
In our case, $V$ is in the unit sphere in the subspace $E$ and $f(\bm u)=2(I-P_{S^U}(z_{\bm\theta}))\bm u, k=2N$. Then \[\varepsilon\approx\sup_{\bm u,\bm v\in E\cap \mathbb S^{m-1}} \left|\frac{\|f(\bm u)-f(\bm v)\|_2}{\|\bm u-\bm v\|_2}-1\right|\leq\sup_{\bm u\in E\cap \mathbb S^{m-1}}\|(I-2P_{S^U}(z_{\bm\theta}))\bm u\|_2\ .\] 
The right term is the exact quantity we need to estimate. To control the norm uniformly in $E\cap \mathbb S^{m-1}$, usually $n=|V|\approx \varepsilon^{-m}$, so $\varepsilon=\Omega\left(\sqrt{\frac{\log n}{k}}\right)=\tilde \Omega\left(\sqrt{\frac{m}{N}}\right)$.

\section{Conclusion}\label{sec:conclusion}
This work shows an interesting relationship between the classical Fisher information matrix (CFIM) under random measurements and the quantum Fisher information matrix (QFIM). By studying real representations of these two kinds of information matrices, we find an elegant way to illustrate the transformation of the random CFIM between different bases. Thereafter, we rigorously derive the variance and estimate moments of the random CFIM by exploiting the symmetry of the Haar distribution on the unitary group. Moreover, we provide  matrix concentration analysis for the CFIM based on a well-developed technique that proves concentration on unitary groups. The key step is to identify the Lipschitz continuity of the CFIM with respect to its measurement basis. The lower bound of exponential moment and numerical experiments demonstrate that the upper bound in our derived concentration inequality is probably optimal in the exponent up to a constant.

As possible future directions, as this work only considers pure quantum states, it is natural to investigate whether there is a similar relationship for mixed quantum states (see e.g., Theorem 2.2 in \cite{Liu_2020} for definition of QFIM for mixed states). Another direction is to consider practical (easy to implement) unitary ensemble $\nu\subseteq \mathrm U(N)$ on a quantum computer that the average CFIM over $\nu$ serves as a good estimator for the QFIM. As Lemma \ref{thm: conditional expectation of Pk} suggests, when imposing certain conditions on the unitary $U$, the expectation becomes a $2$-moment of an operator $X\sim\mu_H$ on $\mathrm U(N-1)$ ($U=g(X)$, see the definition of $g$ in Lemma \ref{lem:keyLip}). Hence, we could possibly find a well-behaved $\nu$ based on unitary $k$-designs. {This has been substantiated by recent breakthrough in} \cite{zhou2025randomizedmeasurementsmultiparameterquantum}, {which proved that using unitary 3-designs, the average CFIM satisfies $\EE[F(\bm\theta)]\succeq\frac{1}{4}Q(\bm\theta)$. Concentration results for such measurement ensembles might be interesting. We leave this for future work.}  

\bibliographystyle{amsxport}
\bibliography{qfim}

\end{document}